\documentclass{article}
\usepackage{graphicx} % Required for inserting images
\usepackage{amsfonts}
\usepackage{amsmath}
\usepackage{amsthm}
\usepackage{mathrsfs}
\usepackage{color}
\usepackage{subcaption}
\usepackage{float}
\usepackage{bbm}
\newtheorem{definition}{Definition}
\newtheorem{theorem}{Theorem}
\newtheorem{lemma}{Lemma}
\newtheorem{corollary}{Corollary}
\newtheorem{proposition}{Proposition}
\newtheorem{example}{Example}
\newtheorem{remark}{Remark}

%%%%%%% Our macros %%%%%%%%

%%%%%%%%%%%%%%%%%%%%%%%%%%%% My defs and macros %%%%%%%%%%%%%%%%%%%%%%%%%%%%
%\usepackage{cite}
\usepackage{amsmath}
\usepackage{amssymb}
\usepackage{mathtools}
\usepackage{amsthm}
\usepackage{xcolor}
\usepackage{tcolorbox}
\usepackage{booktabs}
\usepackage[bbgreekl]{mathbbol}
\usepackage{bm}
\newif\ifdraft
\draftfalse
\def\ale#1{\ifdraft\textcolor{magenta}{#1}\else{#1}\fi}
\DeclareSymbolFontAlphabet{\mathbbl}{bbold}
\DeclareMathOperator*{\argmin}{arg\,min}

\ifdraft
\usepackage{lipsum}
\usepackage[notref,notcite,color]{showkeys}
\usepackage[cam,width=10in,height=12in,center]{crop}
\definecolor{refkey}{rgb}{0,.6,0}
\definecolor{labelkey}{cmyk}{0, 1, 0, 0}
\tcbset{colframe=white,colback=magenta!20}
\fi

\def\eps{\varepsilon}

\def\neuron{\xi}
\def\pneuron{p}
\def\weight{{\bm{w}}}

\def\pweight{{\bm{p}}}

%definitions for optimal control appendix in
 % the name of the state
 % the name of the co-state
 % the name of the control
 % Another control

 % the name of the generic state 
 % the name of the generic co-state
 % the name of the generic control

 % dimension of the state
 % Number of control parameters
%dimension of the output space

%\def\rank{\mathop{\rm rank}}\documentclass[11pt, oneside]{article}   	% use "amsart" instead of "article" for AMSLaTeX format

\def\bbR{\mathbb{R}}

\def\L{{\rm L}}

%%%%%%%%%

% Todonotes is useful during development; simply uncomment the next line
%    and comment out the line below the next line to turn off comments
%\usepackage[disable,textsize=tiny]{todonotes}
\usepackage[textsize=tiny]{todonotes}

\title{\bf An Introduction to Cognidynamics\footnote{This paper is related to the invited talk I gave at the Third Conference on Lifelong Learning Agents (CoLLAs 2024) on the 29th of July 2024.}}
\author{Marco Gori \\
University of Siena}
\date{August 2024}

\begin{document}
\maketitle

\begin{abstract}
    This paper gives an introduction to \textit{Cognidynamics}, that is to the dynamics of cognitive systems driven by optimal objectives imposed over time  
    when they interact either with a defined virtual or with a real-world environment.  
    The proposed theory is developed in the general framework of dynamic programming which leads to think of computational laws dictated by classic Hamiltonian equations. 
    Those equations lead to the formulation of a neural propagation scheme in cognitive agents modeled by dynamic neural networks which exhibits locality in both space and time, thus contributing the longstanding debate on biological plausibility of learning algorithms like Backpropagation.
    We interpret the learning process in terms of energy exchange with the environment and show the crucial role of energy dissipation and its links with focus of attention mechanisms and conscious behavior. 
\end{abstract}

\section{Introduction}
% emphasis on  natural laws of learning
The introduction of focus of attention in the Transformer architecture~\cite{NIPS2017_3f5ee243} can likely be regarded as a paradigm-shift in Machine Learning. Interestingly, transformer-based architectures mostly reported superior results compared to  recurrent neural networks, whose architecture may potentially be more adequate for sequential tasks. However, one should bear in mind that the limitation of capturing long-term dependencies by gradient-based learning algorithms were early pointed out about three decades ago~\cite{Bengio_trnn93,
pascanu2013difficulty, hochreiter2001gradient}. A remarkable ingredient to face the problems of gradient vanishing is that of adopting the gating mechanisms proposed in the LSTM architecture (see e.g.~\cite{Hochreiter:97nips} for an early evidence of the effectiveness of the proposal). LSTM architectures have been in fact the reference architecture for challenging experiments in the last decades and, especially, in conjunction with the explosion of Deep Learning. 

This paper proposes a reformulation of learning which relies on full human-based protocols, where machines are expected to conquer cognitive skills from environmental interactions over time. We assume that, at each time instant, data acquired from the environment is processed with the purpose of updating the current internal representation of the environment, and that the agent is not given the privilege of recording the temporal stream. Basically, we assume to process on-line information just like animals and humans without neither store nor access to internal data collections. We formulate learning in the continuous setting of computation, which better reflects most natural processes\footnote{However, it is worth mentioning, that a related formulation can be proposed in the discrete setting of computation without remarkable differences.}. This setting differs remarkably from the usual assumptions of learning on the basis of collections of separated sequences; basically, we do not rely on sequence segmentation, but  process one single sequence which corresponds with the agent life. This corresponds with promoting the role of time, which ordinarily indexes the environmental information. 
Learning is regarded as an optimization problem of a functional risk that arises from the environmental information over the agent life. 
We use the mathematical apparatus of Dynamic Programming and Optimal Control to approach the problem. We show that the distinctive spirit of learning springs out from the need to solve the Hamilton equations by Cauchy initialization, just like in most related problems of Theoretical Physics. Interestingly, the explicit dependence on time of the Hamiltonian, which reflects the interaction with the environment, leads to the problem of devising approximations of the optimal solution that can only be given by imposing boundary conditions\footnote{It is important to bear in mind that one can use boundary conditions only if the overall data collection is recorded, which is exactly what we are excluding in our theory.}. 
This paper shows that minimization of the functional risk with given boundary conditions can be approximated by using Cauchy conditions, which opens the doors to truly on-line computational schemes.
We analyze the system dynamics behind learning in terms of energy exchange. 
The main result is that learning can only take place if we properly introduce appropriate focus of attention mechanisms, that turn out to play a crucial role in order to control the energy accumulation in the agent. The theory suggests that any dynamical model  in the proposed optimization setting must involve additional learning parameters whose purpose is that of stabilizing Hamiltonian neural propagation and expose them gradually to the environment. This is a fundamental  difference with respect to classic machine learning schemes, since this also corresponds with evidence  from developmental psychology~\cite{Piaget1958,Karplus1980}. 
The term \textit{Cognidynamics} that is introduced in this paper refers to a theory for describing the neural propagation scheme emerging from the described optimization process that is required to take place on-line by fully matching the spirit of the classic citation by Danish theologian, philosopher, and poet  S{\o}ren Kierkegaard: 
``Life can only be understood backwards; but it must be lived forwards.'' 
Interestingly, computers can also somewhat ``live backwards'' due to the accumulation and interaction with huge data collections, which is in fact one of the reasons of the recent spectacular results obtained by Large Language Models. However, a theory of learning for interpreting live - or virtual live - going forward without collecting data is, in itself, a great scientific challenge.
Moreover, it can open the doors to a truly orthogonal technological direction where the emphasis shifts back from cloud computing and huge data collections to thin personal computer and industrial devices that can create a society of intelligent agents which are fully using their computational resources instead of simply acting as lazy clients who are mostly involved simply in handling communication processes.  

\section{Collectionless AI}
The big picture of Artificial Intelligence that emerges from
by Russel and Norvig~\cite{russel2021} is centered 
around a few classic topics, whose methodologies can, amongst others, be characterized
by the noticeable difference that while ``symbolic AI'' is mostly collectionless,
``sub-symbolic AI'' is currently strongly relying on huge data collections. 
Interestingly, Machine Learning,
Communicating, Perceiving, and Acting relies mostly on Statistical methodologies
whose effectiveness has been dramatically improved in the last 
decade because of the access to huge data collections. This has been in fact likely the most important ingredient of the success of Machine Learning  that has found a comfortable place under the umbrella of Statistics. As such, by and large, scientists have gradually become accustomed to taking for granted the fact that it is necessary to progressively accumulate increasingly large data collections. It is noteworthy that even symbolic approaches to AI are based on relevant collections of information, but in that case they are primarily knowledge bases and there is no data directly collected from the environment. 
When focusing on the difference of information that is stored,
a question naturally arises concerning the possibility of exhibiting intelligent 
behavior only thanks to an appropriate internal representation of knowledge.
Clearly, while the knowledge representation typically enjoys the elegance 
and compactness of logic formalism, the storage of patterns apparently leads to 
the inevitable direction of accumulating big data collections. 
However this is indeed very unlikely to happen in nature.
Animals of all species organize environmental information for their own purposes without collecting the patterns that they acquire every day at every moment of their life.
This leads to believe that there is room for collapsing to the common framework\footnote{Most ideas of Collectionless AI have been introduced in~\cite{collectionless-ai}, from which this section is based.} of ``Collectionless AI'' also for Machine Learning. 
The environmental interaction, including the information coming from  humans 
plays a crucial role in the learning process, as well as the agent-by-agent communication. 
We think of agents that can be managed by edge computing devices, 
without necessarily having access to servers, cloud computing and, 
more generally, to the Internet. 
This requires thinking of new learning protocols where machines 
learn in lifelong manner and are expected to 
conquer cognitive skills in a truly human-like context 
that is characterized by environmental interactions, without storing the information acquired from the environment. 

The former completely depends on data collections (clouds) that were possibly supervised beforehand, and where there is no direct/interactive connection between 
who supervised the data and the learning agent, since all goes through the collections. 
Moreover, learning takes place in (distributed) machines with large computational power 
which supports training procedures which are way shorter than the life of the agent 
after the deployment. Differently, Collectionless AI focuses on environmental 
interactions, where the agent ``lives'' in the same environment of the human, 
with a close interaction. As such, the distinction between learning and test 
that is typically at the core of Statistical Machine Learning 
needs to be integrated with more appropriate methods 
that are expected to rely on ``daily assessments''.  
In this new framework, time plays a crucial role, since the way the 
streamed data evolves affects uniformly learning and inferential processes. 
As a matter of fact, the agent can also exploit its own predictions/actions 
to update its internal state.
However, in the supervised continual learning literature, the notion of ``time'' is frequently neglected, and the agent has not an internal state that depends on its previous behaviour/predictions. Continual reinforcement learning follows a specific structure of the learning problem (typical of reinforcement learning, usually augmented with neural networks), while what we discuss here is at a higher abstraction level. In a sense, the ideas of this paper generalize several lifelong-learning-related notions, giving more emphasis to the role of time, to the importance of interactions, and to avoiding the creation of data collections.

Overall, we propose facing the following challenge:
\begin{quote}
\it
Learning takes place without accumulation of collections in nature.\\ Are Collectionless Machine Learning protocols also feasible for gaining intelligent skills in machines?
\end{quote}
From one side, facing this challenge leads to 
better understanding the nature of computational processes taking place in 
biology. From the other side, facing this challenge leads to develop AI solutions 
that go beyond the risks connected with data centralization. Moreover, this challenge might rise to a new scenario that, on the long run, might lead to novel promising technologies more centered around edge-computing-like device, 
without possibly accessing other resources over the network. 
To some extent the spirit of this paper nicely intersects the 
driving directions given in~\cite{depressed}. We do subscribe the authors'
point of view on the picture they give on the  anxious state of many AI scientists
who feel that are not coping with the current pace of AI advancements.
We claim  that Collectionless AI might open the doors to 
important advances of the discipline both from the scientific and technological side.

\section{Why do we need a theory of Cognidynamics?}
\label{why_cd_sec}
% neuroscience 
Beginning from the  excellent models for the behavior of single neurons~\cite{Hodgkin1952}, the dynamical system hypothesis in neuroscience and cognition has been massively investigated over the past decades. As early as at the beginning of the nineties, Anderson, Pellionisz, and Rosenfeld~\cite{anderson1990neurocomputing} edited a seminal book where, amongst others, an important part was devoted to ``Computation and Neurobiology.''
Most studies that sprang out from those early contributions have been mostly devoted to the interpretation of brain images and to the role of system dynamics into the emergence of cognition. The associated  complexity issues have led to drive the connectionist wave, mostly promoted  by the PDP~\cite{Hinton86a,McClelland86a} in the mid-eighties, towards the massive experimentation of static models trained via Backpropagation and related algorithmic schemes. This research direction has fueled the development of statistical foundations of learning and it has given rise to the systematic exploitation of learning mechanisms that, instead of relying on natural laws, benefit from access to huge data sets. The milestones marked by the explosion of deep learning~\cite{lecun2015deep} and transformers~\cite{vaswani2017attention} have shown a scenario that could hardly be predicted even by the fathers of those  Machine Learning innovations. Interestingly, while the studies in computational mechanisms of neurobiology have been regarding biological plausibility as a major issues, the above mentioned current developments which have given rise to spectacular results do not even deal with physical plausibility, since the static processing over deep neural networks relies upon the underlining assumption of infinite velocity of signal propagation. It is worth mentioning that early criticisms on the interpretation of neural propagation in multilayered architectures were carried out by Francis Crick in a seminal paper on Backpropagation biological plausibility~\cite{Crick89}, an issue which, after decades, is still under the lens of investigations~(see e.g.~\cite{song2024a}). 
On the other hand, when moving towards computer science oriented studies, also the recent resurgence of interest in recurrent neural network models~\cite{Melacci_2024_rnn_survey} is not addressing the longstanding issue of biological plausibility; the only neural propagation schemes which exhibit locality in both time and space  still requires to restrict the system dynamics to local feedback connections, which is in fact a special case that was early early discovered at the end of the eighties (see e.g.~\cite{Gori_ijcnn89}).

The above discussion motivates the study of Cognidynamics, namely the search for neural propagation schemes in dynamic neural networks for capturing the secrets of  learning and inference over time under environmental interactions. The purpose of the study is not restricted to understand the qualitative picture that emerges from system dynamics, but it faces directly the overall challenge of discovering information-based laws that are likely driving biological processes in the brain. The underlying assumption is those laws do not face the ambitious task of capturing the details of biological processes. However, we conjecture that cognitive tasks can be given an abstract interpretation in terms of information-based processing regardless of the specific biological implementation that takes place in nature in different species. No matter which living organism we consider, all cognitive tasks continuously address interactions with the environment as well as
survival, reproduction, growth and development, and energy utilization. All those objectives involve an appropriate adaptation to the environment, which is in fact the essence of learning. A teleological character naturally arises for interpreting those processes which leads to optimization principles related to those of  Theoretical Physics. For instance, the original formulation of Maupertuis \cite{Maupertuis-1744}, \cite{Maupertuis-1746} of the Principle of Least Action was motivated by a metaphysical teleological argument that ``nature acts as simply as possible.'' Today we know that in general the action functional is least only for sufficiently short trajectories. 
In any case, while the formulation of the Least Action Principle has given rise to disputes about its  teleological character, the information-based processing behind particles and complex organisms are remarkably different in terms of objectives. It is clear that the invoked optimization schemes in biological organisms play a stronger role in Biology and in intelligent machines than in Physics. This stimulates the study of optimization theories to capture the essence of those adaptation processes under environmental interactions. 
\iffalse
We propose using the framework of dynamic programming and optimal control for the minimization of functional risks computed over time which describe the quality of the environmental interactions. In general we need to specify the initial conditions of the state and the final conditions of the co-state. In doing so, the boundary problem that characterizes the inferential process required to discover the minimum is well-posed. However, this would require to access to data collections which comprehend the whole agent life
\fi
\iffalse
During an intriguing discussion on this issue, R. Feynman~\cite{Feynman-1963} 
was posing the question ``How does the particle find the right path? ... does it smell the neighboring paths to find out whether or not they have more action?'' He invoked the links with the case of light where if we block the way so that the photons could not test all possible paths then we experiment the phenomenon of diffraction. 
Based on the intuition that something similar could take place for particles in the Quantum Mechanics framework, Feynman proposed that wave interference offers the explanation for the choice of the right path. In particular he introduced the path integral~\cite{Feynman:100771} that is based on weighting the different paths according to the factor $e^{iS/\hbar}$. 
\fi
The theory of Cognidynamics is mostly inspired by those principle of Theoretical Physics for capturing fundamental laws of particles. The framework of cognitive systems suggests stressing the importance of targeting optimal solutions more than stationary points of the action. This motives the choice of shifting from the mathematical apparatus of Theoretical Physics to that of Optimal Control under the framework of Dynamic Programming, while still bearing in mind the idea of conceiving information-based laws of cognition, which somewhat resemble of the spirit of any scientific investigation involving time. While neural computation has been mostly driven by statistical foundations to capture regularities in huge data collections, the driving principle of Cognidynamics is that of capturing regularities over time. While huge data collections have shown unexpected capabilities of learning, one should bear in mind that while massively enabling intelligent agents to interact with the environment, we begin promoting a truly orthogonal path of artificial intelligence. Even though we assume not to store information, as times goes, by the amount of environmental information that can be processed significantly exceeds the information available in any data collection. 

\section{Formulation of Lifelong Learning}
In this section we propose a formulation of Lifelong Learning where the classic notion of time plays the role of protagonist. We introduce a class of intelligent agents that we call NARNIAN: NAtuRe-iNspIred computAtional IntelligeNce agents.
In particular we consider the continuous interpretation of time mostly used in Physics, though an associated discrete-time setting can replace the analysis carried out in this paper without significant differences. The choice of the continuous setting of computation comes from the mentioned objective of exploring laws of cognition following the spirit that drives other temporal laws in Science. As mentioned in Section~\ref{why_cd_sec}, once the \textit{temporal environment} is given, one can think of intelligent agents as dynamical systems that are expected to exhibit a high degree of adaptation capabilities. Moreover, just like any living organism, in addition to continuously react to the environment, intelligent agents in temporal environments can likely benefit from inheriting the capacity of  reproduction, growth and development, and energy utilization. \\
~\\
\emph{\sc Survival and Reproduction}\\
Survival is a slippery topic, but one could easily think of intelligent agents which are either alive or dead.   Reproduction is the mechanism for conquering the state of alive; it is in fact  somewhat connected to the massively experimented transfer learning schemes in artificial neural networks. Intelligent agents can follow classic scheme of evolutionary computation which nicely fits in this temporal environment framework\footnote{Survival and reproduction are not be covered in this paper but, as mentioned, the rich literature on evolutionary computation can be explored naturally in this temporal setting of the environment.}. \\
~\\
\emph{\sc Growth and Development}\\
A fundamental difference from current machine learning models based on access to large collections of data immediately emerges in any lifelong learning scheme that is based on a temporal environment, as normally occurs in nature. Every living organism experiences processes of gradual cognitive growth which in some evolved species, especially humans, also presents rather clearly defined development phases~\cite{Piaget1958}. To some extent, it is as if nature intervenes in a sort of protection against excessive exposure to information. For example, the vision process in newborns undergoes a developmental phase in which visual scenes are strongly blurred, and it takes nearly one year to achieve adult's visual acuity. Are those developmental schemes connected to the specificity of human biology or do they come from more general information-based principles?  This paper provides results to sustain the position that similar developmental schemes are characterized by the appropriate adaptation of specific neural connections that prevent information overloading. This is claimed come from informatio-based principles that are independent of specific biological species. 
\\
~\\
\emph{\sc Energy Utilization}\\
As it will be shown in section~\ref{eb_sec}, the process of learning can be interpreted as energy dissipation in the dynamical systems which is acting as an intelligent agent. Basically, it will be shown that the agent receives environmental energy which is used to change the internal energy by dissipation. Clearly, the agent must prevent from accumulating too much internal energy during learning, which is in fact an important driving heuristics for the system dynamics. The end of any learning process corresponds with ending the dissipation process which suggests in any case to carry out policies for bounding the internal and the dissipation energy. Interestingly, just like for sustaining developmental schemes, any intelligent agent implements policies aimed at controlling the energy exchange with the environment by the appropriate adaptation of specific neural connections.\\
~\\
\emph{\sc Neural Network Architecture and Formulation of Learning}\\
\begin{figure}[H]
	\centering
	\includegraphics[width=12cm]{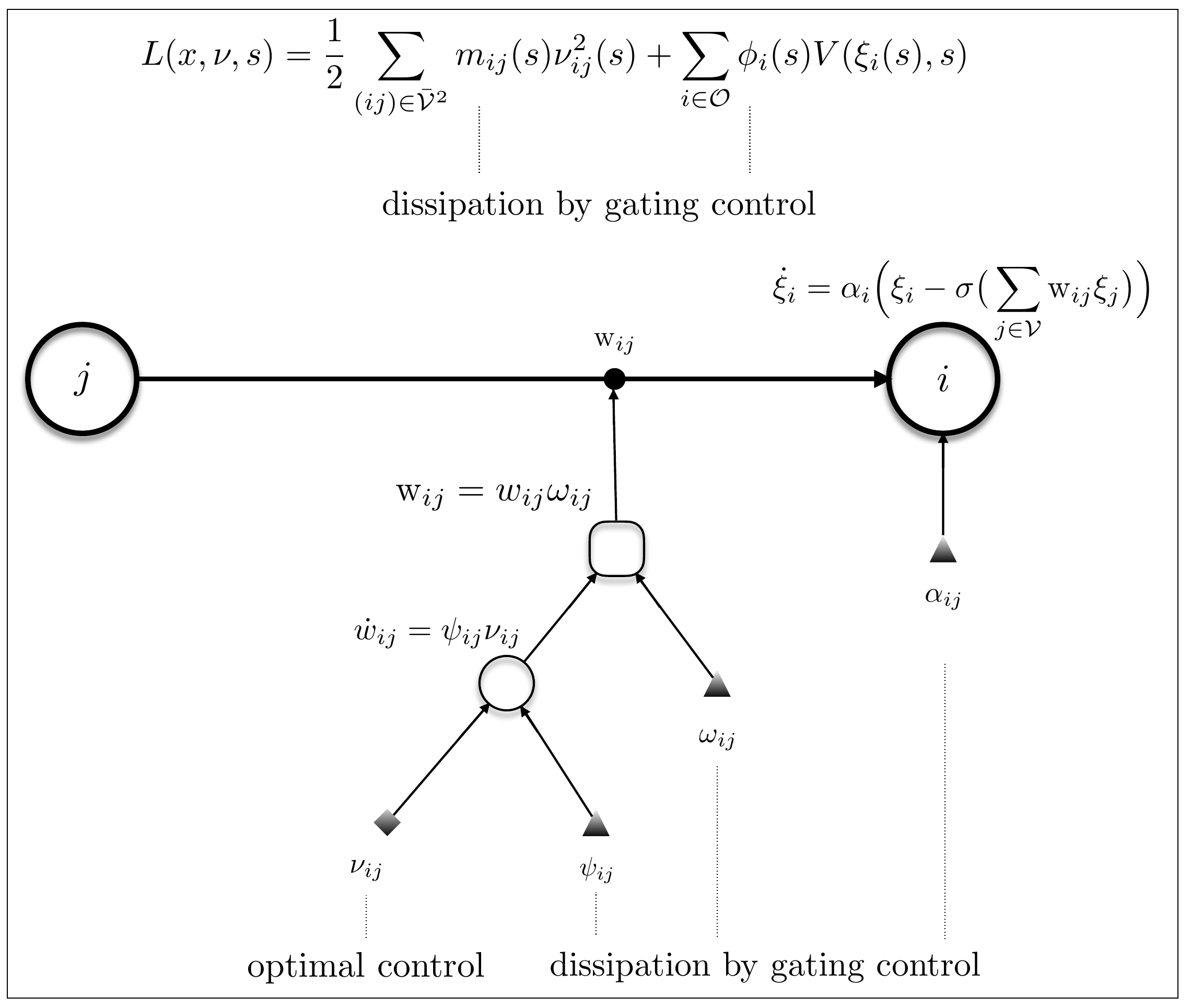}
\caption{\small {\em Neural model, Lagrangian, and the two coordinates of learning. First, the theory of Optimal Control drives the velocity $\nu_{ij}$ of the weights, thus resembling gradient-descent learning policies. Second, the dissipative weights $\zeta$ control the overall learning process by gating mechanisms which takes place on both the network and the Lagrangian.}.
}
\label{NetworkLagrangian-Fig}
\end{figure}
Based on the above premises we consider a dynamical system defined by a continuous-time recurrent neural network, where some of the neurons are reserved to model the interactions with the environment, whereas others are hidden. Basically, the set of vertexes 
${\cal V} ={\cal I} \cup {\cal H} \cup {\cal O}$
contains input $\mathcal{I}$, hidden $\mathcal{H}$, and output units $\mathcal{O}$.
It turns out that for input nodes, $i \in {\cal I}$,
the corresponding neuron returns a value $\xi_i$ that corresponds with the 
input $u_{i}$. Likewise, for output units,
$o \in {\cal O}$, the value $\xi_{o}$ encodes an action that is expected to optimize the agent behavior. Finally, other units $\xi_{h}$, that are characterized by $h \in {\cal H}$, are hidden. 
We also introduce $\bar{\cal V}:= {\cal H} \cup {\cal O}$ which represents the set of non-degenerate neurons. This set is useful also for defining the architecture of the recurrent network which is characterized by the graph ${\cal G} = ({\cal V},{\cal A})$, where 
${\cal A} \subset {\cal V} \times \bar{\cal V}$ is the set of arcs which connect non-degenerate units. The  neural system dynamics obeys the ODE
%\footnote{Throughout the paper we use an extension of Einstein summation
%convention that also applies for equalities: the notation $c= a_{ij}+b_i$ for
%instance it is assumed to stand for $c= sum_{ij} a_{ij}+\sum_i b_i$. Here
%$\sigma(\cdot)$, which is assumed to be a classic sigmoidal function
%(e.g. $\sigma(a) = 1/(1+\exp({-a}))$).}
\begin{equation}
{\cal N}: \ 
\begin{cases}
\neuron_i(t)= u_i(t) & i\in\mathcal{I}\\
a_{i}(t) = \sigma\Bigl(\sum\nolimits_{j \in {\cal V}} {\rm w}_{ij}(t)
\neuron_j(t)\Bigr)& i\in \bar{\cal V}\\
\dot{w}_{ij}(t)=\psi_{ij}(t)\nu_{ij}(t)
& (i,j)\in\mathcal{A}\\
{\rm w}_{ij}(t) = \omega_{ij}(t) w_{ij}(t) & (i,j)\in\mathcal{A}\\
\dot\neuron_i(t)= \alpha_i(t)\Big[-\neuron_i(t)+\sigma\bigl(a_{i}(t) \bigr) \Big]
& i\in \bar{\cal V}
\end{cases}\label{eq:neural-model}
\end{equation}
which holds on the horizon $(0,T)$. 
Here $T>0$ can also be infinite\footnote{For the sake of simplicity, in the remainder of the paper we will sometimes
drop the temporal dependence of the variables.}. 
We denote by $x:= (\neuron,{\rm flatten}(w))$ the overall state which contains the neural activation $\xi$ and the weights associated with ${\cal A}$, while $\sigma(\cdot)$ is a sigmoidal function (e.g. $\sigma(\cdot) = \tanh(\cdot)$).  
The system dynamics of the weights ${\rm w}_{ij}$ is driven by the corresponding velocity ${\nu}_{ij}$, which is properly filtered by $\psi_{ij}$ and $\omega_{ij}$, respectively.  
At any $t$, the value $\alpha_{i}(t)=\tau_{i}^{-1}(t) \geq 0$ can be regarded as inverse of time 
constants $\tau_{i}(t)$ and somewhat characterizes the velocity of performing the associated computation of the activation $a_{i}(t)$. 
\begin{remark} {\bf Network Gating Functions}\\
    $\alpha, \psi$ and $\omega$ are {\em gating functions}.
    Notice that $\alpha$ plays a crucial role for the characterization of the system dynamics. When $\alpha_{i}(t) = 0$ the neuron is {\em inhibited}, whereas it is {\em activated} whenever $\alpha_{i}(t) > 0$. It is worth mentioning that such a fundamental difference in the state {\em (activated/inhibited)} has been systematically observed in  biological neurons. The role of $\psi$ and $\omega$ is that of providing an appropriate gating mechanism on the weights. As it will be shown in remainder of the paper,  they carry out a different type of system dynamics that nicely matches the need of gradual learning and energy control, respectively. We can easily see that $\psi$ can perform {\em time warping}, whereas $\omega$ can carry out {\em pruning strategies}.
\end{remark}
The dynamical system defined by Eq.~\ref{eq:neural-model}, is characterized by the state $x \sim (\xi,w)$ and by $\nu$, which acts as the velocity of the weights. This is in fact the function that will be used for driving the learning process, since it defines the evolution of the weights ${\rm w}$. Notice that, just like {\rm w}, the network gating functions $\zeta_{N} \sim (\alpha,\psi, \omega)$ can also be thought of learning functions. This is in fact a fundamental architectural difference with respect to traditional recurrent neural network models. We will see that an associated learning process takes place concerning the weights $\zeta$.

The interaction with the environment of  the recurrent neural network is expected to minimize the following functional risk
\begin{equation}
R(\nu,T) = \int_{0}^T  \bigg(
        \frac{1}{2} \sum_{i \in \bar{\cal V}}  \sum_{j \in \cal V}  
        m_{ij}(s) \nu_{ij}^{2}(s) + 
         \sum_{i \in {\cal O}} \phi_{i}(s) V(\xi_{i}(s),s) 
    \bigg) \,ds
\label{eq:risk}
\end{equation} 
Here $L(x,\nu,s) = 
\frac{1}{2} \sum_{i \in \bar{\cal V}}  \sum_{j \in \bar{\cal V}}  
m_{ij}(s) \nu_{ij}^{2}(s) + 
\sum_{i \in {\cal O}} \phi_{i}(s) V(\xi_{i}(s),s)$ is the Lagrangian of the paired system~(\ref{eq:neural-model}).
The Lagrangian is characterized by:
\begin{itemize}
    \item The kinetic energy term  $K(s)=\frac{1}{2} m_{ij}(s) \nu_{ij}^{2}(s)$ 
    of the particle $(i,j)$ that is characterized by the
    mass $m_{ij}(s) > 0$  defined over the time interval $[0,T]$. 
    \item The potential energy of field $\xi$, which is in fact 
    related to classic loss functions in Machine Learning. We can express it by $V(\neuron,t) = \mathcal{V}(\neuron,e(t))$, which is indicating the explicit fundamental dependence of this loss term from $e(\cdot)$, which incorporates the information for the environmental interactions, and can be expressed by $e \sim (u,y)$ (inputs and
    outputs). This potential energy is  accumulated over time  according to 
    $\phi_{i}(t)>0$, which a gate function that we also referred to as the
    {\em conscious functions}. The term $\phi_{i}(t)$ enables learning  by involving the  
    corresponding loss term defined by the potential $V(\xi_{t},t)$ (see Fig.~\ref{Conscious-Fig}). 
    Notice that $\phi_{i}(t)$ is attached to all neurons $i \in \bar{\cal V}$, which will be shown to play an important role in the overall learning process.
    \iffalse
    \item Function $\gamma$ takes on values $\gamma_{i}(t) \in \left\{-1,+1\right\}$. Because of the sign assumptions on $m$ and $\phi$, when $\gamma_{i}(t) = -1$ it plays the role of returning a functional which resembles classic action in Mechanics. In particular, in this case, we refer to functional $R(\nu,T)$  as the \textit{Cognitive Action}.
    \fi
    \item The explicit dependence on time in the Lagrangian is useful for carrying out  dissipation mechanisms. Such a dependence is mainly expressing the information coming from the environment by  function $e$ which conveys the information at any time $t$.
\end{itemize}

\begin{figure}[H]
	\centering
	\includegraphics[width=11cm]{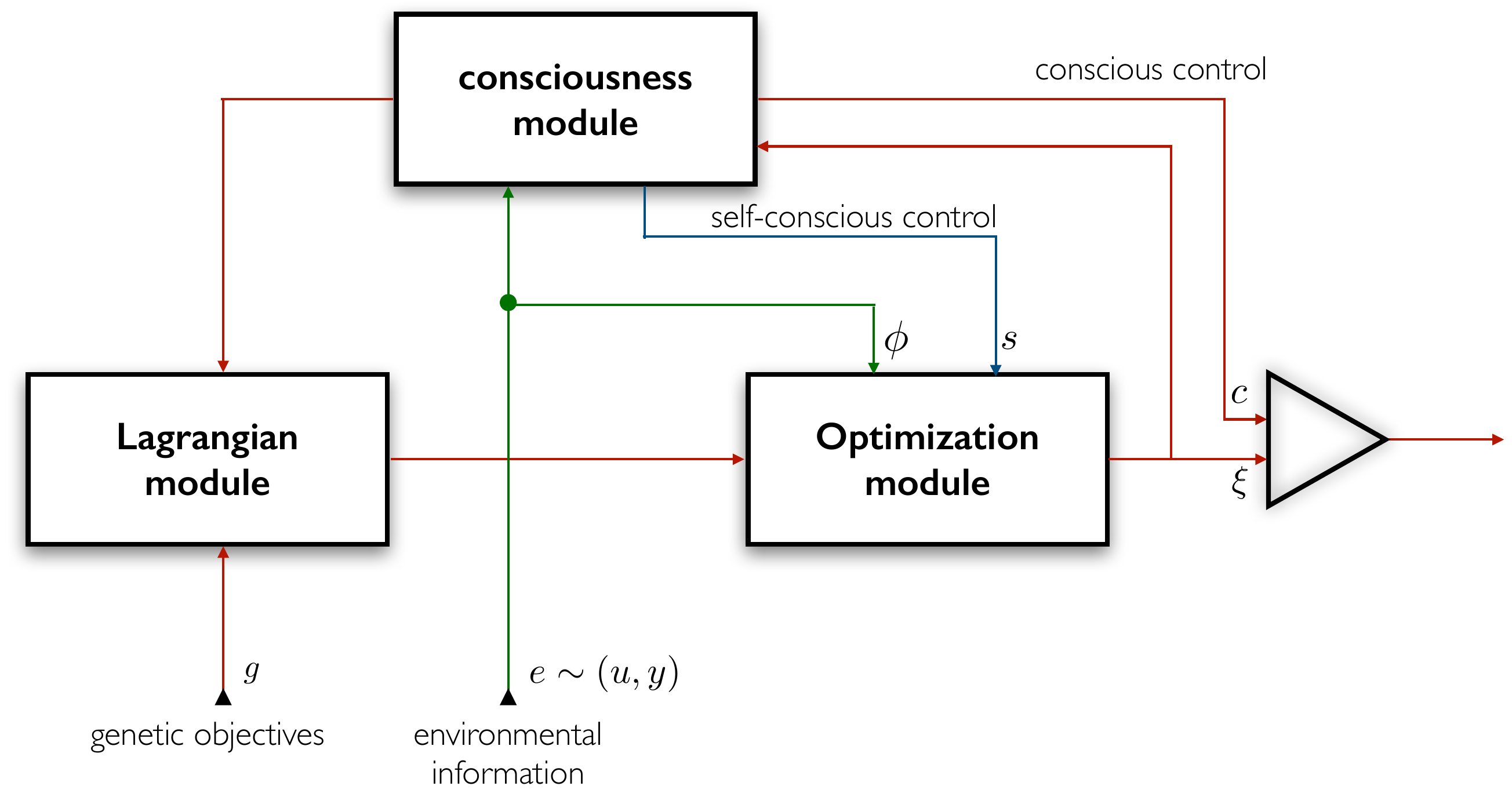}
\caption{\small {\em Overall architecture of {\em NARNIAN} agents.} The response of the {\em optimization module} is properly enabled by the {\em conscious module} when it returns $c=1$. On the opposite, when $c=0$ the response is disabled. In addition a similar gating mechanism is carried out on the Lagrangian term which somehow inhibit the learning process. The conscious model also carry out self-conscious control by returing function $s$.
}
\label{Conscious-Fig}
\end{figure}
\begin{remark} {\bf Lagrangian Gating Functions}\\
Like for the dynamical system which defines the agent, the Lagrangian function is characterized by the presence of gating functions that make it possible to properly select the corresponding dynamical behavior that arises from optimization. The triple $\zeta_{L} = \{m,\phi \}$ plays a dual role with respect to $\zeta_{N}$. As shown in the reminder of the paper, both gating functions play a crucial role in the overall process of learning. In general one can think of conscious mechanisms to control the learning process by using any of the gating parameters of $\zeta = \{\alpha, \psi, \omega; m,\phi \}$. However, in this paper we assume that conscious mechanisms influence learning by acting on $\phi$ only. This corresponds with restricting conscious issues to the process of enabling/disabling the effect of the loss function.    
\end{remark}

\begin{remark} {\bf Injecting invariance into the Lagrangian}\\
    The potential term ${\cal V}(\xi,e)$ in its general form returns a loss term
    depending on the degree of match between $\xi$ and $e$. Interestingly, there are important cases in which the dependency on the environment expressed by $e$ is dropped, so as we have that the potential gets the simplified structure ${\cal V}(\xi)$. Whenever this happens we are in front of ways to express invariant properties which involves the state only. A case in which this happens in computer vision corresponds with the expression of the brightness invariance. A more expressive non-holonomic constraint arises when modeling motion feature invariance. 
    In that case the features and their conjugated velocities satisfy a transportation pde which can be expressed by introducing a loss function which penalizes the mismatch of the constraint (see \cite{Betti_2022}, pp. 89--90).
    \iffalse
    Basically, this is a non-holonomic constraint that expresses knowledge on the
    environment. This can typically express an explicit knowledge on 
    the field $\tilde{\xi}_{\kappa}$ for $\kappa \in {\cal K}$. The transport
    equation in vision is an example of such a knowledge granule. 
    In that case those additional variables need to be consistent with 
    a correspondent set of neurons in $\bar{\cal V}$. If we denote
    by $\xi$ the field with units in $\bar{\cal V}$ then we can simply
    introduce in the Lagrange the correspondent loss  term  $1/2 (\xi-\tilde{\xi})^2)$.
     \fi
\end{remark}
As discussed in Appendix~\ref{appendix:control}, the optimization of $R(\nu,T)$ is well-posed when the boundary conditions are given and when the full access to the environmental information over $[0,\infty)$ is given. However, this is exactly what we exclude in the theory.  

In the following analysis, in addition to the risk function it turns out to be useful to introduce  the function
\textit{value function}  
\begin{equation}
    v(x,t) = \int_{t}^{T} L(x,\nu,s) ds, 
\label{vf_def}
\end{equation}
which is related to the risk by $R(\nu,T) = v(x(0),0)$.
Once the neural network~(\ref{eq:neural-model}) and the risk function~(\ref{eq:risk}) are given we can consider the following optimization problem
%\footnote{Notice that this is an optimization problem which needs to fulfil the non-holonomic %constraint~(\ref{eq:neural-model}).}
\begin{equation}
{\cal O}_{bc}: \ {\rm Boundary \ Conditions} \ \
\begin{cases}
    \nu_{[0,T]}^{\star} =  \argmin_{\nu} R(\nu,T);\\
    x(0) = x_{0};\\
    p(T) = v_{x}(x(T),T) = 0.
\end{cases}
\label{OptimalControl_T}
\end{equation}
Clearly, such an optimization problem can be solved by an oracle ${\cal O}_{bc}$
which, in addition to boundary conditions, has got the privilege of accessing all the environmental interactions of his life over $[0,T]$. 
An interesting notion of learning arises as we think of this optimization problem
in a causal framework. In this case we assume that $x(0) = x_{0}$ and
$p(0) = v_{x}(x(0),0) = 0$ are given and, in addition, we assume that
only environmental interactions up to a certain time $t$ are available, whereas
we cannot access the interactions over $[t,T]$.
This formulation is clearly ill-posed because of the lack 
of any assumption on future environmental interactions. It turns out that
those initial conditions do characterize the system dynamics which will 
end up with the required boundary condition $p(T) = v_{x}(x(T),T) = 0$ with
null probability. We notice in passing that the above mentioned discussing on gating parameters $\zeta$ suggests interpreting the risk function as 
${\cal R}(\nu,\zeta,T):=R(\nu,T)$. While we formulate learning as an optimal control problem driven by $\nu$, we are also allowed to modify the functional risk by an appropriate choice of the dissipation parameters $\zeta$.
Moreover, when formulating learning over $[0,\infty)$ we can think of attacking the following optimization problem:
\begin{equation}
{\cal O}_{fo}: {\rm Forward \ Optimization} \ \
\begin{cases}
    \nu^{\star}\in \argmin_{\nu} {\cal R}_{\infty}(\nu,\zeta);\\
    \zeta^{\star}\in \argmin_{\zeta} {\cal R}_{\infty}(\nu,\zeta);\\
    x(0) = x_{0}; \\ p(0) = p_{0};\\
    %\lim_{t \rightarrow \infty} p(t) = 0; \\ 
    \lim_{t \rightarrow \infty} \nu(t) = 0;\\
    %\lim_{t \rightarrow \infty} \alpha(t) = 0; \
    %\lim_{t \rightarrow \infty} \dot{\phi}(t) = 0; 
    %\lim_{t \rightarrow \infty} \dot{\zeta}(t) = 0. 
\end{cases}
\label{OptimalControlVel}
\end{equation}
This is a forward optimization problem based on Cauchy's initialization. Clearly, the formulation over $[0,\infty)$ cannot allow to use policies for solving the boundary problem and one can only expect to use the energy dissipation in such a way to guarantee that, as a consequence, $ \lim_{t \rightarrow \infty} p(t) = 0$. 
This opens the doors to  new mechanisms of learning according to which the neural propagation follows optimal control policies but also benefit from the appropriate adaptation of the $\zeta$ parameters. According to eq.~\ref{OptimalControlVel}, like the optimal control parameters $\nu$ those dissipation parameters must end up into convergence. During the agent's life, the $\zeta$ parameters are chosen in such a way to enforce the decrement of $1/2 p^2$ and/or the energy exchanged with the environment. 
\iffalse
An informal reading of the above optimization problem is that
the learning strategy consists of supplying the velocity of the weights under
the terminal conditions of discovering constant solutions for the 
weights $w_{ij}$, when the given functions $\alpha$, the conscious functions $\phi_{i}$, and the gating functions $\psi_{ij}$ are expected to converge asymptotically.
Those terminal condition corresponds with the underlying assumption that the 
learning environment offers a sort of regularities that can be captured 
by the dynamics of the recurrent neural network by constant weights. 
In particular, the terminal condition on $\nu$ removes asymptotically the kinetic energy and, 
consequently, which decreases the role of the masses $m_{ij}$.  
Moreover, it is worth mentioning that in this asymptotic case we loose
the uniqueness of the solution, that is $\nu^{\star}\in \argmin_{\nu} R(\nu,\infty)$ 
\fi

Finally we can   establish a  link with the classic functional risk in Machine Learning. 
Suppose we restrict the analysis to the case of unconscious learning for which
$\dot{\phi}=0$ holds true. Then the link arises when the formulated 
optimization problem~(\ref{OptimalControlVel}) admits the solution 
$\nu \rightarrow 0$ and $\dot{\zeta} \rightarrow 0$. 
In this case the risk~(\ref{eq:risk}) becomes
\[
    {\cal R}_{\infty}(\nu^{\star},\zeta^{\star}) = \int_{0}^{\infty} {\cal V}(\xi(t),u(t),y(t)) dt
    = \int_{\Xi \times Y} \mathrm{V}(u,y,{\cal N}) dP(u,y).
\]
Here, any environmental interaction $e(t) = (u(t),y(t))$ is transformed into 
the response $\xi(t),w(t)$ whose quality is measured by 
$\mathrm{V}(u,y,{\cal N})$, where  $\mathrm{V}$ plays the role of 
the loss function. The association  comes from considering all pairs $(u(t),y(t))$ 
generated over time and considering the associated joint probability. 

\section{Hamiltonian spatiotemporal locality}
\iffalse
Consider the following generalization of the  functional in \eqref{eq:risk}
\begin{equation}
\medmuskip-1mu
\begin{aligned}
R(\nu)=\int\limits_0^T \sum_{i \in \bar{\cal V}} \bigg[
        \sum_{j \in \bar{\cal V}} 
        \frac{m_{ij}}{2}\nu_{ij}^{2}(t) +
        \sum_{j \in \cal V} \phi_i(t) F\bigl(\xi_{i}(t;\nu),w_{ij}(t;\nu),t\bigr) \bigg] dt,
\end{aligned}
\end{equation}
where $\neuron(\cdot; \nu)$ and $\weight(\cdot;\nu)$ are the solutions
of Eq.~\eqref{eq:neural-model} once a set of $\nu_{ij}$ functions are
assigned.

In the reminder of the paper we will assume that 
\[
F(\neuron, \weight, t)=V(\neuron,t)+\lambda \Omega(\weight,t),
\]
where $V$ can be thought as a loss function on the output of the neurons
that may also enforce regularization directly on the neuron's outputs,
while $\Omega$ is a regularization term directly imposed on the weights of
the network (like for instance a weight-decay term).
\fi
We begin considering the optimal problem defined by Eq.~(\ref{OptimalControl_T}), for which we can offer the classic solution from the theory of Optimal Control (see Section~\ref{appendix:control} in the Appendix). 
The classic framework of the theory of optimal control suggests to express compactly the pair Neural Network - Lagrangian by the corresponding Hamiltonian function.  
It can be determined by considering the function
\begin{equation}
\begin{aligned}
&H(\neuron_{i},w_{ij},p_{i},p_{ij},t)\\
=&\min_v  \bigg(
        \frac{1}{2}\sum\nolimits_{ij} m_{ij}(t) \nu_{ij}^{2}(t)
        +  \phi_{i}(t)V\bigl(\xi_{i},t\bigr)\\
        &+\sum_i\alpha_i(t)\pneuron_i(t)
        \Big[-\neuron_i(t)+\sigma\Bigl(\sum\nolimits_j \omega_{ij}(t) w_{ij}(t)
        \neuron_j(t)\Bigr)\Big]\\
        &+\sum_{ij} p_{ij}(t)\psi_{ij}(t) \nu_{ij}(t)
        \bigg).
\end{aligned}
\end{equation}
Hence the minimum that defines the Hamiltonian is attained at
$\nu_{ij}= -(\psi_{ij}(t)/m_{ij}(t)) p_{ij}(t)$ which means that the Hamiltonian
can be computed in closed form
\begin{align*}
H(\xi_{i},w_{ij},p_{i},p_{ij},t)
    =&\frac{1}{2}\sum\nolimits_{ij} \frac{\psi_{ij}^2(t)}{m_{ij}(t)}
    p_{ij}^2(t) +  \sum_{i \in {\cal O}}\phi_{i}(t)V\bigl(\neuron(t),t\bigr) \\
    &+\sum\nolimits_i\alpha_i(t)\pneuron_i(t)
    \Big[-\neuron_i(t)+\sigma\Big(\sum\nolimits_j \omega_{ij}(t) w_{ij}(t)
    \neuron_j(t)\Big)\Big]
    -\sum\nolimits_{ij} \frac{\psi_{ij}^2(t)}{m_{ij}(t)} p_{ij}^2(t).\\
    =&-\sum\nolimits_{ij} \frac{\psi_{ij}^2(t)}{2m_{ij}(t)} p_{ij}^2(t) 
    +\sum\nolimits_i\alpha_i(t)\pneuron_i(t)
    \Big[-\neuron_i(t)+\sigma\Big(\sum\nolimits_j \omega_{ij}(t) w_{ij}(t)
    \neuron_j(t)\Big)\Big].
\end{align*}
Now if we define
\begin{equation}
    \beta_{ij}(t):=\frac{\psi_{ij}^2(t)}{m_{ij}(t)}
\end{equation}
the Hamiltonian becomes
\begin{align}
\begin{split}
H=&-\frac{1}{2} \sum\nolimits_{ij  \in {\cal A}}  \beta_{ij}(t)
    p_{ij}^2(t) +  \sum\nolimits_{i \in {\cal O}}\phi_{i}(t)
    V\bigl(\neuron(t),t\bigr)\\
    +&\sum\nolimits_{i \in \bar{\cal V}}\alpha_i(t)\pneuron_i(t)
    \Big[-\neuron_i(t)+\sigma\Big(a_{i}(t)\Big)\Big].
\end{split}
\label{H-eq}
\end{align}
\iffalse
\begin{remark}
    The Hamiltonian is a concise formalism for fully defining the system dynamics and, therefore, the neural propagation. 
    It also involves the dissipation parameters $\zeta$, while the control action is directly given by  $\nu_{ij}= -(\psi_{ij}(t)/m_{ij}(t)) p_{ij}$.
    It is worth mentioning that such an optimal control solution holds for any choice of the dissipation parameters, whose specific role has already been discussed. It turns out that for any choice of those parameters there is a corresponding optimal solutions. 
    \iffalse
    The optimal solution of oracle ${\cal O}_{2}$ only needs to 
    respect conditions~(\ref{OptimalControlVel}), that is those variables
    are given but must respect asymptotic convergence.
    On the opposite, in order to yield dissipation, 
    the causal solution involves all variables. 
    \fi
\end{remark}
\fi
\begin{remark} 
    A {\rm curiosity-driven process} can be carried out if the 
    intelligent agent is expected to process also derivatives of the
    input, that is if there are inputs for which 
    \begin{equation}
        \dot{\xi}_{i}(t) = u_{i}(t).
    \label{Curiosity-Driven-Eq}
    \end{equation}
    In this case the Hamiltonian needs to incorporate the corresponding 
    $p \cdot f$ terms, that is $\sum_{\cal I} p_{i}(t) u_{i}(t)$.  
\end{remark}

Since we have an explicit formula for $H$ we can directly compute the
gradients of the function. In particular we have that
\begin{lemma}
For all $t>0$ 
%and for all $(\neuron,\weight,\pneuron,\pweight)\in \R^D$
we have
\[\begin{aligned}
&\begin{aligned}
H_{\neuron_i}&=
 \phi_{i} V_{\neuron_i}(\neuron, t)-\alpha_i \pneuron_i
+\sum\nolimits_{\kappa \in {\rm ch}[i]}  \alpha_{\kappa} \omega_{\kappa i}
\sigma'\Bigl(\sum\nolimits_{j \in {\rm pa}[\kappa]} \omega_{\kappa j} w_{\kappa j}
\neuron_j\Bigr) w_{\kappa i} \pneuron_{\kappa};
\end{aligned}\\
& \begin{aligned}H_{w_{ij}}&=\alpha_i \omega_{ij}
\sigma'\Bigl(\sum\nolimits_{m \in {\rm pa}[i]} \omega_{im} w_{im}
\neuron_m\Bigr)\pneuron_i\xi_j;\end{aligned}\\
& H_{\pneuron_i}
=\alpha_i \Big[-\neuron_i+\sigma\Big(\sum\nolimits_j \omega_{ij} w_{ij}
\neuron_j \Big)\Big];\\
& H_{p_{ij}}=
- \beta_{ij} p_{ij}.
\end{aligned}
\]
\end{lemma}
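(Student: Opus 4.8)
The plan is to verify the four partial derivative formulas by direct differentiation of the closed-form Hamiltonian in Eq.~\eqref{H-eq}, treating $\neuron_i$, $w_{ij}$, $\pneuron_i$ and $p_{ij}$ as independent variables at each fixed $t$ (the gating functions $\alpha,\psi,\omega,\phi,\beta$ and the masses are exogenous and do not depend on the state/costate, so they pass through differentiation as coefficients). The only mild subtlety is bookkeeping: the activation variable $a_i = \sum_j \omega_{ij} w_{ij}\neuron_j$ appears inside $\sigma$, so each occurrence of $\sigma(a_i)$ contributes via the chain rule a factor $\sigma'(a_i)$ times $\partial a_i/\partial(\cdot)$, and one must track for which indices $i$ a given state variable $\neuron_j$ or weight $w_{ij}$ actually occurs in the sum.

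First I would compute $H_{p_{ij}}$: only the first term $-\tfrac12\sum_{kl}\beta_{kl}p_{kl}^2$ depends on $p_{ij}$, and differentiating gives $-\beta_{ij}p_{ij}$ directly. Next $H_{\pneuron_i}$: only the third term is linear in $\pneuron_i$, so the derivative is the bracket $\alpha_i[-\neuron_i+\sigma(\sum_j \omega_{ij}w_{ij}\neuron_j)]$, which matches $H_{\pneuron_i}$ above (this is of course just the state equation $\dot\neuron_i$ recovered from Hamilton's equations). Then $H_{w_{ij}}$: the weight $w_{ij}$ occurs only in $a_i$ inside the $i$-th summand of the third term; applying the chain rule, $\partial a_i/\partial w_{ij} = \omega_{ij}\neuron_j$, so $H_{w_{ij}} = \alpha_i\,\omega_{ij}\,\sigma'(\sum_{m\in\mathrm{pa}[i]}\omega_{im}w_{im}\neuron_m)\,\pneuron_i\,\neuron_j$, using $\mathrm{pa}[i]$ to denote the index set over which the sum defining $a_i$ runs.

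The one genuinely index-sensitive case is $H_{\neuron_i}$. Here $\neuron_i$ enters in three distinct ways: (1) explicitly in the potential term $\phi_i V(\neuron,t)$, giving $\phi_i V_{\neuron_i}$; (2) explicitly in the linear term $-\alpha_i\pneuron_i\neuron_i$, giving $-\alpha_i\pneuron_i$; and (3) inside every activation $a_\kappa$ for which $i$ is an input to unit $\kappa$, i.e. $\kappa\in\mathrm{ch}[i]$, contributing $\alpha_\kappa\pneuron_\kappa\,\sigma'(a_\kappa)\,\partial a_\kappa/\partial\neuron_i = \alpha_\kappa\pneuron_\kappa\,\sigma'(\sum_{j\in\mathrm{pa}[\kappa]}\omega_{\kappa j}w_{\kappa j}\neuron_j)\,\omega_{\kappa i}w_{\kappa i}$. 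Summing the three contributions reproduces $H_{\neuron_i}$ exactly. The main ``obstacle'' — really just a care point — is making sure in case (3) that we sum over children of $i$ (the units that receive $\neuron_i$ as an input) rather than over $i$ itself, and that both $\omega_{\kappa i}$ and $w_{\kappa i}$ come down from the derivative of $a_\kappa$; once the graph-theoretic roles of $\mathrm{ch}[\cdot]$ and $\mathrm{pa}[\cdot]$ are pinned down, the computation is routine and the four identities follow.
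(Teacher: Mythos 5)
Your proposal is correct and follows exactly the route the paper intends: the lemma is just term-by-term differentiation of the closed-form Hamiltonian in Eq.~(\ref{H-eq}), with the chain rule through $a_i=\sum_j\omega_{ij}w_{ij}\neuron_j$ and the ${\rm ch}[i]$/${\rm pa}[\kappa]$ bookkeeping you describe (the paper itself states the lemma without writing out this computation). Nothing is missing.
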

\noindent This is all that is needed to write down the following Hamilton's equations of learning:

\begin{proposition}
For all $t\in(0,T)$ the following Hamilton's equations holds true
\begin{equation}
\left\{\begin{aligned}
&[i] \ \  \dot\neuron_i(t)=
\alpha_i(t) \Big[-\neuron_i(t)+\sigma\Big(\sum\nolimits_{j \in {\rm pa}[i]} 
\omega_{ij}(t) w_{ij}(t) \neuron_j(t)\Big)\Big];\\
&[ii] \ \  \dot{w}_{ij}(t)=-\beta_{ij}(t) p_{ij}(t);\\
&\begin{aligned}
[iii] \ \ \dot\pneuron_i(t)=& -s_{i}(t) \phi_{i}(t) V_{\neuron_i}(\neuron(t), t)+s_{i}(t) \alpha_i(t)
\pneuron_i(t)\\
 &-s_{i}(t) \sum\nolimits_{\kappa \in {\rm ch}[i]}
\alpha_{\kappa}(t) \omega_{\kappa i}(t)\sigma'\Bigl(\sum\nolimits_{j \in {\rm pa}[\kappa]} \omega_{\kappa j}(t)
w_{\kappa j}(t) \neuron_j(t)\Bigr)  w_{\kappa i}(t) \pneuron_{\kappa}(t);
\end{aligned}\\
&[iv] \ \ \dot{p}_{ij}(t)=
-s_{i}(t) \alpha_i(t) \omega_{ij}(t)\sigma'\Bigl(\sum\nolimits_m w_{im}(t)
\neuron_m(t)\Bigr)\pneuron_i(t)\xi_j(t).
\end{aligned}\right.
\label{rnn-costate-eq}
\end{equation}
\end{proposition}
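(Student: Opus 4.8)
The plan is to recognize the proposition as nothing more than the canonical (Hamilton) system attached to the control problem ${\cal O}_{bc}$ of~\eqref{OptimalControl_T}, written out coordinatewise. Recall from the optimal-control material in Section~\ref{appendix:control} that along an optimal trajectory one has $\dot{x}=H_{p}$ and $\dot{p}=-H_{x}$, where $H(x,p,t)=\min_{\nu}\bigl(L(x,\nu,t)+p\cdot f(x,\nu,t)\bigr)$. In the present setting the minimization over $\nu$ has already been performed — it produced the optimal control $\nu_{ij}=-(\psi_{ij}/m_{ij})p_{ij}$ and the closed form~\eqref{H-eq} for $H$, with $\beta_{ij}:=\psi_{ij}^{2}/m_{ij}$ — so nothing remains to optimize. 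Taking $x=(\xi,{\rm flatten}(w))$ and $p=(p_{i},p_{ij})$, the whole proof reduces to substituting the four partial derivatives of $H$, which are exactly the identities of the preceding Lemma, into $\dot{x}=H_{p}$ and $\dot{p}=-H_{x}$, and then commenting on the factor $s_{i}(t)$.

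First I would handle the state. The block $\dot{x}=H_{p}$ splits as $\dot{\xi}_{i}=H_{p_{i}}$ and $\dot{w}_{ij}=H_{p_{ij}}$; inserting $H_{p_{i}}=\alpha_{i}\bigl[-\xi_{i}+\sigma(\sum_{j}\omega_{ij}w_{ij}\xi_{j})\bigr]$ and $H_{p_{ij}}=-\beta_{ij}p_{ij}$ from the Lemma gives $[i]$ and $[ii]$ verbatim. The range $j\in{\rm pa}[i]$ in $[i]$ is automatic, since $w_{ij}$ is defined only along arcs $(i,j)\in{\cal A}$; and as an internal check one sees that $[ii]$ is precisely $\dot{w}_{ij}=\psi_{ij}\nu_{ij}$ of~\eqref{eq:neural-model} with the optimal $\nu_{ij}$ substituted, while $[i]$ has the same form as the state ODE already postulated there.

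Next I would handle the costate. The block $\dot{p}=-H_{x}$ splits as $\dot{p}_{i}=-H_{\xi_{i}}$ and $\dot{p}_{ij}=-H_{w_{ij}}$; substituting the Lemma's expressions for $H_{\xi_{i}}$ and $H_{w_{ij}}$ reproduces the right-hand sides of $[iii]$ and $[iv]$ exactly in the case $s_{i}(t)\equiv1$. The presence of a general $s_{i}(t)$ is the single non-mechanical ingredient: it is not implied by the Pontryagin conditions, but is layered on top of them, reading the adjoint dynamics as modulated neuron-by-neuron by the self-conscious gating $s$ of the conscious module (cf.\ Fig.~\ref{Conscious-Fig}). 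I would therefore state $[iii]$--$[iv]$ as this $s$-modulated adjoint system, noting that $s_{i}\equiv1$ returns the textbook costate equations, whereas the choice $s_{i}=-1$ is what makes it possible to integrate the costate forward in time, as required in the collectionless regime.

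The main obstacle is thus not computational — the Lemma already absorbs all the differentiation, in particular the sum $\sum_{\kappa\in{\rm ch}[i]}$ in $H_{\xi_{i}}$, which arises because $\xi_{i}$ enters the pre-activation of every child $\kappa$ with coefficient $\omega_{\kappa i}w_{\kappa i}$ and $\sigma'$ evaluated at $\sum_{j\in{\rm pa}[\kappa]}\omega_{\kappa j}w_{\kappa j}\xi_{j}$. The two points that need care are (a) verifying that with $s\equiv1$ the four equations coincide with the canonical system of Section~\ref{appendix:control}, so that the $s_{i}$-modulation is the sole and deliberate departure from the classical adjoint; and (b) keeping the index conventions aligned between the Lemma and~\eqref{eq:neural-model} — for instance the argument of $\sigma'$ written as $\sum_{m}w_{im}\xi_{m}$ in $[iv]$ is to be read with the filtered weights ${\rm w}_{im}=\omega_{im}w_{im}$, consistently with~\eqref{eq:neural-model}.
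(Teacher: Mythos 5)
Your proposal is correct and follows essentially the same route as the paper: the preceding Lemma supplies all the partial derivatives of the closed-form Hamiltonian~\eqref{H-eq}, which are then substituted into the canonical system $\dot{x}=H_{p}$, $\dot{p}=-H_{x}$ coming from the characteristics of the HJB equations (Appendix~\ref{appendix:control}), with the neuron-wise sign $s_{i}(t)=\pm1$ added on top exactly as the paper does (recovering the classical adjoint for $s_{i}\equiv1$ and the dissipative variant otherwise). Your remarks on the consistency of $[ii]$ with $\dot{w}_{ij}=\psi_{ij}\nu_{ij}$ at the optimal $\nu_{ij}$ and on reading the weights in $[iv]$ as the filtered ${\rm w}_{im}=\omega_{im}w_{im}$ match the paper's conventions.
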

Here we assume that $\dot{p}= s H_{x}$, where $s_{i}(t)=\pm 1$. This is one of the characteristic equation of the Hamilton-Jacobi-Bellman equations (see Section~\ref{appendix:control} in the Appendix) where $s_{i}(t)=1$. Interestingly, when flipping $s_{i}(t)$, the corresponding equation gives rise to an important system dynamical behavior based on dissipation which favor convergence. 
\iffalse
Furthermore we will assume in order to simplify notation that there is
no regularization terms on the weights ($\Omega\equiv0$).
With this simplifications Hamilton's equations become:
\begin{equation}
\left\{\begin{aligned}
&\dot\neuron_i(t)=
\alpha_i[-\neuron_i(t)+\sigma(a_i(t))];\\
&\dot\weight_{ij}(t)=-\frac{\psi_{ij}^2(t)}{m_{ij}(t)}\pweight_{ij}(t);\\
&\begin{aligned}
\dot\pneuron_i(t)=& -\phi_{i}(t) V_{\neuron_i}(\neuron(t), t)+\alpha_i
\pneuron_i(t)\\
&\quad{}-\sum\nolimits_k
\alpha_k\sigma'\bigl(a_k(t)\bigr) \pneuron_k(t)\weight_{ki}(t);
\end{aligned}\\
&\dot\pweight_{ij}(t)=
-\alpha_i\sigma'\bigl(a_i(t)\bigr)\pneuron_i\xi_j.
\end{aligned}\right.
\label{rnn-costate-eq}
\end{equation}
\fi
The optimal solution can be obtained by solving the above ODE under the boundary conditions:
\begin{equation}
    x^{\star}(0) = x_{0}, \ \ p^{\star}(T) = p_{T} = 0.
\end{equation}

%quiinizio
%%%%%%%%%%%%%%%%%%%%%%%%%%%%%%%%%%%%%%%%%%%%%%%%%%%%%%%%%%%%%%%%
Now we show that we can provide a simple straightforward description of the  evolution of the weights. We begin with the introduction of the another dissipation parameter $\theta_{ij}$ to facilitate such a description.  
\begin{definition}
    The real number 
    \begin{equation}
        \theta_{ij}:= \frac{\dot{m}_{ij}}{m_{ij}} 
        - 2 \frac{\dot{\psi}_{ij}}{\psi_{ij}}
    \end{equation}
    is referred to as a {\em joint dissipation factor}.
\label{ThetaDef}
\end{definition}
\noindent Here, ``joint'' means that, like for $\beta_{ij}$ also $\theta_{ij}$ involves dissipation that arises from both the model and the Lagrangian. 
Interestingly $\theta_{ij}$ and $\beta_{ij}$ are intimately related. 
\begin{lemma}
The dissipation parameter $\beta_{ij}$ evolves according to 
\begin{equation}
    \beta_{ij}(t) = \beta_{ij}(0)
    \cdot \exp{-\int_{0}^{t} \theta_{ij}(s) ds}
\end{equation}
\label{MonotinicityLemma}
\end{lemma}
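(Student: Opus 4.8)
\section*{Proof proposal for Lemma~\ref{MonotinicityLemma}}

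The plan is to recognize that $\beta_{ij}(t)=\psi_{ij}^2(t)/m_{ij}(t)$ satisfies a scalar linear ODE whose coefficient is exactly $-\theta_{ij}$, and then integrate it. First I would assume (consistently with the standing hypotheses $m_{ij}(s)>0$ and the gating interpretation of $\psi_{ij}$) that $m_{ij}$ and $\psi_{ij}$ are positive and differentiable on $[0,T)$, so that $\log\beta_{ij}$ is well defined and differentiable there.

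Next I would perform logarithmic differentiation. Writing $\log\beta_{ij}(t)=2\log\psi_{ij}(t)-\log m_{ij}(t)$ and differentiating in $t$ gives
\[
\frac{\dot{\beta}_{ij}(t)}{\beta_{ij}(t)}=2\,\frac{\dot{\psi}_{ij}(t)}{\psi_{ij}(t)}-\frac{\dot{m}_{ij}(t)}{m_{ij}(t)}=-\theta_{ij}(t),
\]
where the last equality is just Definition~\ref{ThetaDef}. Hence $\beta_{ij}$ solves the linear homogeneous ODE $\dot{\beta}_{ij}=-\theta_{ij}\,\beta_{ij}$.

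Finally I would integrate: since $\frac{d}{dt}\log\beta_{ij}(t)=-\theta_{ij}(t)$, integrating from $0$ to $t$ yields $\log\beta_{ij}(t)-\log\beta_{ij}(0)=-\int_0^t\theta_{ij}(s)\,ds$, i.e.
\[
\beta_{ij}(t)=\beta_{ij}(0)\,\exp\!\Big(-\!\int_0^t\theta_{ij}(s)\,ds\Big),
\]
which is the claim. Alternatively, one checks directly that the right-hand side solves the ODE with the correct initial value and invokes uniqueness for linear scalar ODEs.

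There is essentially no hard step here; the computation is routine once the linear ODE is identified. The only genuine point of care is the regularity/positivity of $\psi_{ij}$ and $m_{ij}$ needed to justify taking logarithms and differentiating — for $m_{ij}$ this is already assumed, and for $\psi_{ij}$ it should be stated as a (mild) standing hypothesis on the gating functions; if $\psi_{ij}$ were allowed to vanish one would instead argue on each interval where it has constant sign, or phrase the result directly in terms of the ODE $\dot{\beta}_{ij}=-\theta_{ij}\beta_{ij}$ with $\theta_{ij}$ defined wherever $\psi_{ij}\neq 0$.
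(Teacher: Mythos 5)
Your proposal is correct and follows essentially the same route as the paper: both derive the linear ODE $\dot{\beta}_{ij}=-\theta_{ij}\beta_{ij}$ from Definition~\ref{ThetaDef} (the paper by direct differentiation of $\psi_{ij}^2/m_{ij}$, you via logarithmic differentiation, a cosmetic difference) and then integrate. Your remark on the positivity/regularity of $\psi_{ij}$ and $m_{ij}$ is a reasonable extra precaution but does not change the argument.
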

\begin{proof}
The proof can be promptly driven
from the above the definition of $\beta_{ij}$. We have
\begin{equation}
    \dot{\beta}_{ij}
    = \frac{d}{dt} \frac{\psi_{ij}^{2}}{m_{ij}}
    = \frac{\psi^2_{ij}}{m_{ij}}
    \left(2\frac{\dot{\psi}_{ij}}{\psi_{ij}}
    - \frac{\dot{m}_{ij}}{m_{ij}}\right)= 
    -\frac{\psi^2_{ij}}{m_{ij}}\theta_{ij}
    =-\beta_{ij} \theta_{ij}.
\label{beta-theta-ODE}
\end{equation}
Then the thesis arises directly from the integration of the ODE.
\end{proof}
\noindent This leads to the following important result:
\begin{corollary}
    If $\theta_{ij}(t)>0$ then 
    $\dot{\beta}_{ij}(t) <0$ and $\beta_{ij}(t)>0$.
\label{beta_pos_dec}
\end{corollary}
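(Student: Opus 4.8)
The plan is to read everything off Lemma~\ref{MonotinicityLemma} and equation~\eqref{beta-theta-ODE}, both already in hand; the corollary is essentially a bookkeeping consequence, so the work amounts to assembling the pieces in the right order.

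First I would establish strict positivity of $\beta_{ij}$. By Lemma~\ref{MonotinicityLemma} we have the closed form $\beta_{ij}(t) = \beta_{ij}(0)\exp\bigl(-\int_0^t \theta_{ij}(s)\,ds\bigr)$. The initial value satisfies $\beta_{ij}(0) = \psi_{ij}^2(0)/m_{ij}(0) > 0$, since $m_{ij}(0)>0$ by the standing assumption on the masses and $\psi_{ij}(0)\neq 0$, the latter being forced already by Definition~\ref{ThetaDef}, where $\theta_{ij}$ involves $\dot\psi_{ij}/\psi_{ij}$ and hence requires $\psi_{ij}$ to be nonvanishing. The exponential factor is strictly positive for every $t$. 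Therefore $\beta_{ij}(t)>0$ for all $t>0$; note this step does not use the sign of $\theta_{ij}$ at all.

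Second, I would invoke equation~\eqref{beta-theta-ODE}, namely $\dot\beta_{ij}(t) = -\beta_{ij}(t)\,\theta_{ij}(t)$. If $\theta_{ij}(t)>0$ at the instant $t$ in question, then, combining with $\beta_{ij}(t)>0$ from the previous step, the right-hand side is the product of a strictly positive number and a strictly negative one, so $\dot\beta_{ij}(t)<0$. This closes the argument.

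There is no genuine obstacle here: the statement is a direct corollary of the lemma. The only point deserving a moment of care is that $\beta_{ij}(0)\neq 0$, so that the exponential representation gives strict rather than merely weak positivity; as noted, this is guaranteed because $1/\psi_{ij}$ appears in the hypotheses (via $\theta_{ij}$), forcing $\psi_{ij}$ nonvanishing, together with $m_{ij}>0$ throughout $[0,T]$. Everything else is an immediate sign inspection of $-\beta_{ij}\theta_{ij}$.
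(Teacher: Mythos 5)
Your proof is correct and follows essentially the paper's own route: the paper derives this corollary directly from Lemma~\ref{MonotinicityLemma} (the exponential representation and the ODE $\dot\beta_{ij}=-\beta_{ij}\theta_{ij}$ of Eq.~\ref{beta-theta-ODE}), exactly as you do. Your added remark on why $\beta_{ij}(0)>0$ (positivity of $m_{ij}$ and nonvanishing $\psi_{ij}$) is a sensible piece of care that the paper leaves implicit.
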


\iffalse
\subsection{Convergence and generalization: $w_{ij}$ and $p_{ij}$}
Here we analyze the system dynamics in terms of the component of the state $w_{ij}$ and discuss the relationships with the corresponding co-state $p_{ij}$. We begin stating the driving result which leads to energy balance issues.
\fi
\begin{proposition} {\bf Second-order weights evolution}\\
    The Hamiltonian evolution dictated by Eqs.~\ref{rnn-costate-eq} leads to the following second-order direct interpretation of their dynamics 
    \begin{equation}
        \ddot{w}_{ij}+ \theta_{ij}\dot{w}_{ij}
        + \beta_{ij} \dot{p}_{ij} = 0.
    \label{eq:second-order}   
    \end{equation}
\label{Second_Order_w_prop}
\end{proposition}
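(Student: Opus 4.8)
The plan is to start from Hamilton's equation \textbf{[ii]} in \eqref{rnn-costate-eq}, namely $\dot{w}_{ij}(t) = -\beta_{ij}(t) p_{ij}(t)$, and simply differentiate it once more in time. This produces $\ddot{w}_{ij} = -\dot\beta_{ij} p_{ij} - \beta_{ij}\dot p_{ij}$, so the whole argument reduces to rewriting the term $-\dot\beta_{ij}p_{ij}$ in terms of the quantities appearing in \eqref{eq:second-order}.

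The key step is to invoke the ODE for $\beta_{ij}$ already established in the proof of Lemma~\ref{MonotinicityLemma}, i.e. equation \eqref{beta-theta-ODE}: $\dot\beta_{ij} = -\beta_{ij}\theta_{ij}$. Substituting this gives $-\dot\beta_{ij}p_{ij} = \beta_{ij}\theta_{ij}p_{ij}$. Then I would use \textbf{[ii]} once more, now read backwards as $\beta_{ij}p_{ij} = -\dot{w}_{ij}$, to replace $\beta_{ij}\theta_{ij}p_{ij}$ by $-\theta_{ij}\dot{w}_{ij}$. Collecting terms yields $\ddot{w}_{ij} = -\theta_{ij}\dot{w}_{ij} - \beta_{ij}\dot p_{ij}$, which is exactly \eqref{eq:second-order} after moving everything to one side.

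There is essentially no genuine obstacle here: the computation is a two-line chain rule plus one substitution, and all ingredients (the expression for $\dot{w}_{ij}$, the definition of $\theta_{ij}$, and the relation $\dot\beta_{ij} = -\beta_{ij}\theta_{ij}$) are already available in the excerpt. The only point requiring a word of care is a regularity remark — one should note that $\beta_{ij}$ and $p_{ij}$ are differentiable on $(0,T)$ so that $\dot{w}_{ij}$ from \textbf{[ii]} is itself differentiable, justifying the second differentiation; this follows from the standing smoothness assumptions on the gating functions $m_{ij},\psi_{ij}$ (hence on $\beta_{ij}$) and from Hamilton's equation \textbf{[iv]} giving a differentiable $p_{ij}$. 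I would state this briefly and then present the three-line calculation.

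\begin{proof}
By Hamilton's equation \textbf{[ii]} in~\eqref{rnn-costate-eq} we have $\dot{w}_{ij}(t) = -\beta_{ij}(t)\,p_{ij}(t)$ on $(0,T)$. Since $\beta_{ij} = \psi_{ij}^2/m_{ij}$ is differentiable and $p_{ij}$ is differentiable by \textbf{[iv]}, we may differentiate once more:
\begin{equation}
    \ddot{w}_{ij} = -\dot{\beta}_{ij}\, p_{ij} - \beta_{ij}\, \dot{p}_{ij}.
\end{equation}
By~\eqref{beta-theta-ODE} in the proof of Lemma~\ref{MonotinicityLemma}, $\dot{\beta}_{ij} = -\beta_{ij}\theta_{ij}$, so that $-\dot{\beta}_{ij}p_{ij} = \beta_{ij}\theta_{ij}p_{ij}$. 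Using \textbf{[ii]} again in the form $\beta_{ij}p_{ij} = -\dot{w}_{ij}$ gives $\beta_{ij}\theta_{ij}p_{ij} = -\theta_{ij}\dot{w}_{ij}$, whence
\begin{equation}
    \ddot{w}_{ij} = -\theta_{ij}\,\dot{w}_{ij} - \beta_{ij}\,\dot{p}_{ij},
\end{equation}
which is~\eqref{eq:second-order}.
\end{proof}
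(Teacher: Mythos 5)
Your proof is correct and follows exactly the route the paper takes: differentiate Hamilton's equation [ii] of Eq.~\ref{rnn-costate-eq} in time and substitute $\dot\beta_{ij}=-\beta_{ij}\theta_{ij}$ from Eq.~\ref{beta-theta-ODE} (together with [ii] again) to obtain Eq.~\ref{eq:second-order}. Your added remark on the differentiability of $\beta_{ij}$ and $p_{ij}$ is a harmless refinement of the same argument.
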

\begin{proof}
The proof comes straightforwardly if we differentiate  with respect to time of the second of the Hamilton equations~\ref{rnn-costate-eq} when using Eq.~\ref{beta-theta-ODE} and Definition~\ref{ThetaDef}.
\end{proof}

%%%%%%%%%%%%%%%%%%%%%%%%%%%%%%%%%%%%%%%%%%%%%%%%%%%%%%%%%%%%%%%%%
%quifine

\noindent \emph{\sc Generalized Gradient} $\nabla_{\zeta}^{s}$\\
Eq.~\ref{rnn-costate-eq} -[ii] and eq.~\ref{eq:second-order} gives a straightforward expression for the evolution of the weights.  However, while Eq.~\ref{rnn-costate-eq} -[ii] connects the evolution of the weights to $p_{ij}$,  eq.~\ref{eq:second-order} yields a direct involvement of $\dot{p}_{ij}$. Interestingly, eq.~\ref{rnn-costate-eq} -[iii] and eq.~\ref{rnn-costate-eq} -[iv] can be thought of a {\em generalized gradients} based on loss $V$. In the first one, if we introduce $\nabla_{\zeta}^{s} V_{[0,t)}$ such that $\dot{w}_{ij} = - \beta_{ij} \nabla_{\zeta}^{s} V$ we can interpret $\nabla_{\zeta}^{s} V$ as the the driving gradient of the system dynamics. Moreover, this gradient-based interpretation leads to conclude that  
\begin{equation}
    \nabla_{\zeta}^{s} V_{[0,t)} := \int_{0}^{t} \big(s \odot \omega \odot p_{\xi} \odot \xi\big)(\tau) d\tau.
\end{equation}
Since, the system dynamics of eq.~\ref{rnn-costate-eq} -[iv] involves $\dot{p}_{ij}$ the corresponding gradient descent interpretation arises when the defining the generalized gradient by 
\begin{equation}
    \nabla_{\zeta}^{s} V (t) :=  \big(s \odot \omega \odot p_{\xi} \odot \xi\big)(t)
\end{equation}
\\
~\\
\noindent \emph{\sc Hamiltonian Learning and Spatiotemporal Locality}\\
The system dynamics defined by Eq.~(\ref{rnn-costate-eq}) exhibits \textit{locality both in time and space}, a property which has given rise to a longstanding discussion in the community of Machine Learning. The learning algorithms for recurrent neural networks are in fact mostly clustered under two different algorithmic frameworks that come from Backpropagation Through Time (BPTT) and Real Time Recurrent Learning (RTRL)
\cite{RumMcC86, williams1989learning}. Interestingly, BPTT is local in space, but not in time, whereas RTRL is local in time but not in space.
However, the appeal of spatiotemporal locality which is possessed by Eq.~\ref{rnn-costate-eq} will become relevant only when we will provide evidence that we can solve those ODE equations under Cauchy conditions according to the approximation stated by a causal agent.

\section{Co-state heuristics for learning $\zeta$}
Here we discuss some fundamental necessary conditions that we need to guarantee to allow an intelligent agent to learn. Basically, we need to upper bound both the state and co-state dynamics, which is a well-known problem in the optimal control theory. 

\subsection{Boundedness of the state $\xi_i$}
We begin stating a proposition on the BIBO stability of the recurrent neural network model described by ODE~\ref{eq:neural-model} which comes from a classic result of System Theory stated in the following Lemma.
\begin{lemma}
    Let is consider the ODE
    \begin{equation}
        \dot{x}(t) + \alpha(t) x(t) =  u(t).
    \label{LinODE}
    \end{equation}
    Then the solution can be expressed as
    \begin{equation}
        x(t) = x(0) \exp{\Big(-\int_{0}^{t} \alpha(\tau) d\tau}\Big)
        + \int_{0}^{t} \exp\Big(-\int_{s}^{t}\alpha(\tau) d\tau \Big)  u(s) ds 
    \end{equation}
\label{BoundednessLemma}
\end{lemma}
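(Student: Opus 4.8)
The plan is to solve the linear first-order ODE~\eqref{LinODE} by the classical integrating-factor method. First I would introduce the integrating factor
\[
\mu(t) := \exp\Big(\int_{0}^{t}\alpha(\tau)\,d\tau\Big),
\]
which is well defined and strictly positive (assuming $\alpha$ integrable on the relevant interval), and satisfies $\dot\mu(t) = \alpha(t)\mu(t)$ with $\mu(0)=1$. Multiplying both sides of $\dot x + \alpha x = u$ by $\mu(t)$ and using the product rule, the left-hand side collapses to a total derivative:
\[
\frac{d}{dt}\big(\mu(t)x(t)\big) = \mu(t)\dot x(t) + \dot\mu(t)x(t) = \mu(t)\big(\dot x(t) + \alpha(t)x(t)\big) = \mu(t)u(t).
\]

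Next I would integrate this identity from $0$ to $t$, obtaining $\mu(t)x(t) - \mu(0)x(0) = \int_{0}^{t}\mu(s)u(s)\,ds$, hence
\[
x(t) = \frac{1}{\mu(t)}x(0) + \frac{1}{\mu(t)}\int_{0}^{t}\mu(s)u(s)\,ds.
\]
It then remains to rewrite the quotients of exponentials. Since $1/\mu(t) = \exp\big(-\int_{0}^{t}\alpha(\tau)\,d\tau\big)$ and $\mu(s)/\mu(t) = \exp\big(\int_{0}^{s}\alpha - \int_{0}^{t}\alpha\big) = \exp\big(-\int_{s}^{t}\alpha(\tau)\,d\tau\big)$, substituting these into the formula yields exactly the claimed expression for $x(t)$.

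The only thing to be careful about is the bookkeeping with the limits of integration in the exponentials — getting the signs right in $\mu(s)/\mu(t)$ — but there is no genuine obstacle: this is a textbook variation-of-constants computation, and the mild regularity needed (local integrability of $\alpha$, say, so that $\mu$ is absolutely continuous) is implicitly supplied by the setting. If one prefers, one can also verify the formula directly by differentiating the stated right-hand side and checking it satisfies~\eqref{LinODE} with the correct value at $t=0$, which sidesteps even the integrating-factor motivation.
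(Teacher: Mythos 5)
Your proposal is correct and follows essentially the same route as the paper: the paper also introduces the integrating factor $\exp\big(\int_{0}^{s}\alpha(\tau)\,d\tau\big)$, rewrites the left-hand side as a total derivative, integrates over $[0,t]$, and simplifies the ratio of exponentials to $\exp\big(-\int_{s}^{t}\alpha(\tau)\,d\tau\big)$. No gaps; the bookkeeping you flag is exactly the step the paper carries out explicitly.
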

\begin{proof}
    Let us define the  integrating factor
    \[
        I(s):= \exp\Big(\int_{0}^{s} \alpha(\tau) d\tau\Big).
    \]
    If we multiply both sides of ODE~(\ref{LinODE}) we get
    \[
        I(s)\dot{x}(s) + \alpha(s) I(s) x(s) =  I(s)u(s).
    \]
    Now we have $D(I(s)x(s)) = x(s)\dot{I}(s)+I(s)\dot{x}(s)$ and then we get
    \[
        D(I(s)x(s)) -  x(s)\dot{I}(s)+ \alpha(s) I(s) x(s) =  I(s)u(s).
    \]
    Now we have $\dot{I}(s) = \alpha(s) I(s)$ and, therefore, 
    if we integrate over $[0,t]$ we have
    \[
        I(t)x(t) - I(0)x(0) = \int_{0}^{t} I(s)u(s) ds
    \]
    and, finally,
    \begin{align*}
        x(t) &= x(0)I^{-1}(t) + \int_{0}^{t} I^{-1}(t) I(s)u(s) ds\\
        &= x(0) \exp\Big(-\int_{0}^{t} \alpha(\tau) d\tau \Big)
        + \int_{0}^{t} I^{-1}(t) I(s)u(s) ds\\
        &= x(0) \exp\Big(-\int_{0}^{t} \alpha(\tau) d\tau \Big)
        + \int_{0}^{t} \exp\Big(-\int_{0}^{t}\alpha(\tau) d\tau 
        +\int_{0}^{s}\alpha(\tau) d\tau \Big) u(s) ds\\
        &= x(0) \exp\Big(-\int_{0}^{t} \alpha(\tau) d\tau \Big)
        + \int_{0}^{t} \exp\Big(-\int_{s}^{t}\alpha(\tau) d\tau  \Big) u(s) ds
    \end{align*}
\end{proof}
We can now promptly use this lemma for the BIBO stability of the recurrent neural network.
\begin{proposition} {\bf BIBO stability of the state} $\xi_i$\\
    Let us assume that 
    $\forall t \in [0,T], \forall i \in \bar{\cal V}: 
    \ \alpha_{i}(t) > \alpha > 0$. Then
    the recurrent neural network described by ODE~(\ref{eq:neural-model}) is BIBO stable.
\label{rnn_bibo_prop}
\end{proposition}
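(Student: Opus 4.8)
Here is how I would prove Proposition~\ref{rnn_bibo_prop}.

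\medskip

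\noindent\textbf{Plan.} The idea is to read the fifth line of~\eqref{eq:neural-model} as a scalar linear ODE for each $\xi_i$, $i\in\bar{\cal V}$, with a \emph{uniformly bounded} forcing term, and then to invoke Lemma~\ref{BoundednessLemma}. The whole point is that the output of the sigmoidal $\sigma$ is bounded, so no matter how the weights $w_{ij}$ or the other activations behave, the drive entering $\dot\xi_i$ is controlled.

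\medskip

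\noindent\textbf{Steps.} First, for an input node $i\in\mathcal I$ the model imposes $\xi_i(t)=u_i(t)$ directly, so this block of the state inherits boundedness from the bounded input stream and needs no argument. For $i\in\bar{\cal V}$, I would rewrite
\[
\dot\xi_i(t)+\alpha_i(t)\,\xi_i(t)=\alpha_i(t)\,\sigma\bigl(a_i(t)\bigr),
\]
which is exactly~\eqref{LinODE} with coefficient $\alpha_i$ and forcing $u(t):=\alpha_i(t)\sigma(a_i(t))$. Since $\sigma$ is sigmoidal, $|\sigma(\cdot)|\le M$ (with $M=1$ for $\sigma=\tanh$), hence $|u(t)|\le M\,\alpha_i(t)$ \emph{regardless} of the size of the weights or of $\xi_j$, $j\ne i$, because those quantities enter only inside the argument of $\sigma$. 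Applying Lemma~\ref{BoundednessLemma} gives the variation-of-constants representation of $\xi_i(t)$; I would then bound its transient part by $|\xi_i(0)|\exp(-\!\int_0^t\alpha_i)\le|\xi_i(0)|e^{-\alpha t}$ using the hypothesis $\alpha_i>\alpha>0$, and evaluate the forced part by noticing that $\tfrac{d}{ds}\exp(-\!\int_s^t\alpha_i(\tau)\,d\tau)=\alpha_i(s)\exp(-\!\int_s^t\alpha_i(\tau)\,d\tau)$, so that
\[
\int_0^t \exp\!\Bigl(-\!\int_s^t\!\alpha_i(\tau)\,d\tau\Bigr)\alpha_i(s)\,ds
=1-\exp\!\Bigl(-\!\int_0^t\!\alpha_i(\tau)\,d\tau\Bigr)\le 1 .
\]
This yields $|\xi_i(t)|\le|\xi_i(0)|e^{-\alpha t}+M\le|\xi_i(0)|+M$ for all $t$, i.e. the neuron states remain in a fixed bounded set (in fact $\limsup_t|\xi_i(t)|\le M$), which is the asserted BIBO stability.

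\medskip

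\noindent\textbf{Main obstacle.} Honestly there is little difficulty beyond the routine integrating-factor computation; the one thing that looks like a trap is the apparent circularity, namely that $a_i$ depends on all $\xi_j$ including the hidden ones whose boundedness is being established. This dissolves precisely because the forcing estimate $|u(t)|\le M\alpha_i(t)$ is uniform and contains no $\xi_j$, so the neurons decouple at the level of the bound and no Gr\"onwall-type coupling is required. I would also remark that the hypothesis $\alpha_i(t)>\alpha>0$ is not what produces boundedness (any $\alpha_i\ge0$ works, since the exponentials lie in $[0,1]$): its role is to force the initial transient to decay exponentially, consistently with the earlier stipulation that an ``activated'' neuron keeps $\alpha_i>0$.
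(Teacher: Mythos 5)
Your proof is correct and follows essentially the same route as the paper: rewrite the $\xi_i$-equation as a scalar linear time-varying ODE, apply the variation-of-constants representation of Lemma~\ref{BoundednessLemma}, and bound the drive through the boundedness of $\sigma$. The only substantive difference is in how the forcing is handled. You keep the factor $\alpha_i(s)$ in the forcing term, exactly as Eq.~(\ref{eq:neural-model}) dictates, and integrate $\alpha_i(s)\exp\bigl(-\int_s^t\alpha_i\bigr)$ exactly, obtaining $|\xi_i(t)|\le|\xi_i(0)|e^{-\alpha t}+M$; the paper instead applies the lemma with forcing $\mu_i(s)=\sigma(\cdot)$ alone (no $\alpha_i$ factor) and bounds the convolution by $B_\xi/\alpha$, which is why it presents the strict lower bound $\alpha_i(t)>\alpha>0$ as crucial and discusses $\alpha_i\equiv 0$ as a case where $\xi_i$ could accumulate $\int_0^t\mu_i$. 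Your closing remark --- that for the ODE as actually written any $\alpha_i\ge 0$ already gives the uniform bound, because $\alpha_i$ multiplies both the decay and the drive --- is accurate, and it is in fact consistent with the paper's own subsequent observation that $\alpha_i\equiv 0$ simply freezes $\xi_i$ at $\xi_i(0)$; in your version the hypothesis $\alpha_i>\alpha>0$ buys the exponential decay of the transient and the asymptotic estimate $\limsup_{t}|\xi_i(t)|\le M$, rather than boundedness per se.
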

\begin{proof}
    Let $\mu_{i}(t)=\sigma(\sum_{j}\weight_{ij}(t) x_{j}(t))$ be. Then, from the boundedness of $\sigma$ we can always find $B_{\xi} \in \mathbb{R}^{+}$ such that $|\mu_{i}(t)| < B_\neuron$. As a consequence we have
\begin{align}
\begin{split}
        |\xi_{i}(t)| &= \bigg|\xi_{i}(0) \exp\Big(-\int_{0}^{t} \alpha_{i}(\tau) d\tau\Big) 
        + \int_{0}^{t} \exp\Big(-\int_{s}^{t}\alpha_{i}(\tau) d\tau  \Big) \mu_{i}(s) ds \bigg|\\
        &\leq 
        |\xi_{i}(0)| + \int_{0}^{t} e^{-\alpha(t-s)} |\mu_{i}(s)| ds \\
        &=|\xi_{i}(0)| + \frac{B_\neuron}{\alpha} (1-e^{-t}) < \xi_{i}(0) + \frac{B_{\xi}}{\alpha}.
\end{split}
\end{align}
\end{proof}
Notice that the strict inequality condition $\alpha(t)>\alpha>0$ is crucially used in the proof. For example, in the extreme case in which $\alpha(t) \equiv 0$ we have  
$|\xi_{i}(t)| = |\int_{0}^{t} \mu_{i}(s) ds|$ that is not necessarily bounded in the case of bounded input. Interestingly, the specific dynamical structure of the system dynamics of $\xi_{i}$ does allow to end up into the BIBO property also in case in which the related function $\alpha_i(t) \equiv 0$. In that case, from equation~\ref{eq:neural-model} we straightforwardly conclude that $\xi_{t} \equiv \xi_{i}(0)$.  This is in fact very important since the learning policies on the dissipation parameters only involves to force $\alpha_i(t) = 0$. As already mentioned, this nicely matches neurobiological evidence on the presence of inhibited neurons. 

\subsection{Co-state boundedness and $\zeta$ heuristics}
In addition to bounding the state, clearly the system dynamics imposes also to bound the co-state. This also corresponds with the need of bounding the energy exchanged with the environment that, as it will be shown in the remainder of the paper, also involves the expression of the Hamiltonian. \\ 
~\\
\emph{\sc Analysis on} $p_{i}$\\
We are interested in analysing dynamical modes for which $d/dt \big((1/2) p_{i}^{2}\big)<0$. From eq.~\ref{rnn-costate-eq} we get 
\begin{align*}
\pneuron_i(t) \dot\pneuron_i(t)=& -s_{i}(t) \phi_{i}(t) V_{\neuron_i}(\neuron(t), t) \pneuron_i(t) +s_{i}(t) \alpha_i(t)
\pneuron_i^{2}(t)\\
&-s_{i}(t) p_{i}(t) \sum\nolimits_{\kappa \in {\rm ch}[i]}
\alpha_{\kappa}(t) \omega_{\kappa i}(t)\sigma'\Bigl(\sum\nolimits_{j \in {\rm pa}[\kappa]} \omega_{\kappa j}(t)
w_{\kappa j}(t) \neuron_j(t)\Bigr)  w_{\kappa i}(t) \pneuron_{\kappa}(t);
\end{align*}
We can promptly end up into the following proposition:
\begin{theorem}
    The co-state is upper-bounded $p_{i}(t)$ by choosing $\alpha_{i}(t) = 0$ and $\phi_{i}(t)=0$.
\label{p-boundedness-prop}
\end{theorem}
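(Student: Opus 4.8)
The plan is to start from the identity for $p_i(t)\dot p_i(t)$ displayed immediately above the statement, which is just Hamilton's equation~\ref{rnn-costate-eq}[iii] multiplied through by $p_i$. Imposing $\alpha_i(t)=0$ and $\phi_i(t)=0$ on the horizon annihilates the first two terms of that identity outright — the loss-forcing term $-s_i\phi_i V_{\xi_i}p_i$ disappears because $\phi_i=0$, and the self-feedback term $s_i\alpha_i p_i^2$ disappears because $\alpha_i=0$ — leaving only the coupling through the children,
\[
p_i(t)\dot p_i(t) = -\,s_i(t)\,p_i(t)\sum_{\kappa\in\mathrm{ch}[i]}\alpha_\kappa(t)\,\omega_{\kappa i}(t)\,\sigma'\!\Big(\sum_{j\in\mathrm{pa}[\kappa]}\omega_{\kappa j}(t)w_{\kappa j}(t)\xi_j(t)\Big)w_{\kappa i}(t)\,p_\kappa(t).
\]
Write $S_i(t)$ for the sum on the right. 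Since $s_i(t)\in\{-1,+1\}$ is a design parameter — precisely the sign-flip degree of freedom introduced right after Eqs.~\ref{rnn-costate-eq} — I would then choose $s_i(t)$ so that $s_i(t)\,p_i(t)\,S_i(t)\ge 0$; this forces $\frac{d}{dt}\big(\tfrac12 p_i^2(t)\big)=p_i\dot p_i\le 0$, hence $p_i^2(t)\le p_i^2(0)$ and $|p_i(t)|\le|p_i(0)|$ for all $t$, i.e.\ the co-state never leaves the ball fixed by its Cauchy value.

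Alternatively — and this is likely the quick reading the text has in mind when it says ``promptly'' — if the inhibition heuristic $\alpha\equiv 0$ is applied network-wide rather than only at node $i$, then $\alpha_\kappa(t)=0$ for every child as well, so $S_i\equiv 0$, Hamilton's equation collapses to $\dot p_i\equiv 0$, and $p_i(t)\equiv p_i(0)$ is constant, hence trivially bounded; by the same token $\xi_i(t)\equiv\xi_i(0)$ from ODE~\ref{eq:neural-model}, consistently with the remark after Proposition~\ref{rnn_bibo_prop}. A third, intermediate route keeps the children active but bounds $S_i$ directly: $\sigma'=1-\tanh^2\in(0,1]$ is bounded, the gating functions $\alpha_\kappa,\omega_{\kappa i}$ are bounded by construction, and if the children's co-states $p_\kappa$ and weights $w_{\kappa i}$ are assumed already under control (the heuristic having been applied to them, or the relevant subgraph being acyclic so one can argue leaf-first), then $|S_i(t)|\le M$ for a constant $M$ and direct integration over the finite horizon gives $|p_i(t)|\le|p_i(0)|+MT$.

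I expect the genuine obstacle to be exactly the children's term $S_i$: zeroing $\alpha_i$ and $\phi_i$ only disables the self-feedback and the loss forcing, but $p_i$ is still driven by the co-states of the units that neuron $i$ feeds into, and in a genuinely recurrent graph this coupling can be cyclic, so there is no free lunch without either exercising the sign-flip freedom in $s_i$ or promoting the statement to a network-wide one. I would therefore make explicit in the proof which of these is invoked; the cleanest is the sign-flip argument, since it yields the sharp bound $|p_i(t)|\le|p_i(0)|$ using only the two stated conditions plus the already-granted freedom in $s_i$, and it dovetails with the declared goal of exhibiting dynamical modes with $\frac{d}{dt}\big(\tfrac12 p_i^2\big)<0$.
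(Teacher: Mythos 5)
Your proposal is correct, and it is in fact more careful than what the paper offers: the paper gives no displayed proof at all, only the remark that the result is ``quite obvious'' because the two policy conditions disable the system dynamics and the Lagrangian potential --- which is precisely your second route, reading the inhibition $\alpha\equiv 0$ as applied to the relevant neurons so that, together with $\phi_i=0$, equation~\ref{rnn-costate-eq}-[iii] collapses to $\dot p_i\equiv 0$ and $p_i$ stays at its Cauchy value. Your diagnosis of the gap in the node-local reading is on target: zeroing $\alpha_i$ kills only the self-feedback term $s_i\alpha_i p_i^2$, while the coupling $\sum_{\kappa\in{\rm ch}[i]}\alpha_\kappa\omega_{\kappa i}\sigma'(\cdot)w_{\kappa i}p_\kappa$ is governed by the children's gates $\alpha_\kappa$, so the two stated conditions alone do not silence it; the paper glosses over this by implicitly treating the inhibition as network-wide. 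Your first route --- choosing the sign $s_i(t)$ so that $s_i p_i S_i\ge 0$, giving $\tfrac{d}{dt}\tfrac12 p_i^2\le 0$ and the sharp bound $|p_i(t)|\le|p_i(0)|$ --- is a genuinely different argument that the paper does not make: it buys a node-local statement without touching the children's gates, at the price of spending the sign-flip freedom in $s_i$, which is a design parameter the theorem's hypotheses do not mention (though the surrounding text does grant it). If you retain that route, state explicitly that $s_i$ is being used as an additional control; otherwise, state that $\alpha_\kappa=0$ is imposed on ${\rm ch}[i]$ (or network-wide), which is what the paper's one-line justification tacitly assumes.
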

This is quite an obvious result since the policy conditions correspond with disabling the system dynamics and the potential of the Lagrangian term. We notice in passing that the disabling dynamical condition of the system dynamics $\alpha_{i}(t) \equiv 0$ could be replaced with $m_{i}(t) \equiv 0$ to continue to guarantee the boundedness of the co-state. This is in fact a straightforward consequence of the nullification of the value function~\ref{vf_def}, which leads to an ill-posed problem. Interestingly, such a conclusion does not arise from Proposition~\ref{p-boundedness-prop} which is inherently considering system dynamics that is consistent with the given optimization problem. 

Now let us analyze conditions to guarantee that $d/dt \big((1/2) p_{i}^{2}\big)<0.$
This can be guaranteed by acting on the $\zeta$ dissipation parameters.
Let us distinguish the case $i \in {\cal O}$ with respect to $i \in {\cal H}$.
\begin{itemize}
\item $i \in {\cal O}$: 
In this case the above decrement term can satisfied by enforcing: 
\begin{equation*}
\omega_{ii}(t) w_{ij}(t)
< \frac{s_{i}(t) +  \phi_{i}(t) V_{\neuron_i}(\neuron(t), t) \pneuron_i(t)}{\alpha_i(t) \pneuron_i^{2}(t) \big(1-
\sigma'\big(a_{i}(t)\big)} 
\end{equation*}
When considering the underlying assumption that $i$ is enabled $\alpha_{i}(t)>0$ we can promptly see that this condition can be satisfied by any appropriate choice of the dissipation parameter $\omega_{ii}$. Of course, the satisfaction of the condition can also jointly obtained when considering the other dissipation functions $\alpha_i$ and $\phi_i$. 

\item $i \in {\cal H}$: In this case the decrement of the co-state requires the satisfaction of 
\[
\sum\nolimits_{\kappa \in {\rm ch}[i]}
    \big[\alpha_{\kappa}(t) \sigma'\big(a_{k}(t)\big)  w_{\kappa i}(t)  p_{i}(t)  \pneuron_{\kappa}(t)\big] \cdot \omega_{\kappa i}(t) > s_{i}(t)
\]
Like for the case $i \in {\cal O}$ this condition can always be met by an appropriate choice of the dissipation parameter $\omega_{ij}$.
\end{itemize}
Now let us define $\forall t \in [0,+\infty): \ {\rm p}_{i}(\zeta,t):=p_{i}(t)$, which makes the dependence on $\zeta$ explicit. Moreover, let us consider the square of the overall unit co-state $p_{\xi}^{2}:= \sum_{i} p_{i}^{2}$.
Hence, we can summarize this analysis by stating the following theorem:
\begin{theorem}
    For any $t \in [0,\infty)$, there always exists a choice of the $\zeta(t)$ dissipation parameters such that
    \[
    \frac{d}{dt} \frac{1}{2}{\rm p}_{\xi}^{2}(\zeta,t) <0
    \]
\label{px-decrement-th}
\end{theorem}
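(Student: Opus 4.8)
First I would differentiate directly: since ${\rm p}_\xi^2(\zeta,t)=\sum_{i\in\bar{\cal V}} p_i^2(t)$, we have $\frac{d}{dt}\tfrac12{\rm p}_\xi^2=\sum_{i\in\bar{\cal V}} p_i\dot p_i$, and plugging in $\dot p_i$ from the co-state equation~\ref{rnn-costate-eq}-[iii] reproduces exactly the per-unit expression for $p_i\dot p_i$ displayed just before the statement. So the task reduces to choosing a single $\zeta(t)=\{\alpha,\psi,\omega;m,\phi\}(t)$ that makes every term $p_i\dot p_i$ either strictly negative or, when it is forced to, zero, with at least one term strictly negative. I would split $\bar{\cal V}={\cal O}\cup{\cal H}$ and handle the two families with the sufficient conditions already established above: for $i\in{\cal O}$, solve the displayed inequality for the self-loop gate $\omega_{ii}(t)$, using $\alpha_i(t),\phi_i(t)$ as backup parameters; for $i\in{\cal H}$, satisfy the linear inequality $\sum_{\kappa\in{\rm ch}[i]}[\alpha_\kappa\sigma'(a_\kappa)w_{\kappa i}p_i p_\kappa]\,\omega_{\kappa i}>s_i$ by tuning the outgoing-weight gates $\{\omega_{\kappa i}(t)\}_{\kappa\in{\rm ch}[i]}$.

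\noindent What makes the naive aggregation legitimate is a bookkeeping observation: the gate $\omega_{\kappa i}$ occurs in the \emph{explicit}, linear-in-$\omega$ part of $\dot p_m$ only when $m=i$, so the collections of ``own'' free parameters used to fix distinct units' terms are pairwise disjoint; hence the per-unit choices do not clobber one another and, adding the resulting inequalities over $i\in\bar{\cal V}$, we obtain $\sum_i p_i\dot p_i<0$, which is the assertion. Along the way I would flag the benign degeneracies: a unit with $p_i(t)=0$ contributes $0$ and needs no tuning, and if ${\rm p}_\xi(t)=0$ the derivative is identically zero, so the statement tacitly assumes $(\xi(t),w(t),p(t))$ is non-degenerate in the sense that not all the coefficients entering the per-unit inequalities vanish; for any unit where they do, one simply falls back on Proposition~\ref{p-boundedness-prop} (set $\alpha_i=\phi_i=0$) and leaves that contribution at zero.

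\noindent The main obstacle is the coupling I suppressed above: $\omega_{\kappa i}$ also sits \emph{inside} the activation $a_\kappa=\sum_j\omega_{\kappa j}w_{\kappa j}\xi_j$, hence inside $\sigma(a_\kappa)$ and $\sigma'(a_\kappa)$, so choosing it to fix unit $i$'s term perturbs $\dot p_m$ — and the very coefficients of its defining inequality — for every $m\in{\rm pa}[\kappa]$. I would close this with a magnitude/dominance argument combined with an ordering: each per-unit inequality becomes satisfied once $|\omega_{\kappa i}|$ (resp. $|\omega_{ii}|$) is large enough in the correct direction, whereas $\sigma,\sigma'$ are bounded and $\xi,w,p$ stay bounded on compact time windows (cf. the BIBO estimate of Proposition~\ref{rnn_bibo_prop}), so the linear ``explicit'' contribution of the gate dominates its own bounded ``implicit'' feedback; fixing the units one at a time in a reverse-topological order of ${\cal G}$ restricted to $\bar{\cal V}$ then lets the residual cross-effects be absorbed into the strict slack of each inequality. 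The robustness of this last step — in particular whether the dominance genuinely survives when several gates on arcs into the same $\kappa$ are tuned simultaneously — is the part I expect to require the most care, and it is also where the paper's informal ``can always be met by an appropriate choice'' is doing real work.
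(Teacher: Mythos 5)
You take essentially the same route as the paper: its own ``proof'' of Theorem~\ref{px-decrement-th} is nothing more than the per-unit analysis that precedes it --- expand $\frac{d}{dt}\tfrac12 p_{\xi}^{2}=\sum_{i}p_{i}\dot p_{i}$ via eq.~\ref{rnn-costate-eq}-[iii], impose the displayed sufficient inequalities for $i\in{\cal O}$ and $i\in{\cal H}$, assert that each ``can always be met by an appropriate choice'' of $\omega$ (with $\alpha_{i},\phi_{i}$ as auxiliary knobs), and sum. Your reduction, your disjointness bookkeeping, and your degeneracy caveat (at an instant where $p_{\xi}(t)=0$, or where every coefficient $\alpha_{\kappa}\sigma'(a_{\kappa})w_{\kappa i}p_{i}p_{\kappa}$ available to a unit vanishes, no choice of $\zeta$ can give a strict decrease) are, if anything, more explicit than what the paper provides; the paper tacitly assumes nondegeneracy and never discusses the coupling you identify.

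The one step of yours that would fail is the dominance device used to close that coupling. You argue each per-unit inequality is met ``once $|\omega_{\kappa i}|$ is large enough in the correct direction'' because the gate's explicit contribution is linear while its implicit feedback is bounded. But the explicit contribution is $\alpha_{\kappa}\sigma'(a_{\kappa})w_{\kappa i}p_{i}p_{\kappa}\,\omega_{\kappa i}$ with $a_{\kappa}$ itself affine in $\omega_{\kappa i}$: for a sigmoidal $\sigma$ the factor $\sigma'(a_{\kappa})$ decays as the activation saturates, so $\sigma'(a_{\kappa})\,\omega_{\kappa i}$ is bounded in $\omega_{\kappa i}$ and in fact tends to $0$ as $|\omega_{\kappa i}|\to\infty$ whenever $w_{\kappa i}\xi_{i}(t)\neq 0$; taking the gate large buys nothing, and ``linear dominates bounded'' collapses. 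The robust mechanism is the opposite one: since $\sigma'>0$ everywhere, the sign of each summand equals the sign of $\omega_{\kappa i}w_{\kappa i}p_{i}p_{\kappa}$ regardless of how $\omega$ feeds back into $a_{\kappa}$ or into other units' coefficients, so gates of small magnitude and correct sign, combined with $\alpha_{i}(t)=0$ and $\phi_{i}(t)=0$ at the unit itself (as in Theorem~\ref{p-boundedness-prop}) to remove the adverse $s_{i}\alpha_{i}p_{i}^{2}$ term, make that unit's contribution nonpositive, and strictly negative as soon as one child has $\alpha_{\kappa}w_{\kappa i}p_{i}p_{\kappa}\neq 0$; this is presumably what the paper's ``appropriate choice'' amounts to. Even so, note the joint-consistency tension neither you nor the paper fully resolves: zeroing $\alpha_{i}$ at a hidden unit removes that unit from its parents' sums, so one cannot in general enforce all per-unit strict inequalities simultaneously --- only the aggregate condition on $p_{\xi}$, under your nondegeneracy proviso, which is exactly the weakened reading the paper's closing remark gestures at.
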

The space of possible choices of $\zeta$ previously discussed to satisfy the above condition clearly increase when considering the whole space of parameters and the joint condition on $p_{\xi}$.

%%%%%%%%%%%%%%%%%%%%%%%%%%%%%%%%%%%%%%%%%%%%%%%%%%%%%%%%%%%%%%%%%%%%%%%%%%%%%%%%
% $p_{ij}$ analysis
%%%%%%%%%%%%%%%%%%%%%%%%%%%%%%%%%%%%%%%%%%%%%%%%%%%%%%%%%%%%%%%%%%%%%%%%%%%%%%%%
~\\
\emph{\sc Analysis on} $p_{ij}$\\
Let us consider the evolution of $p_{ij}$ expressed by
\[
    \dot{p}_{ij}(t) = - s_{i}(t) \alpha_{i}(t) \omega_{ij}(t) \sigma^{\prime}(a_{i}(t)) p_{i}(t) \xi_{j}(t).
\]
We can carry out the dual analysis shown for $p_{i}$ by considering the evolution of $(1/2) p_{ij}^{2}$. We have
\[
    p_{ij}(t)\dot{p}_{ij}(t) = - s_{i}(t) \alpha_{i}(t) \omega_{ij}(t) \sigma^{\prime}(a_{i}(t)) p_{ij}(t) p_{i}(t) \xi_{j}(t)
\]
and, therefore, 
\[
    \frac{d}{dt} \frac{1}{2}p_{ij}^{2}(t) = -  \alpha_{i}(t) \sigma^{\prime}(a_{i}(t)) [ p_{ij}(t)] \cdot [ s_{i}(t) \omega_{ij}(t)  p_{i}(t) \xi_{j}(t)].
\]
We can enforce the decrement by imposing 
\[
    p_{ij}(t) \cdot s_{i}(t)  \omega_{ij}(t) p_{i}(t) \xi_{j}(t) >0.
\]
Like in the case of state $xi$, if we relax the satisfaction for all arcs and simply consider $p_{w}^{2} = \sum_{ij} p_{ij}^{2}$ then we end up in the following condition: 
\begin{equation}
    \left\langle p_{w}, s \odot \omega  \odot p_{\xi} \odot \xi  \right \rangle(t) >0.
\end{equation}

We can summarize the previous analysis in the following theorem.
\begin{theorem}
    The decrement of the co-state kinetic term obeys the following condition: 
    \[
    \frac{d}{dt} \frac{1}{2}p_{w}^{2}(t) < 0 \leftrightarrow 
    \left\langle  p_{w}, \nabla_{\zeta}^{s} V  \right \rangle(t) >0
    \]
\label{pw-decrement-th}
\end{theorem}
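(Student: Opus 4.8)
The plan is to obtain the equivalence by a direct differentiation of the co-state kinetic term, in exact parallel with the analysis just carried out for $p_i$ (Theorem~\ref{px-decrement-th}). First I would write $\frac{d}{dt}\frac{1}{2}p_w^2(t) = \sum_{(i,j)\in{\cal A}} p_{ij}(t)\dot{p}_{ij}(t)$ and substitute the fourth Hamilton equation~(\ref{rnn-costate-eq})-[iv], namely $\dot{p}_{ij} = -s_i\alpha_i\omega_{ij}\sigma'(a_i)\,p_i\,\xi_j$. This is precisely the per-arc identity $\frac{d}{dt}\frac{1}{2}p_{ij}^2 = -\alpha_i\sigma'(a_i)\,[p_{ij}]\cdot[s_i\omega_{ij}p_i\xi_j]$ displayed just above the statement, now summed over all arcs, so no new computation is needed; one gets $\frac{d}{dt}\frac{1}{2}p_w^2(t) = -\sum_{(i,j)\in{\cal A}}\alpha_i\sigma'(a_i)\,p_{ij}\,\big(s_i\omega_{ij}p_i\xi_j\big)$.

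Second, I would identify the right-hand side with a pairing of $p_w$ against the generalized gradient $\nabla_{\zeta}^{s} V(t) = (s\odot\omega\odot p_{\xi}\odot\xi)(t)$, whose arc-component is exactly $s_i\omega_{ij}p_i\xi_j$. Up to the arc-wise factors $\alpha_i\sigma'(a_i)$ this reads $\frac{d}{dt}\frac{1}{2}p_w^2(t) = -\langle p_w,\nabla_{\zeta}^{s} V\rangle(t)$. To turn this into the asserted equivalence I would invoke that $\sigma$ is a strictly increasing sigmoidal, hence $\sigma'>0$, together with the standing ``enabled'' assumption $\alpha_i(t)>\alpha>0$ used in Proposition~\ref{rnn_bibo_prop}; then each weight $\alpha_i(t)\sigma'(a_i(t))$ is strictly positive, so it can be absorbed into the pairing (a positive-weighted inner product) without affecting any sign. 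Consequently the sign of $\frac{d}{dt}\frac{1}{2}p_w^2(t)$ is exactly opposite to that of $\langle p_w,\nabla_{\zeta}^{s} V\rangle(t)$, which is the claim $\frac{d}{dt}\frac{1}{2}p_w^2(t)<0 \leftrightarrow \langle p_w,\nabla_{\zeta}^{s} V\rangle(t)>0$.

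The genuinely delicate point is not analytic but a matter of index bookkeeping and of honesty about the factors $\alpha_i\sigma'(a_i)$: these carry the single index $i$ while the sum runs over arcs $(i,j)$, so the passage from the clean per-arc equivalence to the aggregate one is legitimate only if one either (i) builds the positive factors $\alpha\odot\sigma'$ into the definition of $\langle\cdot,\cdot\rangle$, so the pairing is a metric pairing and the equivalence is literal, or (ii) restricts attention to the enabled subnetwork, where their positivity makes the weighted and unweighted sign conditions coincide arc by arc. I would state this convention explicitly at the outset, so that — unlike the merely \emph{sufficient} conditions obtained in the $i\in{\cal O}$ and $i\in{\cal H}$ case split preceding the theorem — the displayed ``$\leftrightarrow$'' is a true equivalence rather than a one-sided implication.
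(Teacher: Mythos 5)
Your proposal follows essentially the same route as the paper: the paper likewise obtains the per-arc identity $\frac{d}{dt}\frac{1}{2}p_{ij}^{2}=-\alpha_i\sigma'(a_i)\,p_{ij}\cdot\bigl(s_i\omega_{ij}p_i\xi_j\bigr)$ from Hamilton equation~[iv] and then passes to the sum $p_w^2=\sum_{ij}p_{ij}^2$, identifying the pairing with $\nabla_{\zeta}^{s}V$. Your explicit handling of the positive factors $\alpha_i\sigma'(a_i)$ (absorbing them into the pairing or arguing arc by arc on the enabled subnetwork) is in fact more careful than the paper, which drops these factors without comment when stating the equivalence, so your caveat is a welcome clarification rather than a deviation.
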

This theorem states that the decrement of  the co-state kinetic term $(1/2)p_{w}^{2}$ takes place if and only if the  co-state $p_{w}$ and the generalized gradient $\nabla_{\zeta}^{s} V$ are correlated.

\noindent Now we state another important proposition which helps understanding the progress of learning. 
\begin{proposition}
The velocities of the state $w$ and co-state $p_{w}$ are correlated if and only if the square of the co-state decreases, that is
\begin{equation}
    \langle \dot{w},\dot{p} \rangle = \dot{w} \cdot \dot{p}  > 0 
    \leftrightarrow \frac{d}{dt} p_{w}^{2} < 0.
\end{equation}
\end{proposition}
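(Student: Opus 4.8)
The plan is to reduce everything to Hamilton's equation \ref{rnn-costate-eq}-[ii], namely $\dot{w}_{ij} = -\beta_{ij}\, p_{ij}$, which is the only relation needed: it couples the state velocity $\dot w$ directly to the co-state $p_{w}$ through the factors $\beta_{ij}$, and those factors are strictly positive since $m_{ij}>0$ and $\psi_{ij}\neq 0$ give $\beta_{ij}=\psi_{ij}^{2}/m_{ij}>0$ (this is exactly the positivity recorded in Corollary~\ref{beta_pos_dec}). First I would expand the right-hand side of the claimed equivalence: since $p_{w}^{2}=\sum_{ij\in{\cal A}}p_{ij}^{2}$, differentiation gives $\tfrac{d}{dt}\tfrac12 p_{w}^{2}=\sum_{ij}p_{ij}\dot p_{ij}$. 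Then I would substitute $p_{ij}=-\dot w_{ij}/\beta_{ij}$ from \ref{rnn-costate-eq}-[ii] into each summand, obtaining $p_{ij}\dot p_{ij}=-\beta_{ij}^{-1}\dot w_{ij}\dot p_{ij}$, hence $\tfrac{d}{dt}\tfrac12 p_{w}^{2}=-\sum_{ij}\beta_{ij}^{-1}\dot w_{ij}\dot p_{ij}$.

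At this point the proposition is essentially proved up to the precise meaning of the pairing: the sum on the right is an inner product of $\dot w$ and $\dot p$ weighted by the strictly positive coefficients $\beta_{ij}^{-1}$. I would therefore state the identity as $\tfrac{d}{dt}\tfrac12 p_{w}^{2}=-\langle\dot w,\dot p\rangle_{\beta^{-1}}$, with $\langle a,b\rangle_{\beta^{-1}}:=\sum_{ij}\beta_{ij}^{-1}a_{ij}b_{ij}$, and read off the equivalence directly: $\langle\dot w,\dot p\rangle_{\beta^{-1}}>0 \iff \tfrac{d}{dt}p_{w}^{2}<0$. In the regime where the masses $m_{ij}$ and filters $\psi_{ij}$ are taken uniform, $\beta_{ij}\equiv\beta$ is a scalar, the weighted product collapses to the ordinary dot product $\dot w\cdot\dot p$ appearing in the statement, and the identity reads literally as written; I would include this remark so the displayed equation is exact as stated.

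The only subtle point — and the one I would flag explicitly — is precisely this last step: with arc-dependent $\beta_{ij}$ the bare dot product $\dot w\cdot\dot p$ and $\tfrac{d}{dt}p_{w}^{2}$ need not have opposite signs, so for the aggregate statement one must either adopt the $\beta^{-1}$-weighted inner product, or else prove the equivalence arc by arc via the per-arc identity $\dot w_{ij}\dot p_{ij}=-\tfrac{\beta_{ij}}{2}\tfrac{d}{dt}p_{ij}^{2}$ (which, with $\beta_{ij}>0$, already gives $\dot w_{ij}\dot p_{ij}>0\iff\tfrac{d}{dt}p_{ij}^{2}<0$ componentwise). Everything else is a one-line differentiation plus a substitution; no regularity beyond differentiability of $t\mapsto w_{ij}(t),p_{ij}(t)$ along the Hamiltonian flow is required, and that is supplied by the ODE system \ref{rnn-costate-eq} itself.
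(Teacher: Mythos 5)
Your proposal is correct and rests on the same single ingredient as the paper's proof: the Hamilton equation $\dot{w}_{ij}=-\beta_{ij}\,p_{ij}$ together with $\beta_{ij}>0$, and in fact the per-arc identity $\dot{w}_{ij}\dot{p}_{ij}=-\tfrac{\beta_{ij}}{2}\tfrac{d}{dt}p_{ij}^{2}$ that you offer as the alternative route is literally the paper's entire argument. Your additional caveat --- that with arc-dependent $\beta_{ij}$ the bare sum $\dot{w}\cdot\dot{p}$ and $\tfrac{d}{dt}p_{w}^{2}$ need not have opposite signs, so the aggregate statement requires either the $\beta^{-1}$-weighted pairing or the componentwise reading --- is a legitimate precision that the paper's one-line proof silently glosses over.
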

\begin{proof}
    The proof is straightforward.
    From Hamiltonian equation $\dot{w}_{ij} = - \beta_{ij} p_{ij}$ we get 
    \[
    \dot{w}_{ij} \dot{p}_{ij} = - \beta_{ij} p_{ij} \dot{p}_{ij} = - \beta_{ij} 
    \frac{1}{2} \frac{d}{dt} p_{ij}^{2}.
    \]
    Hence, the proof follow when considering that
    $\beta_{ij}>0$.
\end{proof}

%\subsection{$\zeta$ learning }
This analysis shows that we can always decrease the co-state by appropriate choices of $\zeta$. 
\iffalse
This suggests adopting the classic gradient descent scheme 
\begin{equation}
    \dot{\zeta} = - \eta \nabla_{\zeta} \frac{1}{2} {\rm p}^{2}(\zeta,t),
\end{equation}
where ${\rm p}^{2} = {\rm p}_{\xi}^{2} + {\rm p}_{w}^{2}$, where $\neta>0$ is the learning rate. When considering the corresponding policies on ${\rm p}_{\xi}^{2}$  and ${\rm p}_{w}^{2}$ and the results stated by Theorem~\ref{px-decrement-th} and Theorem~\ref{pw-decrement-th} we get 
\begin{align*}
    \frac{d}{dt} \frac{1}{2}p_{w}^{2}(t) = 
    \nabla_{\zeta} \frac{1}{2} {\rm p_{w}}^{2}(\zeta,t) \cdot \dot{\zeta} = 
     - \eta \Big(\nabla_{\zeta} \frac{1}{2} {\rm p_{w}}^{2}(\zeta,t)\Big)^2
    < 0 \leftrightarrow 
    \left\langle  p_{w}, \nabla_{\zeta}^{s} V  \right \rangle(t) >0
\end{align*}
\fi
When $d/dt ({\rm p}_{ij}^2) \rightarrow 0$ then $\dot{p}_{ij} \rightarrow 0$.
From Proposition~\ref{Second_Order_w_prop} we get $\ddot{w}_{ij} + \theta_{ij} \dot{\theta}_{ij} = 0$ which, in turns, leads to achieve constant solutions for $w_{ij}$. Now, if $\beta_{ij} \neq 0$, for the Hamiltonian learning~\ref{rnn-costate-eq}-$iv$ we get $p_{ij} \rightarrow 0$.

\iffalse
Now let us introduce the following differential operator
\begin{equation}
    {\cal D}(w_{ij},\theta_{ij}):= \ddot{w}_{ij} + \theta_{ij} \dot{w}_{ij}.
\end{equation}
From Proposition~\ref{Second_Order_w_prop} We have 
\[
    p_{ij} {\cal D}(w_{ij},\theta_{ij}) + \frac{1}{2} \beta_{ij} \frac{d}{dt} p_{ij}^{2} = 0.
\]
Let $\beta_{ij}(0)>0.$ We can promptly see that the condition $\frac{d}{dt} p_{ij}^{2}<0$ yields
\[
    p_{ij} {\cal D}(w_{ij},\theta_{ij}) >0
\]
\fi

\iffalse
\section{Energy-driven causal agents}
In this section we start thinking of the causal agent. We carry out an analysis which driven by
the consequences of bounding the Hamiltonian. Since it depends on the state and on the 
costate, we discuss their boundedness separately. \\
\fi
%%%%%%%%%%%%%%%%%%%%%%%%%%%%%%%%%%%%%%%%%%%%%%%%%%%%%%%%%%%%%%%%%%%
%qui 16 agosto: theorema su s=\pm 1 stesso comportamento se p(0)=0. 
% Uso Eulero e dimostro per induzione
%%%%%%%%%%%%%%%%%%%%%%%%%%%%%%%%%%%%%%%%%%%%%%%%%%%%%%%%%%%%%%%%%%%
\section{Energy balance}
\label{eb_sec}
Now we carry out a classic analysis on the energy balance coming from the interaction
of the agent with the environment. We begin considering the contribution $H_{t}$.
In case $s_{i}(t) \equiv 1$ we have $H_{x} \cdot \dot{x} + H_{p} \cdot \dot{p} = 0$, which leads to $\dot{H} = (d/dt) H = H_{t}$.
\begin{definition}
    The terms
    \begin{align}
    \begin{split}
    E &:=  \int_{0}^{t} \phi_{i}(\tau) V_{s}(\xi_{i}(\tau),\tau)  d\tau\\
    D&:=D_{\phi}+D_{\beta}+D_{\alpha}+D_{\omega}
    \end{split}
    \end{align}
    are referred to as the {\em environmental energy} and the {\em dissipated energy}, respectively, where 
    \begin{align}
    \begin{split}
    D_{\phi}&:=-  \int_{0}^{t}\dot{\phi}_{i}(\tau) V(\xi_{i}(\tau),\tau) d\tau\\
    D_{\beta}&:=\frac{1}{2}\int_{0}^{t} \sum\nolimits_{ij} \dot{\beta}_{ij}(\tau) p_{ij}^2(\tau) d\tau\\
    D_{\alpha}&:=- \int_{0}^{t}\sum\nolimits_{i \in \bar{\cal V}}\dot{\alpha}_i(\tau)\pneuron_i(\tau)
    \Big[-\neuron_i(\tau)+\sigma\big(a_{i}(\tau)\big)\Big] d\tau\\
    D_{\omega}&:=-\int_{0}^{t}\sum\nolimits_{i \in \bar{\cal V}}\alpha_i(\tau) p_i(\tau) \sigma^{\prime}\big(a_{i}(\tau)\big)
    \sum\nolimits_{j}\dot{\omega}_{ij}(\tau) w_{ij}(\tau)  \neuron_j(\tau) d\tau
    \end{split}
    \end{align}
\label{EnergyTerm_def}
\end{definition}
\noindent The dissipation energy arises because of the temporal changes of 
$\alpha_{i}, \psi_{ij}, m_{i}, \phi_{i}$, even though the role of $\psi_{ij}$ is replaced
with $\beta_{ij}$. 
\begin{theorem} - {\bf I Principle of Cognidynamics}\\
    The system dynamics evolves under the energy balance
    \begin{equation}
        E = \Delta H + D
    \end{equation}
\end{theorem}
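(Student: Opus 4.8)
The plan is to start from the closed form of the Hamiltonian in Eq.~(\ref{H-eq}) and to differentiate it along the trajectory determined by the Hamilton equations~(\ref{rnn-costate-eq}) with $s_i(t)\equiv 1$. Writing $x\sim(\xi,w)$ and $p\sim(p_\xi,p_w)$, the total time derivative is
\[
\dot H = H_x\cdot\dot x + H_p\cdot\dot p + H_t .
\]
By the first two blocks of~(\ref{rnn-costate-eq}) we have $\dot x = H_p$ and by the last two blocks (with $s_i\equiv 1$) $\dot p = -H_x$, so the first two terms cancel and $\dot H = H_t$, exactly the remark made just before Definition~\ref{EnergyTerm_def}. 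Hence $\Delta H = H(t)-H(0) = \int_0^t H_s\,ds$, and everything reduces to computing the explicit partial derivative $H_t$ of the expression~(\ref{H-eq}) and sorting its terms into the ``environmental'' part $\dot E$ and the ``dissipated'' part $\dot D$.

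Next I would compute $H_t$ termwise from~(\ref{H-eq}). The time dependence enters through (a) the gating coefficients $\beta_{ij}(t)$, $\phi_i(t)$, $\alpha_i(t)$, $\omega_{ij}(t)$, and (b) the explicit time slot in the potential $V(\xi,t)=\mathcal V(\xi,e(t))$, i.e. $V_s$. Differentiating the three groups of terms in~(\ref{H-eq}) gives
\[
H_t = -\tfrac12\sum_{ij}\dot\beta_{ij}p_{ij}^2
+\sum_{i\in\mathcal O}\dot\phi_i V(\xi_i,t)
+\sum_{i\in\mathcal O}\phi_i V_s(\xi_i,t)
+\sum_{i\in\bar{\mathcal V}}\dot\alpha_i p_i\big[-\xi_i+\sigma(a_i)\big]
+\sum_{i\in\bar{\mathcal V}}\alpha_i p_i\,\sigma'(a_i)\sum_j \dot\omega_{ij} w_{ij}\xi_j .
\]
Now match this against Definition~\ref{EnergyTerm_def}: the third term is precisely $\dot E$; the first term is $-\dot D_\beta$ (note $D_\beta=\tfrac12\int\sum\dot\beta_{ij}p_{ij}^2$, so its integrand is $+\tfrac12\dot\beta_{ij}p_{ij}^2$, hence it appears with a minus sign here — I should double-check the sign convention, since $D_\phi,D_\alpha,D_\omega$ carry an explicit minus in the definition while $D_\beta$ does not); the second term is $-\dot D_\phi$; the fourth is $-\dot D_\alpha$; and the fifth is $-\dot D_\omega$. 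Therefore $H_t = \dot E - \dot D$, where $\dot D=\dot D_\phi+\dot D_\beta+\dot D_\alpha+\dot D_\omega$.

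Finally, integrate from $0$ to $t$: $\Delta H = \int_0^t H_s\,ds = E - D$, which rearranges to $E = \Delta H + D$, the claimed I Principle. The only subtle points — and the place I expect to spend real care rather than routine calculation — are bookkeeping of signs and of summation ranges: making sure the $H_x\cdot\dot x$ and $H_p\cdot\dot p$ cancellation genuinely uses $s_i\equiv1$ (the Lemma computing $H_{\xi_i},H_{w_{ij}},H_{p_i},H_{p_{ij}}$ must be paired with the unflipped co-state equations), and reconciling the index set $\bar{\mathcal V}$ versus $\mathcal O$ in the potential term (the Hamiltonian~(\ref{H-eq}) sums $V$ over $\mathcal O$, while $\phi_i$ was said to be attached to all $i\in\bar{\mathcal V}$) together with the $\sum_{ij}$ vs. $\sum_{ij\in\mathcal A}$ conventions for the weight terms. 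Once those are pinned down, the identity is a one-line consequence of $\dot H=H_t$ plus the termwise reading of $H_t$.
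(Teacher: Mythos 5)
Your proposal is correct and follows essentially the same route as the paper: it uses $\dot H = H_t$ along the Hamiltonian flow with $s_i\equiv 1$, differentiates the closed-form Hamiltonian of Eq.~(\ref{H-eq}) in its explicit time dependence, identifies the five resulting terms with $\dot E$, $-\dot D_\beta$, $-\dot D_\phi$, $-\dot D_\alpha$, $-\dot D_\omega$, and integrates over $[0,t]$. Your sign bookkeeping (including the factor $\tfrac12$ on the $\dot\beta_{ij}p_{ij}^2$ term, which the paper's intermediate display drops) matches the paper's final identification, so no gap remains.
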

\begin{proof} 
\begin{align*}
    H_{s}\big|_{s=\tau}&=
    \frac{\partial}{\partial s} \Big(
    -\frac{1}{2}\sum\nolimits_{ij} \beta_{ij}(s)
    p_{ij}^2(s) +  \sum_{i \in {\cal O}}\phi_{i}(s) V\bigl(\neuron(s),s\bigr)\\
    %+\lambda  \sum_{ij} \phi_{i}(s) w_{ij}^{2}(s) \\
    &+\sum\nolimits_{i \in \bar{\cal V}}\alpha_i(s) p_i(s)
    \Big[-\neuron_i(s)+\sigma\Big(\sum\nolimits_j \omega_{ij}(s) w_{ij}(s)
    \neuron_j(s)\Big)\Big] \Big)\Big|_{s=\tau}\\
    &=-\sum\nolimits_{ij} \dot{\beta}_{ij}(\tau) p_{ij}^2(\tau)
    +  \sum_{i \in {\cal O}} \phi_{i}(\tau) V_{s}(\xi_{i}(\tau),\tau)
    +\dot{\phi}_{i}(\tau) V(\xi_{i}(\tau),\tau)\\
    &%+\lambda  \sum_{ij} \dot{\phi}_{i}(\tau)w_{ij}^{2}(\tau)
    +\sum\nolimits_{i \in \bar{\cal V}}\dot{\alpha}_i(\tau) p_i(\tau)
    \Big[-\neuron_i(\tau)+\sigma\Big(\sum\nolimits_j \omega_{ij}(\tau) w_{ij}(\tau)
    \neuron_j(\tau)\Big)\Big] \\
    &+\sum\nolimits_{i \in \bar{\cal V}}\alpha_i(\tau) p_i(\tau) \sigma^{\prime}\big(a_{i}(\tau)\big)
    \sum\nolimits_{j}\dot{\omega}_{ij}(\tau) w_{ij}(\tau)  \neuron_j(\tau)  
\end{align*}
Now, let $\Delta H:=H(\xi(t),w(t),p_{\xi}(t),p_{w}(t),t)
-H(\xi(0),w(0),p_{\xi}(0),p_{w}(0),0)$ be. 
If we integrate over $[0,t]$ we get
\begin{align*}
    \Delta H &= \int_{0}^{t} 
    \sum_{i \in {\cal O}} 
    \big(\phi_{i}(\tau) V_{s}(\xi_{i}(\tau),\tau) \big) d\tau
    & \leftarrow E\\
    &+ \int_{0}^{t} \dot{\phi}_{i}(\tau) V(\xi_{i}(\tau),\tau) d\tau
    &\leftarrow -D_{\phi}\\
    &-\frac{1}{2}\int_{0}^{t} \sum\nolimits_{ij} \dot{\beta}_{ij}(\tau) p_{ij}^2(\tau) d\tau
    &\leftarrow -D_{\beta}\\
    &%+ \int_{0}^{t} \lambda \gamma \sum_{ij} 
    %\dot{\phi}_{i}(\tau)\weight_{ij}^{2}(\tau)  
    +\int_{0}^{t}\sum\nolimits_{i \in \bar{\cal V}}\dot{\alpha}_i(\tau)\pneuron_i(\tau)
    \Big[-\neuron_i(\tau)+\sigma\big(a_{i}(\tau)\big)\Big] d\tau
    &\leftarrow -D_{\alpha}\\
    &+\int_{0}^{t}\sum\nolimits_{i \in \bar{\cal V}}\alpha_i(\tau) p_i(\tau) \sigma^{\prime}\big(a_{i}(\tau)\big)
    \sum\nolimits_{j}\dot{\omega}_{ij}(\tau) w_{ij}(\tau)  \neuron_j(\tau) d\tau
    &\leftarrow -D_{\omega}
\end{align*}
\end{proof}
\noindent This theorem states a formal result which follows the spirit of the energy conservation principle. The environmental energy $E$ turns out to be exchanged with the agent in terms of the variation of the internal energy $\Delta H$ and dissipated energy\footnote{ 
We notice in passing that if we flip the signs of all the energy terms of Definition~\ref{EnergyTerm_def} the I Principle of Cognidynamics still holds true.}D.
We can promptly see that any process of learning corresponds with a decrement of $E$. The I principle of Cognidynamics suggests that such a decrement results in energy dissipation that is characterized by a negative term of $D$. For example, from eqs.~\ref{EnergyTerm_def}, we can promptly see that the increment of $\phi$ and the decrement of $\beta$ yields the required energy dissipation. Hence, the I Principle clearly explains why we need to properly change the dissipative weights for an effective process of learning. 

\section{Gravitational neural networks}
A fundamental problem that plagues neural network-based approaches to lifelong learning is that when attacking new tasks they typically offer no guarantees against catastrophically adapting learned weights that were already used for successfully solving previously learned tasks. While the proposed Hamiltonian learning scheme offers a truly new scheme of learning for recurrent neural networks, in principle, it shares this shortcoming with related gradient-based methods. The problem seems to have an architectural origin that certainly remains in dynamic neural networks. Furthermore, it is worth mentioning that the natural dynamic behavior suitable for the interpretation of cognitive processes is to continuously generate trajectories in the phase space. In other words, while some neurons can be deactivated, convergence to fixed points is not biologically plausible.
It is very interesting to note that these needs are satisfied simply by an appropriate dynamic structure of the system which requires sustaining of trajectories through the presence of conservative processes.
\begin{figure}[H]
	\centering
	\includegraphics[width=11cm]{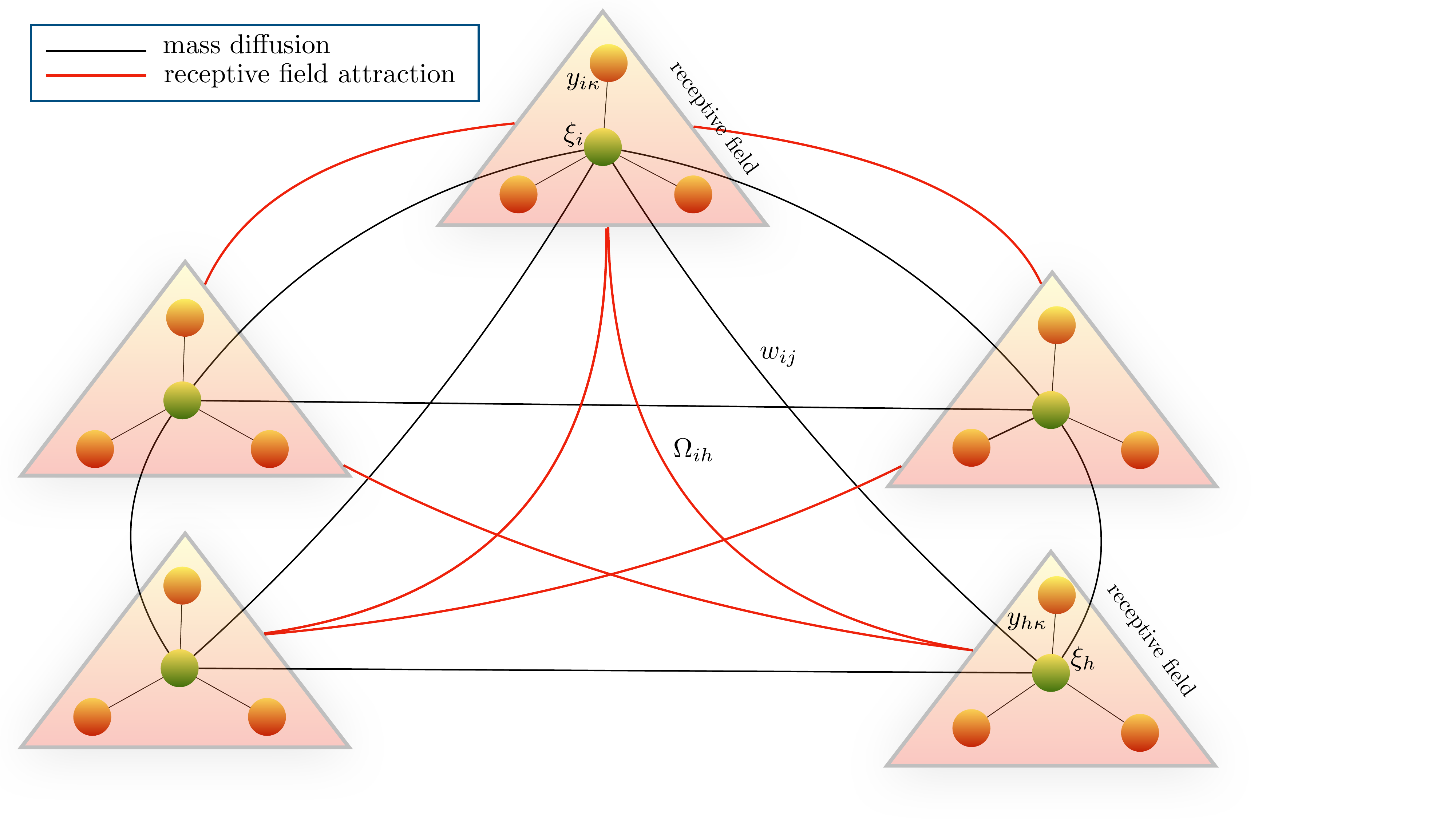}
\caption{\small {\em The dynamics of neural propagation is driven by the interaction of receptive fields composed of neurons whose outputs represent their coordinates. The interaction is inspired by classical Newtonian gravitational dynamics which is spatially localized since it vanishes at large distances. Such a localization property leads to generate dynamical structures that under appropriate classic conditions are stable, thus providing the support for a sort of addressable memory, which attacks by design issues of catastrophic forgetting.}
}
\label{GNN-Fig}
\end{figure}
Interestingly, this can be carried out by embedding $\xi_{i}$ into a trajectory which somehow represents its memory address. In doing so the neurons considered so far might be regarded as bodies that are characterized by their own trajectory that arises from classic gravitational interaction. Hence, one can think of the body mass $\xi_{i}$ along with its own trajectory $y_{i \kappa}$, where $\kappa=1,\ldots, d$ is the dimension of the space when the n-body interaction takes place. This suggests considering a $d$-dimensional receptive fields that, like bodies, are characterized by classic Newtonian dynamics

\begin{equation}
\left\{\begin{aligned}
&\dot{y}_{i \kappa}(t) = {\rm y}_{i\kappa}(t);\\
& \dot{\rm y}_{i\kappa}(t)= - (d-2) G \sum_{h \in {\rm ne}[i]} 
\underbrace{
\frac{y_{i\kappa}-y_{h\kappa}}
{
\Big[
\sum_{\kappa=1}^{d} (y_{h\kappa}-y_{i\kappa})^{2}
\Big]^{d/2}
}}_{\Omega_{ih}} \sigma(\xi_{h});\\
&\dot{Y}_{i\kappa}(\omega,t) = e^{-i\omega t} y_{i\kappa}(t)\\
\end{aligned}\right.
\label{rnn-cotate-eq}
\end{equation}

Here the state becomes $(\xi_{i},w_{ij},y_{i\kappa})$. The memory allocation is compactly expressed by $[y_{i\kappa}] \leftarrow \xi_{i}$.
Finally, a frequency-based response is well-suited to report decision-based information. 

\section{Conclusions}
This paper focuses on the interpretation of natural learning in the optimization framework of dynamic programming which gives rise to Hamiltonian-Jacobi-Bellman equations.  The paper promotes a collectionless approach to Machine Learning that strongly parallels what happens in nature and uses basic ideas that are massively adopted in Theoretical Physics. The system dynamics which drives the learning process is in fact dictated by Hamiltonian ODE, which turns out to parallel 
classic gradient descent methods in Statistical Machine Learning.

The most important contribution of the paper is that of addressing the longstanding questions on the stability of learning in recurrent neural networks, and to show that the answer comes from the introduction of an appropriate law which drives the control of dissipative weights.  
The proposed mathematical framework offers the appropriate tools for understanding the exchange of energy between the agent and the environment and suggests that the process of dissipation decreases the entropy of the system, thus creating ordered structures. In particularly, it becomes clear that we need an appropriate developmental scheme which requires filtering the inputs coming from the environment for offering a well-posed formulation of learning.
\iffalse
Most importantly, the energy balance clearly suggests the introduction of focus attention mechanisms with a twofold role: First, they are useful to bound the error of the agent by checking the co-state and by performing time-reversal processing at some critical points.  Second, the neural connections are paired with a gating mechanism that is driven by the energy balance to approximate the optimal solution that would require the knowledge of boundary conditions. 
\fi

Interestingly, the adoption of the proposal Hamiltonian learning approach leads to also to a computational scheme that, unlike BPTT and RTRL, is local in both time and space.

The proposed theory makes use of the continuous setting of computation to interpret learning as the discovery of a stationary point of the cognitive action, which tightly parallels the interpretation of Newtonian laws in Mechanics.  
This facilitates the development of the main results of the paper and offers the substrate for investigating links with Developmental Psychology and Neuroscience.
However, we are mostly planning to work towards  the translation in the discrete setting of computation of the proposed learning approach, which is in fact  quite straightforward. It can open the doors to any application of Machine Learning involving time, where the emphasis is moved to the collectionless approach joined with the central role of focus of attention. 

\section{Acknowledgements}
I mostly thank Stefano Melacci, Alessandro Betti, Michele Casoni, and Tommaso Guidi for insightful discussions arising during lab meetings at SAILab. 
Early ideas can be traced back to the discussions I had with Tomaso Poggio  concerning his seminal paper ``Regularization Networks'' after a seminar I gave at MIT on 2011. More significant steps were carried out after the NeurIPS 2020 Workshop and the LOD 2021 Workshop on Biological Plausibility in Neural Computation, especially from the interaction with Alessandro Sperduti, Yoshua Bengio, Naftali Tishby, and Tomaso Poggio. 
Finally, I thank Jay McClelland for a recent discussion we had during the Third Conference on Lifelong Learning Agents (CoLLAs 2024)  on early studies on dynamical systems for connectionist models. 

%\bibliography{corr,nn}

\begin{thebibliography}{10}

\bibitem{anderson1990neurocomputing}
J.A. Anderson, E.~Rosenfeld, and A.~Pellionisz.
\newblock {\em Neurocomputing 2: Directions for Research}.
\newblock Number v. 2 in A Bradford book. MIT Press, 1990.

\bibitem{Bengio_trnn93}
Y.~Bengio, P.~Frasconi, and P.~Simard.
\newblock Learning long-term dependencies with gradient descent is difficult.
\newblock {\em IEEE Transactions on Neural Networks}, 5(2):157--166, March 1994.
\newblock Special Issue on Dynamic Recurrent Neural Networks.

\bibitem{Betti_2022}
Alessandro Betti, Marco Gori, and Stefano Melacci.
\newblock {\em Deep Learning to See: Towards New Foundations of Computer Vision}.
\newblock Springer International Publishing, 2022.

\bibitem{Crick89}
F.~Crick.
\newblock The recent excitement about neural networks.
\newblock {\em Nature}, 337:129--132, 1989.

\bibitem{Maupertuis-1744}
P.L.M. de~Maupertuis.
\newblock Accord de différentes lois de la nature qui avaient jusqu'ici paru incompatibles.
\newblock {\em Mém. As. Sc. Paris, English translation}, 1744.

\bibitem{Maupertuis-1746}
P.L.M. de~Maupertuis.
\newblock Le lois de mouvement et du repos, déduites d'un principe de métaphysique.
\newblock {\em Mém. Ac. Berlin, English translation}, 1746.

\bibitem{Gori_ijcnn89}
M.~Gori, Y.~Bengio, and R.~De Mori.
\newblock Bps: A learning algorithm for capturing the dynamical nature of speech.
\newblock In {\em Proceedings of the International Joint Conference on Neural Networks}, pages 643--644, Washington D.C., 1989. IEEE, New York.

\bibitem{Hinton86a}
G.E. Hinton and T.J. Sejnowski.
\newblock Learning and relearning in boltzmann machines.
\newblock In D.E. Rumelhart and J.L. McClelland, editors, {\em Parallel Distributed Processing}, volume~1, chapter~7, pages 282--317. MIT Press, Cambridge, 1986.

\bibitem{Hochreiter:97nips}
S.~Hochreiter and J.~Schmidhuber.
\newblock {LSTM} can solve hard long time lag problems.
\newblock In M.~C. Mozer, M.~I. Jordan, and T.~Petsche, editors, {\em Advances in Neural Information Processing Systems 9 (NIPS 9)}, pages 473--479. MIT Press, 1997.

\bibitem{hochreiter2001gradient}
Sepp Hochreiter, Yoshua Bengio, Paolo Frasconi, J{\"u}rgen Schmidhuber, et~al.
\newblock Gradient flow in recurrent nets: the difficulty of learning long-term dependencies, 2001.

\bibitem{Hodgkin1952}
A.L. Hodgkin and A.F. Huxley.
\newblock A quantitative description of membrane current and its application to conduction and excitation in nerve.
\newblock {\em Journal of Physiology}, 117:500--544, 1952.

\bibitem{Piaget1958}
B.~Inhelder and J.~Piaget.
\newblock {\em The Growth of Logical Thinking from Childhood to Adolescence}.
\newblock Basic Books, New York, 1958.

\bibitem{Karplus1980}
Robert Karplus.
\newblock Teaching for the development of reasoning.
\newblock {\em Research in Science Education}, 10(1):1--9, 1980.

\bibitem{lecun2015deep}
Yann LeCun, Yoshua Bengio, and Geoffrey Hinton.
\newblock Deep learning.
\newblock {\em nature}, 521(7553):436, 2015.

\bibitem{collectionless-ai}
Gori Marco and Melacci Stefano.
\newblock Collectionless ai.
\newblock {\em arXiv preprint arXiv:2309.06938}, 2023.

\bibitem{McClelland86a}
J.L. McClelland, D.E. Rumelhart, and the PDP Research~Group.
\newblock {\em Parallel Distributed Processing: Explorations in the Microstructure of Cognition}, volume~2.
\newblock MIT Press, Cambridge, 1986.

\bibitem{pascanu2013difficulty}
Razvan Pascanu, Tomas Mikolov, and Yoshua Bengio.
\newblock On the difficulty of training recurrent neural networks.
\newblock In {\em International conference on machine learning}, pages 1310--1318. Pmlr, 2013.

\bibitem{RumMcC86}
D.~E. Rumelhart and J.~L. McClelland.
\newblock {\em Parallel Distributed Processing: Exploration in the Microstructure of Cognition. Volume 1: Foundations}, volume~1.
\newblock The MIT Press, 1986.

\bibitem{russel2021}
Stuart Russell and Peter Norvig.
\newblock {\em Artificial Intelligence: A Modern Approach}.
\newblock Prentice Hall, Fourth Edition, 3 edition, 2010.

\bibitem{song2024a}
Y~Song, B~Millidge, T~Salvatori, T~Lukasiewicz, Z~Xu, and R~Bogacz.
\newblock Inferring neural activity before plasticity as a foundation for learning beyond backpropagation.
\newblock {\em Nature Neuroscience}, 27(2):348--358, 2024.

\bibitem{Melacci_2024_rnn_survey}
M.~Tiezzi, M.~Casoni, A.~Betti, M.~Gori, and S.~Melacci.
\newblock State-space modeling in long sequence processing: A survey on recurrence in the transfomer era.
\newblock {\em CoRR}, arXiv:2406.09062v1, 2024.

\bibitem{depressed}
Julian Togelius and Georgios~N. Yannakakis.
\newblock Choose your weapon: Survival strategies for depressed ai academics, 2023.

\bibitem{NIPS2017_3f5ee243}
Ashish Vaswani, Noam Shazeer, Niki Parmar, Jakob Uszkoreit, Llion Jones, Aidan~N Gomez, \L~ukasz Kaiser, and Illia Polosukhin.
\newblock Attention is all you need.
\newblock In I.~Guyon, U.~Von Luxburg, S.~Bengio, H.~Wallach, R.~Fergus, S.~Vishwanathan, and R.~Garnett, editors, {\em Advances in Neural Information Processing Systems}, volume~30. Curran Associates, Inc., 2017.

\bibitem{vaswani2017attention}
Ashish Vaswani, Noam Shazeer, Niki Parmar, Jakob Uszkoreit, Llion Jones, Aidan~N Gomez, {\L}ukasz Kaiser, and Illia Polosukhin.
\newblock Attention is all you need.
\newblock In {\em Advances in neural information processing systems}, pages 5998--6008, 2017.

\bibitem{williams1989learning}
Ronald~J Williams and David Zipser.
\newblock A learning algorithm for continually running fully recurrent neural networks.
\newblock {\em Neural computation}, 1(2):270--280, 1989.

\end{thebibliography}
%\bibliographystyle{plain}
%\bibliographystyle{icml2024}

%%%%%%%%%%%%%%%%%%%%%%%%%%%%%%%%%%%%%%%%%%%%%%%%%%%%%%%%%%%%%%%%%%%%%%%%%%%%%%%
%%%%%%%%%%%%%%%%%%%%%%%%%%%%%%%%%%%%%%%%%%%%%%%%%%%%%%%%%%%%%%%%%%%%%%%%%%%%%%%
% APPENDIX
%%%%%%%%%%%%%%%%%%%%%%%%%%%%%%%%%%%%%%%%%%%%%%%%%%%%%%%%%%%%%%%%%%%%%%%%%%%%%%%
%%%%%%%%%%%%%%%%%%%%%%%%%%%%%%%%%%%%%%%%%%%%%%%%%%%%%%%%%%%%%%%%%%%%%%%%%%%%%%%
\newpage
\appendix
 
\newpage
\appendix
\begin{center} \large
{\bf APPENDIX}
\end{center}
\section{HJB equations}
\label{appendix:control}
Let us consider\footnote{In this paper we overload the notation by using 
a symbol like $x$ to denote a variable as well as the corresponding
function which return it at time $t$.} of
the {\em Value Function} $V: [0,T] \times {\cal X} \to \bbR: \ (t,\xi) \mapsto V(t,x)$
\begin{equation}
    V(t,x):= J_{T} + \min_{w}\int_{t}^{T} ds \
    L(\xi(s),w(s),s).
\label{L-def}
\end{equation}
 Here, $\xi(s)$ is the trajectory in $[t,T]$ driven by 
 \begin{equation}
 	\dot{\xi}(s) = f(\xi(s),w(s),s)
\label{BasicODE}
\end{equation}
which begins with $\xi(t)=x$.
Function $V$ is sometimes also referred to as the {\em cost-to-go}.
Here $J_{T} \geq 0$ is the final value that might be regarded as 
\[
	J_{T} = \int_{T}^{\infty} ds \ L(\xi(s),w(s),s).
\]
Basically, the introduction of $J_{T}$ leads to consider the special case in which
$T=\infty$ when there exists the integral 
$V(t,x):=  \min_{w}\int_{t}^{\infty} ds \   L(\xi(s),w(s),s)$.
We begin with a couple of premises on $(f,L)$ that are very important
in the following.
\begin{itemize}
\item 
%Depending on the pair  $(f,L)$ the value function might be defined. 
The mentioned case of Mechanics is a classic example in
which the action, in general, does not admit minimum. 
%In those cases
%one can replace $\min_{w}$ with $\delta_{w}$.
\item Function $f$ and $L$ are supposed to be continuous and differentiable with
	respect to $x,w$ whereas we make no assumption on the continuity 
	with respect to $t$.
\end{itemize}
The optimum is determined by using Bellman's principle. 
We consider the general case in which $f$ and $L$ posses 
analytical regularities only with respect to $x$, that is we assume 
that $f(x,w,t)$ and $L(x,w,t)$ admit continuous partial derivates 
with respect to $x$ only. In particular we assume that $w$ and $t$
only posses a finite number of discontinuities and that 
$f(x,w,t), L(x,w,t)$ are always bounded in $[0,T]$. 
Under this assumption we can always grid $[0,T]$ in such a 
way that the mentioned discontinuities correspond with nodes
in the grid. 

Given $\Delta t>0$, we want to see the relationship between the 
value function at $t$ and at $t+\Delta t$, where the optimal value on 
the trajectory $x^{\star}$ is correspondently 
moved to $x^{\star} + \Delta x^{\star}$.
We have
%We have\footnote{As usual, 
%the notation ``$dt, dx$'' neglects higher-oder infinitesimals.}
\begin{align*}
    V(t,x^{\star})\hspace{-0.1cm}&=\hspace{-0.1cm}
    \min_{w([t,T])}\hspace{-0.1cm} \bigg(V(t+\Delta t,x+\Delta x)+
    \int_{t}^{t+\Delta t} ds \ L(x(s),w(s),s)
    \bigg)
\end{align*}
and look for the control policy $w^{\star}([0,T])$ over $[0,T]$.
If we apply Bellman's principle we get 
\begin{align*}
	V(t,x^{\star}) &\hspace{-0.1cm}
	= V(t+\Delta t,x^{\star}+\Delta x^{\star}) + 
	\min_{w([t,t+\Delta t])} L(x(t),w(t),t) \Delta t + o(\Delta t)\\
	&=V(t,x^{\star})+ V_{s}(t,x^{\star}) \Delta t+
	V_{x}(t,x^{\star}) \Delta x^{\star} 
	+ o(\Delta x^{\star}) + o(\Delta t)\\
	&+
	\min_{w([t,t+\Delta t])} L(x(t),w(t),t) \Delta t,
\end{align*}
where $x^{\star}=x(t)$.
This holds true for points $t+\Delta t$ where we can use the local 
expansion of $V(t+\Delta t,x^{\star}+\Delta x^{\star})$. Interestingly, 
this can always be done also in case of the mentioned discontinuities
when setting the jumps of $w$ and the jumps of the Lagrangian 
because of the explicit dependence on $t$ exactly at $t+\Delta t$.
Under these assumptions, on the optimal trajectory we have 
$
	\Delta x^{\star} = f(x^{\star},w^{\star},t) \Delta t + o(\Delta t),
$
where $w^{\star}=\arg \min_{w} V(t,x^{\star})$.
%	= \min_{w_{[t,t+\Delta t]}}  f(x,w,t) dt 
%\]
Hence, the previous equation can be re-written as 
\begin{align*}
	o(\Delta t) &=V_{x}(t,x^{\star}) \hspace{-0.5mm} \cdot \hspace{-0.5mm} 
	f(x^{\star},w^{\star},t) \Delta t  + V_{s}(t,x^{\star}(t)) \Delta t +
	\min_{w([t,t+\Delta t])} L(x(t),w(t),t) \Delta t
\end{align*}
Now, as $\Delta t \rightarrow 0$ we have  $\min_{w([t,t+\Delta t])}  \leadsto \min_{w(t)}$.  Let $\omega:=w(t)$. Then  we get
\begin{equation}
    V_{s}(t,x^{\star}) = 
    - \min_{\omega}
    \bigg(
        L(x^{\star},\omega,t)  
        + V_{x}(t,x^{\star}) \cdot f(x^{\star},\omega,t)
    \bigg).
\label{BellmanEq}
\end{equation}
From this analysis, we are now ready to state the following theorem.
\begin{theorem}
Suppose we are given the optimization problem on the value function
defined by eq.~(\ref{L-def})
with the terminal boundary 
condition $\forall x \in \bbR^{n}: \ \ V(T,x) = g(x)$. 
If we define\footnote{In general we need to replace 
$\min$ with {\rm stat}.} 
\begin{equation}
	H(x,p,s):= \min_{\omega}\big( 
	 L(x,\omega,s) + p \cdot f(x,\omega,s)
	\big),
\label{H-def}
\end{equation}
then any optimal trajectory satisfies the {\em Hamilton-Jacobi-Bellman (HJB)} equations
\begin{equation}
	V_{s}(t,x^{\star})+ H(x^{\star},V_{x}(t,x^{\star}),t) =0.\\
\label{HJB-ODE-Eq}
\end{equation}
\boxed{}
\label{HJB-Theorem}
\end{theorem}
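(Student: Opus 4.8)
The plan is to obtain the HJB identity directly from Bellman's principle of optimality, essentially repackaging the local expansion carried out just above into its final invariant form. First I would fix $t \in [0,T)$ together with a point $x^\star = x(t)$ lying on an optimal trajectory, and apply the principle of optimality to split the cost-to-go: for small $\Delta t > 0$,
\[
V(t,x^\star) = \min_{w([t,t+\Delta t])} \left( \int_{t}^{t+\Delta t} L(x(s),w(s),s)\, ds + V(t+\Delta t, x^\star + \Delta x^\star) \right),
\]
where $x^\star + \Delta x^\star$ denotes the state reached at time $t+\Delta t$ under an arbitrary control on $[t,t+\Delta t]$ followed by the optimal continuation dictated by $V$ on $[t+\Delta t,T]$. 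This dynamic programming recursion is the only genuinely optimization-theoretic input; the remaining steps are calculus.

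Next I would Taylor-expand about $(t,x^\star)$. Assuming $V$ is $C^1$ near $(t,x^\star)$, write $V(t+\Delta t,x^\star+\Delta x^\star) = V(t,x^\star) + V_s(t,x^\star)\Delta t + V_x(t,x^\star)\cdot \Delta x^\star + o(\Delta t)$, and use the state equation (\ref{BasicODE}) to get $\Delta x^\star = f(x^\star,\omega,t)\Delta t + o(\Delta t)$ with $\omega := w(t)$, together with $\int_t^{t+\Delta t} L\,ds = L(x^\star,\omega,t)\Delta t + o(\Delta t)$. Substituting, cancelling $V(t,x^\star)$ from both sides, dividing by $\Delta t$, and letting $\Delta t \to 0$ collapses the minimisation over control segments $\min_{w([t,t+\Delta t])}$ into a pointwise minimisation $\min_\omega$ over the instantaneous value of the control, giving
\[
0 = V_s(t,x^\star) + \min_{\omega}\left( L(x^\star,\omega,t) + V_x(t,x^\star)\cdot f(x^\star,\omega,t) \right).
\]
By the definition (\ref{H-def}) the bracketed minimum is exactly $H(x^\star, V_x(t,x^\star), t)$, so $V_s(t,x^\star) + H(x^\star,V_x(t,x^\star),t) = 0$; this is equation (\ref{BellmanEq}) rewritten through $H$. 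The terminal condition $V(T,x) = g(x)$ is simply the $t=T$ instance of the defining formula (\ref{L-def}), the final value $J_T$ playing the role of $g$.

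The subtle points, which I would handle through the gridding device already set up in the discussion, are two. The first is a regularity bookkeeping matter: $f$ and $L$ are only assumed smooth in $x$, with finitely many jumps in $w$ and in $t$, so the expansion of $V(t+\Delta t,\cdot)$ is legitimate only when $t+\Delta t$ stays away from the grid nodes carrying these jumps --- one fixes the grid so the discontinuities sit at nodes and argues node by node. The second, and the genuine obstacle, is the appeal to $V \in C^1$: in general the value function is merely Lipschitz and need not be differentiable, so the clean statement should be read either under the a priori smoothness hypothesis, or in the viscosity-solution sense where test functions touching $V$ from above and below replace $V$ in the Taylor step. I would flag this explicitly rather than attempt to prove smoothness of $V$, since the formal HJB identity is all that is needed downstream to derive the Hamilton equations (\ref{rnn-costate-eq}).
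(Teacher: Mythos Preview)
Your proposal is correct and follows essentially the same route as the paper: the theorem statement is simply the formal summary of the Bellman-principle derivation carried out immediately above it, and you reproduce that derivation step for step (dynamic programming split, Taylor expansion of $V$, substitution of $\Delta x^\star = f\,\Delta t + o(\Delta t)$, passage to the limit, identification with $H$). Your additional remarks on the $C^1$ hypothesis and the viscosity-solution reading go beyond what the paper states but are a welcome clarification rather than a departure from its argument.
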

The previous discussion on the adoption of the Bellman's principle 
holds for any $x \in \bbR^{n}$ at time $T$.
When considering that $H$ is defined by Eq.~(\ref{H-def}), the above
equation is a partial differential equation 
\begin{equation}
	V_{s}(t,x) + H(x,V_{x},t)=0
\label{H-def-HJB}
\end{equation}
on $V(t,x)$ which can be solved under the respect of the 
terminal condition $V(T,x)=g(x)$. 
The solution of this PDE returns $V(t,x)$ that can be determined along with 
the optimal policy 
$
	w^{\star} = \arg \min_{\omega} \big( 
	 L(x,\omega,s) + p \cdot f(x,\omega,s)
	\big)
$.
It is wort mentioning that the partial differential equation~(\ref{H-def-HJB})
which leads to the discovery of $V(t,x)$ is in fact characterized by the
Hamiltonian function which dictates the corresponding evolution 
of $V(t,x)$.\\
 ~\\
{\sc Algorithmic scheme for HJB eqs}
\begin{enumerate}
\item		Set $\partial \mathscr{X}=\{x \in \bbR^{n}: \ \ V(T,x)=g(x)\}$
\item		Solve HJB eqs~(\ref{H-def-HJB}) 
		$V_{s}(t,x) + H(x,V_{x},t)=0$
		with boundary condition on $\partial \mathscr{X}$ at $t=T$
		and, thereby compute $V$;
\item 	Select $w^{\star}(t)$ which returns the minimum of 
		${\cal H}(x,p,w,t)$, that is 
		\[
			w^{\star}(t) = \arg \min_{w(t)} {\cal H}(x(t),p(t),w(t),t)
			 = \arg \min \big(
			 	L(x,w,t) + p^{\prime}(t) \cdot f(x,w,t)
			 \big)
		\]
\end{enumerate}
%%%%%%%%%%%%%%%%
% independence of bias terms
%%%%%%%%%%%%%%%%
Now we put forward properties that arises from the 
bias independence property of shifting the Lagrangian with a constant
term, that is $L \leadsto \tilde{L}:=L+c$.
\begin{proposition}
	If we shift the Lagrangian, that is $L \leadsto \tilde{L}:=L+c$ then:
	\begin{itemize}
	\item [i.] The optimal solution $w^{\star}$ is the same and 
		$\tilde{V}(t,x)=c(T-t)+V(t,x)$;
	\item [ii.]The same shift holds for the Hamiltonian, that is 
		$\tilde{H}(x,p,t) = H(x,p,t) + c$.
	\end{itemize}
\label{H-shift-invariance}
\end{proposition}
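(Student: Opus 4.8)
The statement to prove is Proposition~\ref{H-shift-invariance}: shifting the Lagrangian $L\leadsto\tilde L:=L+c$ leaves the optimal control unchanged, shifts the value function by $c(T-t)$, and shifts the Hamiltonian by $c$.

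The plan is to work directly from the definitions in \eqref{L-def}, \eqref{BasicODE}, and \eqref{H-def}. For part (i), I would first observe that the state dynamics \eqref{BasicODE} and hence the set of admissible trajectories $\xi(\cdot)$ generated by a control $w(\cdot)$ starting from $\xi(t)=x$ do not depend on $L$ at all --- only on $f$. Therefore, for any fixed admissible control,
\[
\int_t^T \tilde L(\xi(s),w(s),s)\,ds = \int_t^T L(\xi(s),w(s),s)\,ds + c\,(T-t),
\]
where the extra term $c(T-t)$ is a constant independent of the control. Since adding a control-independent constant does not change the argmin, the minimizing control $w^\star$ is the same for $\tilde L$ as for $L$. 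It remains to handle the terminal term: if $\tilde J_T = J_T$ (the shift is only in the running cost), then $\tilde V(t,x) = J_T + \min_w \int_t^T \tilde L = J_T + c(T-t) + \min_w\int_t^T L = V(t,x) + c(T-t)$, which is the claimed formula. (If one instead also wished to interpret $J_T$ as $\int_T^\infty \tilde L$, one would get an additional divergent or regrouped term; I would simply state the convention that the shift applies to the running cost on $[t,T]$, matching how $\tilde V$ is defined in the statement.)

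For part (ii), I would plug $\tilde L = L + c$ and the \emph{same} $f$ into the definition \eqref{H-def} of the Hamiltonian:
\[
\tilde H(x,p,s) = \min_\omega\bigl(\tilde L(x,\omega,s) + p\cdot f(x,\omega,s)\bigr)
= \min_\omega\bigl(L(x,\omega,s) + c + p\cdot f(x,\omega,s)\bigr) = H(x,p,s) + c,
\]
again because $c$ is independent of $\omega$ and pulls out of the minimization. As a consistency check one can verify that these two formulas are compatible with the HJB equation \eqref{HJB-ODE-Eq}: differentiating $\tilde V(t,x) = V(t,x) + c(T-t)$ gives $\tilde V_s = V_s - c$ and $\tilde V_x = V_x$, so $\tilde V_s + \tilde H(x,\tilde V_x,t) = V_s - c + H(x,V_x,t) + c = V_s + H(x,V_x,t) = 0$, confirming the shifted value function solves the shifted HJB equation with the shifted Hamiltonian; this is a good sanity check to include but is not logically required since both parts follow directly from the definitions.

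I do not expect a genuine obstacle here --- the proposition is essentially a bookkeeping observation that the running cost enters the optimization only additively while the dynamics $f$ are untouched. The only point requiring a word of care is the treatment of the terminal value $J_T$ / the $T=\infty$ convention: I would be explicit that the shift is applied to the running Lagrangian over the optimization horizon so that the formula $\tilde V = V + c(T-t)$ is exactly what is being asserted, and note in passing that over an infinite horizon one would need $L$ itself (not merely $L+c$ with $c\neq 0$) to keep the integral convergent. Everything else is a one-line pull-out-the-constant argument.
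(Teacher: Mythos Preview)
Your proposal is correct and follows essentially the same approach as the paper: for (i) you both observe that adding a control-independent constant leaves the argmin unchanged and simply integrates to $c(T-t)$, and for (ii) you both pull the constant $c$ out of the Hamiltonian. The only cosmetic difference is that in (ii) the paper routes the computation through $\tilde V_x = V_x$ (using that $c(T-t)$ has no $x$-dependence), whereas you work directly from the $\min_\omega$ definition with $p$ as a free parameter; your version is slightly cleaner, and your HJB consistency check and remark on the $J_T$/infinite-horizon convention are welcome additions.
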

\begin{proof}
$[i]$: The optimal solution is clearly independent of the shift with term $c$.
	Moreover,  then the value function becomes
	\begin{align*}
		\tilde{V}(t,x) &=\tilde{J}_{T} + \int_{t}^{T} ds \ \big(c + L(x(s),w(s),s) \big) = 
		c(T-t) + \int_{t}^{T} ds \ \big(L(x(s),w(s),s) \big)\\
		&=c(T-t) + J_{T} \int_{t}^{T} ds \ \big(L(x(s),w(s),s) = 
		c(T-t) + V(t,x).
	\end{align*}
$[ii]$. Concerning the Hamiltonian we have
\begin{align*}
	\tilde{H}(x,p,t) &= \tilde{L}(x,w,t) + \tilde{V}_{x}(t,x) \cdot f(x,w,t)  \\
	&= c + L(x,w,t) + V_{x}(t,x) \cdot f(x,w,t) = c + H(x,p,t).
\end{align*}
\boxed{}
\end{proof}
\begin{remark}
Finally, we can promptly see the effect of adding a bias to the Hamiltonian
in the HJB equations. Hence, suppose that $H \leadsto \tilde{H}=H+c$.
We have 
$\tilde{V}_{s}(t,x) +c+ H(x,p(t),t) = 0$ and
$V_{s}(t,x) + H(x,p(t),t) = 0$, from which we get 
\[
	\tilde{V}_{s} - V_{s} = \frac{\partial}{\partial s}(\tilde{V}-V) = -c.
\]
Hence we can determine the value function corresponding from 
Hamiltonian shifting by
$\tilde{V}(t,x)-V(t,x) = C - ct$, where $C \in \bbR$ 
can be determined when imposing that 
$\tilde{V}(T,x) = V(T,x)$, that is $C=cT$. Finally we get
$\tilde{V}(t,x) = V(t,x) + c(T-t)$ which is consistent with the 
first statement of Proposition~\ref{H-shift-invariance}.
\end{remark}

%%%%%%%%%%%%%
% feedback property
%%%%%%%%%%%%%
\noindent \emph{\sc control policy: feedback of the state}\\
Now, we show that optimal policies $w^{\star}$ can be determined on the basis
of the knowledge of the pair $(t,x^{\star})$ only. Basically,
there exist $\alpha$ such that 
\begin{equation}
	w^{\star}(t) = \alpha(x^{\star}(t),t)
\end{equation}
For any pair $(x^{\star}(t),t))$, function $\alpha(\cdot,\cdot)$ is
defined by 
\[
		\alpha(x^{\star}(t),t) = \arg \min_{\omega} 
		\big( L(x,\omega,s) + p \cdot f(x,\omega,s)\big).	
\]
Hence, we can characterize optimal strategies by the following theorem
which introduces the general {\em feedback control strategy}.
\begin{theorem}
	Let us consider the terminal optimization problem with the boundary
	condition $\forall x \in \bbR^{n}: \ V(T,x)=g(x)$. Then the
	optimal strategy is a {\em feedback control} that is characterized by
	\begin{align}
		\begin{split}
			&\alpha: ({\cal X} \subset \bbR^{n}) 
			\times [0,T] \to {\cal W} \subset \bbR^{m}:
			(x^{\star}(t),t) \mapsto \alpha(x^{\star}(t),t).\\
			&\dot{x}^{\star}(t) = f(x^{\star}(t),\alpha(x^{\star}(t),t),t)
		\end{split}
	\end{align}
\label{FC-th}
\end{theorem}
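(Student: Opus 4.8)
The plan is to read the feedback law off directly from the Hamilton--Jacobi--Bellman analysis already carried out. Recall from Theorem~\ref{HJB-Theorem} that the value function $V(t,x)$ defined in Eq.~\eqref{L-def} satisfies $V_{s}(t,x)+H(x,V_{x}(t,x),t)=0$ with the terminal condition $V(T,x)=g(x)$, where $H(x,p,s)=\min_{\omega}\big(L(x,\omega,s)+p\cdot f(x,\omega,s)\big)$. First I would fix an arbitrary pair $(t,x)\in[0,T]\times{\cal X}$ and consider the finite-dimensional minimization defining $H$ at the particular costate $p=V_{x}(t,x)$; since $L(x,\cdot,s)$ and $f(x,\cdot,s)$ are continuous in $\omega$ and ${\cal W}$ can be taken to be the admissible (compact) control set, the minimum is attained. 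I would then \emph{define}
\[
\alpha(x,t):=\argmin_{\omega\in{\cal W}}\big(L(x,\omega,t)+V_{x}(t,x)\cdot f(x,\omega,t)\big),
\]
choosing any fixed measurable selection when the minimizer is not unique. By construction $\alpha$ is a function of $(x,t)$ alone --- it depends on the trajectory only through the \emph{current} state --- so it has the asserted type ${\cal X}\times[0,T]\to{\cal W}$.

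The second step is to check that this $\alpha$ actually reproduces the optimal control along the optimal trajectory. Here I would invoke the local Bellman recursion derived just before Theorem~\ref{HJB-Theorem}: on the optimal trajectory $x^{\star}$, passing from $t$ to $t+\Delta t$ forces $w^{\star}(t)$ to be the minimizer appearing in Eq.~\eqref{BellmanEq}, i.e.\ precisely the minimizer of $L(x^{\star}(t),\omega,t)+V_{x}(t,x^{\star}(t))\cdot f(x^{\star}(t),\omega,t)$. Hence $w^{\star}(t)=\alpha(x^{\star}(t),t)$ for a.e.\ $t$, the exceptional set being the finitely many discontinuity nodes of $w$ and of the explicit $t$-dependence of $L$. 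Substituting this identity into the state equation~\eqref{BasicODE} gives $\dot{x}^{\star}(t)=f(x^{\star}(t),w^{\star}(t),t)=f\big(x^{\star}(t),\alpha(x^{\star}(t),t),t\big)$, which is the closed-loop ODE in the statement. Thus the optimal open-loop control coincides with the feedback law $\alpha$ evaluated along the state it generates, which proves the theorem.

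The step I expect to be the main obstacle is the well-posedness underlying both the definition of $\alpha$ and the closed-loop dynamics: strictly speaking one needs enough regularity of $V$ for $V_{x}$ to exist (only differentiability in $x$ is assumed, which suffices pointwise but would in general call for viscosity-solution arguments), a measurable-selection theorem to make $(x,t)\mapsto\alpha(x,t)$ well-defined when the Hamiltonian minimizer is not unique, and an existence/uniqueness statement for $\dot{x}^{\star}=f(x^{\star},\alpha(x^{\star},t),t)$, whose right-hand side may only be measurable in $t$. At the generality of this appendix I would simply invoke the standing hypotheses under which the grid argument preceding Theorem~\ref{HJB-Theorem} goes through (continuity and differentiability of $f,L$ in $x$; boundedness and finitely many discontinuities in $(w,t)$; admissible controls ranging over a compact ${\cal W}$) and note that under them the selection and ODE issues are handled by standard results. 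The conceptual content --- that optimality collapses dependence on the whole past into dependence on the current $(x,t)$ --- is already contained in the Bellman recursion and requires no further work.
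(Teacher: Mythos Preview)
Your proposal is correct and follows essentially the same approach as the paper: the paper's argument (given in the paragraph immediately preceding the theorem rather than in a formal proof environment) simply defines $\alpha(x^{\star}(t),t):=\arg\min_{\omega}\big(L(x,\omega,s)+p\cdot f(x,\omega,s)\big)$ and observes that since $p=V_{x}(t,x^{\star})$ depends only on $(t,x^{\star})$, the optimal control is a function of the current state and time alone. Your version is strictly more careful---the discussion of attainment of the minimum, measurable selection when the minimizer is non-unique, and well-posedness of the closed-loop ODE are issues the paper does not address---but the conceptual content is identical.
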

\begin{remark}
	The theorem makes it possible to express the optimal value function
	by
	\[
		V(t,x^{\star}(t)) = \int_{t}^{T} ds \
		L(x^{\star}(s),\alpha(x^{\star}(s),s),s).
	\]
	This suggests that if we are interested in the asymptotic value 
	we need to the corresponding asymptotic condition on the
	Lagrangian
	\[
		\lim_{s \to \infty} L(x^{\star}(s),\alpha(x^{\star}(s),s),s)=0.
	\]
\end{remark}

%
% H =0 and time independence
%
Now suppose that $f$ and $L$ are {\em time invariant}, that
is we are in the case $f: {\cal X} \times {\cal W} \to \bbR_{0}^{+}: 
\ (x,w) \mapsto f(x,w)$ and
$L: {\cal X} \times {\cal W} \to \bbR_{0}^{+}: \ (x,w) \mapsto L(x,w)$. Under this condition Theorem~\ref{FC-th} leads to establish
the following corollary.
\begin{corollary}
	if the pair $(f,L)$ is time invariant $H$ is also time invariant and, 
	moreover, we have $H(x,p)\equiv 0$ or, equivalently, $H$ is constant.
\label{H=0-corollary}
\end{corollary}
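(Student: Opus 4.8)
\textbf{Plan.} The claim has two parts: (i) time-invariance of $H$, and (ii) that $H$ is constant, and in fact $H(x,p)\equiv 0$. The plan is to deduce (i) directly from the definition of $H$ in Eq.~(\ref{H-def}) and then combine the HJB equation~(\ref{HJB-ODE-Eq}) with the feedback structure of Theorem~\ref{FC-th} to obtain (ii). Throughout, I will exploit the observation (Remark following Theorem~\ref{FC-th}) that in the time-invariant case the optimal value function along an optimal trajectory has the form $V(t,x^{\star}(t)) = \int_{t}^{T} L(x^{\star}(s),\alpha(x^{\star}(s),s))\,ds$.

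\textbf{Step 1: time-invariance of $H$.} From $H(x,p,s)=\min_{\omega}\bigl(L(x,\omega)+p\cdot f(x,\omega)\bigr)$, when $(f,L)$ carry no explicit $s$-dependence, the minimand has no explicit $s$-dependence, hence neither does the minimum. So $H(x,p,s)=H(x,p)$, which is part (i).

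\textbf{Step 2: $H$ is constant along optimal trajectories.} Along an optimal trajectory $x^{\star}(t)$ with costate $p^{\star}(t)=V_{x}(t,x^{\star}(t))$, the HJB relation~(\ref{HJB-ODE-Eq}) reads $V_{s}(t,x^{\star}(t))+H(x^{\star}(t),p^{\star}(t))=0$. The standard way to see that $H$ is conserved is to differentiate $H(x^{\star}(t),p^{\star}(t))$ in $t$: using the Hamilton equations $\dot{x}^{\star}=H_{p}$, $\dot{p}^{\star}=-H_{x}$ (the characteristic system of the HJB PDE), one gets $\frac{d}{dt}H = H_{x}\cdot\dot{x}^{\star}+H_{p}\cdot\dot{p}^{\star}+H_{s} = H_{x}\cdot H_{p}-H_{p}\cdot H_{x}+0 = 0$, where $H_{s}=0$ by Step~1. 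Hence $H$ is constant on each optimal trajectory.

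\textbf{Step 3: identifying the constant as $0$.} Here I use the terminal/asymptotic normalization. Following the Remark after Theorem~\ref{FC-th}, a well-posed time-invariant infinite-horizon (or properly terminated) problem requires $\lim_{s\to\infty}L(x^{\star}(s),\alpha(x^{\star}(s),s))=0$, and correspondingly $V_{s}\to 0$ and $p^{\star}\cdot f\to 0$ along the optimal trajectory; since $H=L+p^{\star}\cdot f$ on the optimal path and both summands vanish asymptotically, the constant value of $H$ must be $0$. (Equivalently: $V_{s}\to 0$ forces $H=-V_{s}\to 0$, and since $H$ is constant it is identically $0$.) Because $\alpha$ is a genuine feedback law (Theorem~\ref{FC-th}), every point $(x,p)=(x,V_{x}(t,x))$ reachable as a state–costate pair lies on some optimal trajectory, so $H(x,p)\equiv 0$ on the relevant domain; if one does not insist on the vanishing normalization, Step~2 alone already gives the weaker ``$H$ is constant'' clause of the statement.

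\textbf{Main obstacle.} The delicate point is Step~3: pinning the conserved constant to exactly $0$ rather than merely asserting $H$ is constant. This rests on the boundary/asymptotic conditions (the role of $J_{T}$ and the limiting condition on $L$ from the Remark) and on the claim that the feedback law sweeps out all admissible $(x,p)$ pairs, so the ``$\equiv 0$'' conclusion is really a statement about the domain where the value function is differentiable. I would be careful to state that $H\equiv 0$ holds on the set of $(x,p)$ arising along optimal trajectories under the stated terminal normalization, and that the unconditional conclusion is just that $H$ is constant, matching the ``or, equivalently, $H$ is constant'' phrasing in the corollary.
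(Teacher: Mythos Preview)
Your proof is correct but takes a more circuitous route than the paper. The paper argues directly that, under time-invariance of $(f,L)$, the feedback law of Theorem~\ref{FC-th} reduces to $w^\star(t)=\alpha(x^\star(t))$ with no explicit $t$-dependence, so the value function $V(t,x^\star)=\int_t^T L(x^\star(s),\alpha(x^\star(s)))\,ds=\hat V(x^\star)$ depends only on the state; hence $V_s\equiv 0$, and the HJB equation~(\ref{HJB-ODE-Eq}) immediately yields $H(x^\star,\hat V_x(x^\star))=0$. Your Step~2 (conservation of $H$ along characteristics via $\dot x=H_p$, $\dot p=-H_x$) is not used in the paper's argument at all---in fact, that computation appears only in the Remark \emph{after} the corollary, as a consequence rather than an ingredient of the proof. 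Your Step~3 then has to do real work to pin the conserved constant to zero via asymptotic vanishing of $L$ and $V_s$, whereas the paper never isolates a constant in the first place: it gets $V_s=0$ identically, not merely in the limit. The paper's line is shorter and sidesteps what you correctly flag as your ``main obstacle''; your route has the compensating merit of making the dependence on terminal/asymptotic normalization explicit, which the paper's claim $V(t,x^\star)=\hat V(x^\star)$ also quietly relies on (in a genuinely finite horizon with time-invariant $(f,L)$, the value function still depends on $T-t$).
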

\begin{proof}
	We begin noticing that $H$ inherits  time-invariance from $(L,f)$.
	Under the hypothesis the control law established by Theorem~\ref{FC-th}
	leads to the expression $w^{\star}(t) = \alpha(x^{\star}(t)$. Hence,
	the corresponding value function is
	\[
		V(t,x^{\star}) = \int_{t}^{T} ds \ L(x^{\star}(s),\alpha(x^{\star}(s)))
		= \hat{V}(x^{\star}).
	\]
	From Theorem~\ref{HJB-Theorem} (HJB equations) we get
	\[
		V_{s}(t,x^{\star}) + H(x^{\star},V_{x}(t,x^{\star}),t) = 0
		\rightarrow H(x^{\star},V_{x}(t,x^{\star})=0.
	\]
	Finally, the thesis comes from $p(t)=\hat{V}_{x}(x^{\star})$.\\
\boxed{}
\end{proof}

\begin{remark}
The spirit of HJB equations is that of determining the learning policy 
in the framework of a problem with terminal boundary conditions, 
thus reflecting the basic ideas of dynamic programming. As already
pointed out, the HJB equations are PDE which can be solved 
under terminal boundary conditions. In many real-world problems
their direct usage can be prohibitive because of the associated
computational complexity. However, the case covered in 
Corollary~\ref{H=0-corollary} offers a dramatic simplification since
we need to look for trajectories such that $H(x(t),p(t))=0$ 
({\em Hamiltonian invariance}). We can 
promptly see that 
\begin{align}
\begin{split}
	\dot{x}(t) &= H_{p}(x(t),p(t))\\
	\dot{p(t)} & -H_{x}(x(t),p(t))
\end{split}
\end{align}
satisfy the Hamiltonian invariance expressed by the 
Poisson's brackets $H(x(t),p(t))= \left\{H,H \right\}=0$. We have that 
the null total derivative
\begin{align*}
	\frac{d}{dt} H(x(t),p(t))= H_{x}(x(t),p(t)) \cdot \dot{x}(t) 
	+ H_{p}(x(t),p(t)) \cdot \dot{p}(t) =0
\end{align*}
holds for all $t \in (0,T)$, which leads to conclude that $H(x(t),p(t)) \equiv 0$
in the interval. In the following we want to see if this fundamental
property holds also in the case in which the Hamiltonian is 
time-dependent. We will address this issue by 
using the {\em method of characteristics}.
\end{remark}
\iffalse
Interestingly, its solution comes with  the optimal trajectory
$x^{\star}(t), \ t \in [0,T]$ that corresponds with $w^{\star}$ determined by Eq.~(\ref{H-def}).
This is the fundamental step behind HJB equations, since it allows us to 
determine the solution by collapsing the partial differential equation 
to an ordinary differential equation when pairing Eq.~(\ref{HJB-ODE-Eq})
with Eq.~(\ref{H-def}). This dramatic simplification also suggests to 
simplify the notation, so as, in the following,  Eq.~(\ref{H-def})
and Eq.~(\ref{HJB-ODE-Eq}) will be replaced with
\begin{equation}
\begin{cases}
	H= L + V_{x} \cdot f\\
	V_{t} + H=0.
\end{cases}
\label{HJB-equations}
\end{equation}
When using this notation we are implicitly assuming that all functions are computed
on the optimal trajectory.
\fi
\section{Method of characteristics} 
The HJB approach to optimization assumes that one knows the boundary
conditions at the end-point of the interval. Unfortunately, in that form,
they are neither useful for conception nor for the understanding of
learning schemes 
Now we will shown that classic Hamiltonian dynamics that satisfies
the HJB equations for time-independent Hamiltonians also works
for the general case of time-variant Hamiltonians.\\
 ~\\
 \noindent{\sc Hamiltonian dynamics is sufficient}\\
Let us consider the following (HJ)  initial-point problem
\begin{equation}
({\rm HJ}) \ \ \ \ \ 
\begin{cases}
	V_{s}(t,x)+ H(x,V_{x}(t,x,t)) =0.\\
	V(0,x) = g(x).
\end{cases}
\label{JB-ODE-Eq}
\end{equation}
We want to convert this PDE problem into an ODE that can open 
a dramatically different computational perspective. We use the
method of characteristic.
Now, let us introduce the {\em co-state} $p$ as 
$
	p:= V_{x}
$ and consider the total derivative\footnote{We use Einstein notation.} of its  $\kappa$ coordinate
\begin{align}
	\dot{p}_{\kappa}(t)=
	%:=\dot{p}_{x_{\kappa}}(t) = 
	V_{x_{\kappa} t}(t,x(t)) +
	V_{x_{\kappa}x_{i}} \cdot \dot{x}_{i}. 
\label{p-c-eq}
\end{align}
Now, if $V$ solves (HJ) then
\begin{align*}
	V_{x_{\kappa} t}(x,t) = - H_{x_{\kappa}}
	(x,V_{x}(x,t),t) - 
	H_{p_{i}}(x,V_{x}(x,t),t) \cdot
	V_{x_{i} x_{\kappa}}(x,t).
\end{align*}
Now if we plug $V_{x_{\kappa} t}(x,t)$ 
in Eq.~(\ref{p-c-eq}) we get
\begin{align}
\begin{split}
	\dot{p}^{\kappa}(t)&=
	 - H_{x_{\kappa}}(x(t),\underbrace{V_{x}(x(t),t)}_{p(t)},t) \\
	 &+ \big(\dot{x}_{i}(t) -
	H_{p_{i}}(x(t),\underbrace{V_{x}(x(t),t)}_{p(t)},t)\big) \cdot
	V_{x_{\kappa}x_{i}}(t,x(t)). 
\end{split}
\label{no-simpl}
\end{align}
Now we can promptly see that the following choice
\begin{equation}
({\rm H}) \ \ \ \ \
\begin{cases}
	\dot{x}(t) = H_{p}(x(t),p(t),t)\\
	\dot{p}(t) = -H_{x}(x(t),p(t),t)
\end{cases}
\label{H-equations}
\end{equation}
satisfies Eq.~(\ref{no-simpl}). %and, therefore, HJB equations.

Now we prove that if we can solve ({\rm H}) there is in fact 
an appropriate initialization which leads to solve also ({\rm BH})
(Eq.~\ref{JB-ODE-Eq}). In a sense, it is sufficient that ({\rm H})
holds true to solve ({\rm HJ}).
Suppose we initialize with $p^{0}=g_{x}(x^{0})$ and solve ({\rm H})
with $x(0)=x^{0}$ and $p^{0}=g_{x}(x^{0})$.  
\iffalse
Any trajectory $x(t)$ that satisfies HJB is such that
\[
	\forall t \in (0,T): \
	\frac{d}{dt} \bigg(
		 \underbrace{V_{t}(t,x(t)) + H(t,x^{\star}(t),p^{\star}(t))}_{\tt HJB(t)}
	\bigg)= 0.
\]
When plugging the Hamiltonian dynamics we get
\begin{align*}
	\frac{d}{dt} {\tt HJB}(t)&= 
	\frac{\partial}{\partial t} \frac{d}{dt}
	\bigg(
		 V_{t}(t,x(t)) + H(t,x^{\star}(t),p^{\star}(t))  
	\bigg) = \frac{\partial}{\partial t}
	\bigg(
		- L(x(t),w(t),t)\bigg) \\
	& + \frac{\partial}{\partial t}  \bigg(
	H_{t}(x^{\star}(t),p^{\star}(t),t) 
	+ H_{x}(x^{\star}(t),p^{\star}(t),t) \cdot \dot{x}^{\star}
	+ H_{p}(x^{\star}(t),p^{\star}(t),t) \cdot \dot{p}^{\star}
	\bigg)\\
	&= - L_{t}(x(t),w(t),t) + \frac{\partial}{\partial t} (L(x(t),w(t),t)+p(t) \cdot
	f(x(t),w(t),t))
\end{align*}
\fi
Now let us express
\begin{align}
\begin{split}
	\frac{d}{dt} V(x(t),t) &= V_{t}(t,x(t)) + V_{x}(t,x(t)) \cdot \dot{x}(t) \\
	&=- H(x(t),p(t),t) + p(t) \cdot H_{p}(x(t),p(t),t).
\end{split}
\label{crucial-step}
\end{align}
We also have $V(0,x(0))=V(0),x^{0}=g(x^{0})$ and, therefore, if we 
integrate over $t$ we get\footnote{This is in fact an ODE that allows us
to determine the value function once we have determined the solution of
optimization problem via an integral that depends on the Hamiltonian.}
\begin{align}
	V(t,x(t)) = g(x^{0}) + \int_{0}^{t} ds (-H(x(s),p(s),s) + 
	H_{p}(x(s),p(s),s) \cdot p(s)). 
\label{dVdt-exp}
\end{align}
%Notice that the derivation of {\rm H} equations relies on the crucial step 
%following from Eq.~(\ref{crucial-step}).
% Are there other solutions
%different from {\rm H} which lead to {\rm HJ}? In the following
%we prove that {\rm H} is necessarily coming from {\rm HJ}.
From this analysis we can relate HJB equations to the ODE
{\em Hamiltonian} equations as stated in the following theorem.
\begin{theorem}
	Let us consider the minimization problem defined by~(\ref{L-def})
	with initial conditions V(0,x)=g(x).
	Then, like for Theorem~\ref{HJB-Theorem}, 
	 the optimal solution $w^{\star}$ is determined by 
	eq.~\ref{H-def}. Moreover, the two conjugated variables
	 $x,p$  that satisfy ODE~(\ref{H-equations}) are also solutions
	 of the initial {\rm (HJ)} problem~(\ref{JB-ODE-Eq}). 
\label{H-theorem}
\end{theorem}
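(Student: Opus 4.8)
The plan is to promote the heuristic characteristics computation that precedes the statement into an honest construction, and to read the two clauses of the theorem off it. Fix an initial state $x^{0}$, take as Cauchy datum for the co-state $p(0)=g_{x}(x^{0})$, and solve the Hamiltonian system $(\mathrm{H})$ of~\eqref{H-equations} with $x(0)=x^{0}$ and $p(0)=g_{x}(x^{0})$; this produces characteristic curves $t\mapsto x(t;x^{0})$ and $t\mapsto p(t;x^{0})$, and along them the feedback control $w^{\star}(t)=\arg\min_{\omega}\bigl(L(x(t),\omega,t)+p(t)\cdot f(x(t),\omega,t)\bigr)$ prescribed by~\eqref{H-def}. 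First I would note that on a (possibly short) time interval the flow map $x^{0}\mapsto x(t;x^{0})$ is a diffeomorphism onto its image, so that through every $(t,x)$ in a neighbourhood of $\{0\}\times\mathbb{R}^{n}$ there passes exactly one characteristic; on that neighbourhood I \emph{define} $V(t,x):=g(x^{0})+\int_{0}^{t}\bigl(-H(x(s),p(s),s)+H_{p}(x(s),p(s),s)\cdot p(s)\bigr)\,ds$, evaluated along the unique characteristic through $(t,x)$, which is exactly the integral formula~\eqref{dVdt-exp}.

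The second step is to verify that this $V$ solves the $(\mathrm{HJ})$ initial-value problem~\eqref{JB-ODE-Eq}. The boundary condition $V(0,x)=g(x)$ is immediate, since at $t=0$ the characteristic through $x$ starts at $x^{0}=x$ and the integral is empty. The decisive identity is $V_{x}(t,x(t))=p(t)$ along characteristics; granting it, I differentiate $V$ along a characteristic as in~\eqref{crucial-step}, obtaining $\dot V=V_{t}+V_{x}\cdot\dot x=V_{t}+p\cdot H_{p}$, and compare with the definition, which gives $\dot V=-H+p\cdot H_{p}$; subtracting yields $V_{t}=-H(x,V_{x},t)$, i.e.\ the Hamilton-Jacobi equation~\eqref{HJB-ODE-Eq}. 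Since $p=V_{x}$, the minimiser $w^{\star}$ in~\eqref{H-def} coincides with the feedback law $\alpha$ furnished by Theorem~\ref{HJB-Theorem} and Theorem~\ref{FC-th}, which settles the ``optimal solution'' clause; the ``moreover'' clause is then exactly the statement that $(x,p)$, through the $V$ just built, solves $(\mathrm{HJ})$.

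The main obstacle is the identity $p(t)=V_{x}(t,x(t))$, the heart of the method of characteristics. I would prove it by setting $q(t):=V_{x}(t,x(t))$ for the $V$ defined above and showing $q$ obeys the same ODE as $p$ with the same initial value. Differentiating $q$ under the integral sign --- using smoothness of $(x(s),p(s))$ in $x^{0}$ and the chain rule through the flow --- reproduces the computation~\eqref{p-c-eq}--\eqref{no-simpl} run in reverse: the terms carrying the second derivatives $V_{x_{\kappa}x_{i}}$ appear multiplied by $\dot x_{i}-H_{p_{i}}$, which vanishes identically by $(\mathrm{H})$, so one is left with $\dot q=-H_{x}(x,q,t)$; since $q(0)=g_{x}(x^{0})=p(0)$, uniqueness for ODEs forces $q\equiv p$. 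The delicate point is that this argument lives on the time interval where the characteristic flow remains a diffeomorphism, so that $V$ and hence $q=V_{x}$ are well defined and $C^{1}$; once characteristics cross, $V$ develops singularities and the equivalence with classical solutions of $(\mathrm{HJ})$ breaks down --- precisely the caveat echoed in the remark that the action is minimal only for sufficiently short trajectories. I would therefore state the conclusion under this local-in-time, no-crossing proviso.
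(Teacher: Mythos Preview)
Your approach is essentially the same as the paper's: the argument that precedes the theorem statement in the text is exactly the method of characteristics you outline, landing on the same formulas \eqref{p-c-eq}, \eqref{no-simpl}, \eqref{crucial-step}, and \eqref{dVdt-exp}. You are in fact more careful than the paper, which silently assumes $V_{x}=p$ in the step \eqref{crucial-step} without verifying it for the $V$ constructed from \eqref{dVdt-exp}; your ODE-uniqueness argument for $q:=V_{x}$ and your local-in-time caveat about crossing characteristics fill gaps the paper leaves open.
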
 
%
% Boundary conditions on x(0), p(T)
%
The analysis that arises from the method of characteristics establishes
a deep connection between the solution of the (HJB) equations and the 
(H) Hamiltonian equations regardless of the given formulation as a 
Cauchy problem. In particular, if we  know the value of $p(T)$
then we can establish the optimality.
\begin{theorem}
	Let us consider the minimization problem defined by~(\ref{L-def})
	with boundary conditions $x(0)=x_{0}$ and $p(T)=p_{T}$.
	Then the solution of ODE~(\ref{H-equations}) is also the
	solution {\rm HJB} problem~(\ref{HJB-ODE-Eq}), that is
	the solution of the minimization problem~(\ref{L-def}).
\label{H-th-BoundaryConditions}
\end{theorem}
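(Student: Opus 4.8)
The plan is to run a \emph{verification argument} paralleling the method-of-characteristics computation that precedes Theorem~\ref{H-theorem}, but anchoring the characteristic at the terminal time instead of the initial one. Let $(x^{\star}(\cdot),p^{\star}(\cdot))$ be a solution on $[0,T]$ of the Hamiltonian system~\eqref{H-equations} with $x^{\star}(0)=x_{0}$ and $p^{\star}(T)=p_{T}$ (existence of such a two-point solution being assumed), and let $w^{\star}(t)=\arg\min_{\omega}\bigl(L(x^{\star}(t),\omega,t)+p^{\star}(t)\cdot f(x^{\star}(t),\omega,t)\bigr)$ be the control that realises the minimum in~\eqref{H-def}, so that $\dot x^{\star}=f(x^{\star},w^{\star},t)$ and $H(x^{\star},p^{\star},t)=L(x^{\star},w^{\star},t)+p^{\star}(t)\cdot f(x^{\star},w^{\star},t)$ along the trajectory.

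First I would produce a candidate value function. Fix any $C^{1}$ terminal cost $g$ with $g_{x}(x^{\star}(T))=p_{T}$; for the case of interest, $p_{T}=0$, one simply takes $g\equiv\mathrm{const}$, so the associated minimisation is the free right-endpoint problem $\min_{w}\int_{0}^{T}L\,ds$ implicit in~\eqref{L-def}, and for general $p_{T}$ one reads~\eqref{L-def} as augmented by the terminal cost $g$. Solve the HJB equation $V_{s}+H(x,V_{x},s)=0$ with $V(T,\cdot)=g$ by the method of characteristics exactly as in the Appendix: the characteristic issuing from the terminal point $(x^{\star}(T),p_{T})$ is precisely our pair $(x^{\star}(\cdot),p^{\star}(\cdot))$, along which $p^{\star}(t)=V_{x}(t,x^{\star}(t))$, while~\eqref{crucial-step}--\eqref{dVdt-exp} read backward from $t=T$ reconstruct $V$ on a neighbourhood of the trajectory. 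This makes $V$ a genuine local classical solution of~\eqref{HJB-ODE-Eq} through which $(x^{\star},p^{\star})$ flows.

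Next comes the verification step. For an arbitrary admissible control $w$ with state $x_{w}$, $x_{w}(0)=x_{0}$, the pointwise minimality of $H$ gives $V_{s}(s,x_{w})+L(x_{w},w,s)+V_{x}(s,x_{w})\cdot f(x_{w},w,s)\ge V_{s}+H(x_{w},V_{x},s)=0$; integrating $\tfrac{d}{ds}V(s,x_{w}(s))=V_{s}+V_{x}\cdot f$ over $[0,T]$ then yields $\int_{0}^{T}L(x_{w},w,s)\,ds+g(x_{w}(T))\ge V(0,x_{0})$, with equality when $w=w^{\star}$ since $w^{\star}$ saturates the $H$-minimisation along $x^{\star}$. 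Hence $w^{\star}$ is optimal, $V(0,x_{0})$ is the optimal cost, and restoring the constant $J_{T}$ recovers~\eqref{L-def}; since $p^{\star}=V_{x}$ on the optimal trajectory, the solution of~\eqref{H-equations} is exactly the solution of the HJB problem~\eqref{HJB-ODE-Eq}, which is the claim. As in the Appendix, the finitely many discontinuities of $f,L$ in $t$ are absorbed by gridding $[0,T]$ so that the jumps fall on nodes and running the argument piecewise.

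The hard part will be the regularity of $V$: the method of characteristics guarantees a classical solution of~\eqref{HJB-ODE-Eq} only on a tube around $x^{\star}$ where the characteristic flow is a diffeomorphism, and caustics may form beyond it. One therefore either restricts to horizons $T$ (or invokes convexity/coercivity hypotheses on $(L,f)$, which hold here since $L$ is quadratic in $\nu$ and $f$ affine in $\nu$) short enough to preclude crossing characteristics, or bypasses the PDE altogether by a first-variation argument: integrating by parts with the adjoint equation $\dot p^{\star}=-H_{x}$ together with the transversality identity $p^{\star}(T)=g_{x}(x^{\star}(T))$ cancels the boundary term, showing the first variation of $\int_{0}^{T}L\,ds+g(x(T))$ at $w^{\star}$ vanishes, and convexity of the $H$-integrand in the control upgrades the stationary point to a global minimiser.
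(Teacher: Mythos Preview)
The paper does not actually supply a proof of this theorem: it is stated immediately after the sentence ``In particular, if we know the value of $p(T)$ then we can establish the optimality,'' and is meant to be read as a direct consequence of the method-of-characteristics computation~\eqref{p-c-eq}--\eqref{dVdt-exp} carried out just before Theorem~\ref{H-theorem}. Your proposal is therefore not competing with an explicit argument in the paper but rather filling one in.

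Your approach is correct and is precisely the standard way to make the paper's informal claim rigorous: anchor the characteristic at the terminal time with $p^{\star}(T)=g_{x}(x^{\star}(T))=p_{T}$, reconstruct a local classical solution $V$ of the HJB equation along the trajectory, and then run the verification inequality $\tfrac{d}{ds}V(s,x_{w}(s))\ge -L(x_{w},w,s)$ to conclude optimality of $w^{\star}$. The paper's analysis stops at the observation that \eqref{H-equations} is \emph{consistent} with the HJB equation (i.e.\ satisfies~\eqref{no-simpl}); your verification step is what actually closes the gap between ``satisfies the characteristic equations'' and ``is the minimiser of~\eqref{L-def}.'' You are also right to flag the regularity of $V$ as the delicate point and to note that, for the quadratic-in-$\nu$ Lagrangian and affine-in-$\nu$ dynamics used in the body of the paper, convexity of the pre-Hamiltonian in the control lets one bypass the PDE via the first-variation/adjoint argument with transversality $p^{\star}(T)=p_{T}$. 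That fallback is in fact the cleaner route here, since the paper never assumes the characteristic flow is globally invertible.
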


% studiare il rapporto tra la condizione terminale su p e il valore J_T.

\section{Hamilton equations and Lagrangian multipliers} 
A possible way to attack optimization 
under  constraints is to use the Lagrangian 
approach and find the stationary points of 
\begin{equation}
	J_{L} = J_{T} + \int_{0}^{T} dt \bigg(
		L\big(x(t),w(t),t\big) + 
		\lambda(t) \cdot \big(f(x(t),w(t),t))-\dot{x}(t)\big)
		\bigg).
\end{equation}
We introduce the Hamiltonian on the optimal trajectory by setting
\[
	{\cal H}\big(x(t),\lambda(t),w(t),t\big):=L\big(x(t),w(t),t\big) + \lambda(t) \cdot f(x(t),w(t),t),
\]
in such a way to re-write $J_{L}$ as 
\begin{equation}
J_{L}(x,\lambda) = J_{T} + \int_{0}^{T} dt \bigg(
		\underbrace{{\cal H}(x(t),\lambda(t),w(t),t) - \lambda(t) \cdot \dot{x}(t)}_{{\cal L}^{x}}
		\bigg).
\label{J-H-exp-x}
\end{equation}
Now, in order to determine a stationary solution of $J_{L}$,
if we use by part integration, we can promptly see that 
we can replace $\lambda(t) \cdot \dot{x}(t)$ with $- \dot{\lambda}(t) \cdot x(t)$.
We have 
\begin{align*}
	\int_{0}^{T}dt \ \lambda(t) \cdot \dot{x}(t) = \bigg[\lambda(t) \cdot x(t)\bigg]_{0}^{T}
	- \int_{0}^{T} dt \ \dot{\lambda}(t) \cdot x(t)
\end{align*}
Hence we reformulate the problem of determining the stationary solution of 
\begin{equation}
	J_{L}(x,\lambda) = J_{T} - \bigg[x(t) \cdot \lambda(t)\bigg]_{0}^{T}+\int_{0}^{T} dt \big(
		\underbrace{{\cal H}\big(x(t),\lambda(t),w(t),t\big) + x(t) \cdot \dot{\lambda}(t)}_{{\cal L}^{\lambda}}
		\big).
\label{J-H-exp-lambda}
\end{equation}
When using the Euler-Lagrange equations on functional $J_{L}$ it is convenient to 
use both its representations given by eq.~(\ref{J-H-exp-x})  and eq.~(\ref{J-H-exp-lambda}).
We get
\begin{align*}
	&0=\frac{d}{dt} {\cal L}_{\dot{x}}^{x} - {\cal L}_{x}^{x} 
	\rightarrow \ \ \dot{\lambda}(t) + {\cal H}_{x}(x(t),\lambda(t),w(t),t) = 0\\
	&0=\frac{d}{dt} {\cal L}_{\dot{\lambda}}^{\lambda} - {\cal L}_{\lambda}^{\lambda} 
	\rightarrow \ \ \dot{x}(t) - {\cal H}_{\lambda}(x(t),\lambda(t),w(t),t) = 0\\
	&0=\frac{d}{dt} {\cal L}_{\dot{w}}^{\lambda} - {\cal L}_{w}^{\lambda} 
	\rightarrow \ \ {\cal H}_{x}(x(t),\lambda(t),w(t),t) = 0.
\end{align*}
Clearly, the last equation can also be found by using ${\cal L}_{x}^{x}$.
The third condition leads to marginalize $w$. In particular, when the stationary point
corresponds with a minimum we have 
\begin{equation}
	H(x,\lambda,t) = \min_{w} {\cal H}(x,\lambda,w,t).
\end{equation}
Finally, this leads to the Hamiltonian equations
\begin{equation}
\begin{cases}
	\dot{\lambda}(t) =- H_{x}(x(t),\lambda(t),t) \\
	\dot{x}(t) = - {\cal H}_{\lambda}(x(t),\lambda(t),t). 
\end{cases}
\end{equation}

Basically, the Lagrangian multiplier $\lambda$ is formally equivalent to the co-state 
$p$. Finally, we need to see the role of the boundary conditions that arise when
we use the Euler-Lagrange equations. 
When involving ${\cal L}^{x}$  the boundary conditions are 
\begin{align}
\begin{split}
	x(0)=x_{0} \\
	{\cal L}_{x}^{x} = \lambda(T)=\lambda_{T}.
\end{split} 
\label{UsefulBC}
\end{align}
Likewise, for symmetry, when involving ${\cal L}^{\lambda}$ we get
\begin{align}
\begin{split}
	\lambda(0) = \lambda_{0}\\
	x(T) = x_{T}.
\end{split}
\end{align}
In most problems we make use of conditions~(\ref{UsefulBC}).

\section{Links with Analytic Mechanics}
Let us consider the following causal optimization
\begin{align}
\begin{split}
    \dot{x} &= v \\
    L(x,v) &= \frac{1}{2} m v^2 + V(x)
\end{split}
\label{MechEB}
\end{align}
where $v$ is regarded as the control variable.
Then we can determine the Hamiltonian by defining
${\cal H}(v) = \frac{1}{2} m v^2 + V(x) + p v$. Now, the 
minimum is achieved for $v= - p/m$ and, therefore, we have
$H = V(x) - p^{2}/2m$.
Let us see the outcome of causal optimization $s=-1$. We have
\begin{align}
\begin{split}
    \dot{x} &= H_{p} = - \frac{p}{m}\\
    \dot{p} &= H_{x} = V_{x},
\end{split}
\end{align}
from which we get the Newtonian equations
\[
    \ddot{x} = - \frac{\dot{p}}{m} = - \frac{V_{x}}{m}
    \rightarrow m \ddot{x} + V_{x} = 0.
\]
Now we want to see how the temporal evolution of the Hamiltonian.
If we consider the classic Lagrangian $L(x,v)=1/2 \cdot m v^2 - V(x)$
then $H(x,p) = p^{2}/2m + V(x)$ and we know that 
\[
    \frac{dH}{dt} = [H,H] = 0.
\]
When offering the interpretation of causal optimization we
have 
\begin{align}
    \frac{d}{dt} H(x,p,t) &= 
    H_{x} \dot{x} + H_{p} \dot{p} + H_{t}
    = 2 \dot{p} \cdot \dot{x} + H_{t}
\end{align}
and, therefore, when considering
$H = - \frac{1}{2m} p^{2} + V(x)$, since $H_t = 0$,  we have
$dH/dt = 2 \dot{x} \dot{p}$. Clearly, the Hamiltonian is not 
interpreted as the energy anymore. However, also in this case, 
we promptly end up into the 
classic energy conservation principle
\[
    \frac{d}{dt}H(x(t),p(t))= 2 \dot{x}(t) \dot{p}(t) =  
    -2  \frac{p(t)}{m} \dot{p}(t) 
    =  - \frac{1}{m} \frac{d}{dt} p^{2}(t).
\]
that is 
\[
    \frac{d}{dt}
    \Big(H(x(t),p(t)) 
    + \frac{1}{m}p^{2}(t) \Big) =
    \frac{d}{dt} \Big(
        \frac{1}{2m} p^{2}(t) + V(x(t))
    \Big)
    =0.
\]

\noindent  \emph{\sc Dissipation}\\
Let us consider the following causal optimization
\begin{align}
\begin{split}
    \dot{x} &= v \\
    L(x,v) &= \Big(\frac{1}{2} m v^2 + V(x)\Big) e^{\theta t}
\end{split}
\label{MechEB2}
\end{align}
where $v$ is regarded as the control variable.
Then we can determine the Hamiltonian by defining
${\cal H}(v) = (\frac{1}{2} m v^2 + V(x))e^{\theta t} + p v$. Now, the 
minimum is achieved for $v= - (p/m) e^{-\theta t}$ and, therefore, we have
$H = V(x) e^{\theta t} - e^{-\theta t} p^{2}/2m$.
Let us see the outcome of causal optimization $s=-1$. We have
\begin{align}
\begin{split}
    \dot{x} &= H_{p} = - \frac{p}{m} e^{-\theta t}\\
    \dot{p} &= H_{x} = V_{x} e^{\theta t},
\end{split}
\end{align}
from which we get the Newtonian equations
\[
    \ddot{x} = - \frac{\dot{p}}{m} e^{-\theta t} 
    + \theta \frac{p}{m} e^{-\theta t} = - \frac{V_{x}}{m} - \theta \dot{x}
    \rightarrow m \ddot{x} + \theta \dot{x} + V_{x} = 0.
\]
Now we want to see how the temporal evolution of the Hamiltonian.
We have 
\begin{align*}
    \frac{d}{dt} H&= \Big(
        V e^{\theta t} - \frac{1}{2m}e^{-\theta t} p^{2}
    \Big)\\
    & e^{\theta t} \frac{d}{dt}V + \theta V e^{\theta t}
    +\frac{1}{2m} \theta e^{-\theta t} p^{2}
    - e^{-\theta t} \frac{d}{dt} \Big(\frac{p^{2}}{2m} \Big)\\
    &= e^{\theta t} \Big(
    \frac{d}{dt}V + \theta V 
    +\frac{1}{2m} \theta e^{-2\theta t} p^{2}
    - e^{-2\theta t} \frac{d}{dt} \Big(\frac{p^{2}}{2m} \Big)
    \Big)
\end{align*}

When offering the interpretation of causal optimization we
have 
\begin{align}
    \frac{d}{dt} H(x,p,t) &= 
    H_{x} \dot{x} + H_{p} \dot{p} + H_{t}
    = 2 \dot{p} \cdot \dot{x} + H_{t}
\end{align}
and, therefore, when considering
$H = - (1/2m) e^{-\theta t} p^{2} + e^{\theta t} V(x)$,  we have
\begin{align*}
    dH/dt &= \frac{1}{2m} \theta  e^{-\theta t} p^{2} 
    + \theta e^{\theta t} V + 2 \dot{x} \dot{p}\\
    &=\frac{1}{2m} \theta  e^{-\theta t} p^{2} 
    + \theta e^{\theta t} V - 2 \dot{p} \frac{p}{m} e^{-\theta t}
\end{align*}
Clearly, the Hamiltonian is not 
interpreted as the energy anymore, but when replacing the expression of $H$
we get
\begin{align*}
  &e^{\theta t} \frac{d}{dt}V + \theta V e^{\theta t}
    +\frac{1}{2m} \theta e^{-\theta t} p^{2}
    - e^{-\theta t} \frac{d}{dt} \Big(\frac{p^{2}}{2m} \Big)\\
    &=\frac{1}{2m} \theta  e^{-\theta t} p^{2} 
    + \theta e^{\theta t} V - 2 \dot{p} \frac{p}{m} e^{-\theta t}\\
    & \rightarrow 
    e^{\theta t} \frac{d}{dt}V  
    - e^{-\theta t} \frac{d}{dt} \Big(\frac{p^{2}}{2m} \Big)
    =  - 2 \dot{p} \frac{p}{m} e^{-\theta t}\\
    & \rightarrow 
    e^{\theta t} \frac{d}{dt}V  
    + e^{-\theta t} \frac{d}{dt} \Big(\frac{p^{2}}{2m} \Big)
    =  0\\
    &\rightarrow 
    e^{\theta t} \frac{d}{dt}V  
    + \frac{1}{2} m e^{-\theta t} \frac{d}{dt} \Big(\dot{x}^{2} e^{2 \theta t} \Big)
    =  0\\
    &\rightarrow 
    e^{\theta t} \frac{d}{dt} \Big(
    V  + \frac{1}{2} m \dot{x}^{2} + m \theta \int_{0}^{t} \dot{x}^{2} ds \Big)
    \Big) = 0.
\end{align*}
This leads in fact to the classic statement on energy conservation in Mechanics which includes dissipation. 

\end{document}

%%% Local Variables: 
%%% mode: latex
%%% End: 